\newcommand{\bis}{\mathrel{\mathchoice%
{\raisebox{.3ex}{$\,
  \underline{\makebox[.7em]{$\leftrightarrow$}}\,$}}%
{\raisebox{.3ex}{$\,
  \underline{\makebox[.7em]{$\leftrightarrow$}}\,$}}%
{\raisebox{.2ex}{$\,
  \underline{\makebox[.5em]{\scriptsize$\leftrightarrow$}}\,$}}%
{\raisebox{.2ex}{$\,
  \underline{\makebox[.5em]{\scriptsize$\leftrightarrow$}}\,$}}}}
\newcommand{\Kbis}{\bis_{\Box}}
\newcommand{\kwbis}{\bis_{\circ}}
\newcommand{\equkw}{\ensuremath{\equiv_{\circ}}}
\newcommand{\equk}{\ensuremath{\equiv_{\Box}}}
\newcommand{\M}{\mathcal{M}}
\newcommand{\N}{\mathcal{N}}
\newcommand{\KwTop}{\ensuremath{\circ\top}}
\newcommand{\EquiKw}{\ensuremath{\circ\neg}}
\newcommand{\KwCon}{\ensuremath{\circ\land}}
\newcommand{\R}{\ensuremath{\texttt{R}}}
\newcommand{\TAUT}{\ensuremath{\texttt{TAUT}}}
\newcommand{\SUB}{\ensuremath{\texttt{US}}}
\newcommand{\MP}{\ensuremath{\texttt{MP}}}
\newcommand{\REKw}{\ensuremath{\texttt{RE}\circ}}
\newcommand{\KwTr}{\ensuremath{\circ{\bf 4}}}
\newcommand{\KwB}{\ensuremath{\circ{\bf B}}}
\newcommand{\KwEuc}{\ensuremath{\circ{\bf5}}}
\newcommand{\BP}{\ensuremath{\textbf{P}}}
\newcommand{\ML}{\ensuremath{\mathcal{L}(\Box)}}
\newcommand{\SNCL}{\ensuremath{\mathcal{L}(\blacktriangle)}}
\newcommand{\LEA}{\ensuremath{\mathcal{L}(\circ)}}
\newcommand{\SLCL}{\ensuremath{\mathbf{K}^\circ}}
\newcommand{\SLEA}{\ensuremath{\mathbf{K}^\circ}}
\newcommand{\SLCLTr}{\ensuremath{\mathbf{K4}^\circ}}
\newcommand{\SLCLB}{\ensuremath{\mathbf{KB}^\circ}}
\newcommand{\SLCLBEuc}{\ensuremath{\mathbf{KB5}^\circ}}
\newcommand{\lr}[1]{\langle #1 \rangle}
\newcommand{\lra}{\leftrightarrow}
\newcommand{\Lra}{\Leftrightarrow}
\newcommand{\jieproof}[1]{{\noindent {\bf Proof of Proposition~\ref{#1}:} \\}}
\renewcommand{\phi}{\varphi}
\newtheorem{theorem}{Theorem}
\newtheorem{lemma}[theorem]{Lemma}
\newtheorem{definition}[theorem]{Definition}
\newtheorem{proposition}[theorem]{Proposition}
\newtheorem{example}[theorem]{Example}
\newcommand{\weg}[1]{}
\title{Logics of Essence and Accident}
\author{Jie Fan}
\date{}
\begin{document}
\maketitle

\weg{\begin{abstract}
Usually, non-contingency is interpreted as `necessary truth or necessary falsity'; otherwise, it is contingent. In this paper, we consider a operator called {\em strong non-contingency}, such that a proposition is called strongly non-contingent, if no matter whether it is true or false, it does it necessarily; otherwise, it is weakly contingent. The strong version of non-contingency is consistent with many linguistic interpretations. we investigate a logic of strong non-contingency operator, and present axiomatizations for this logic over various frame classes. We also compare the relative expressivity of the logic of strong non-contingency and the usual non-contingency logic, and also the logic of essence and accident.
\end{abstract}}

\begin{abstract}
In the literature, {\em essence} is formalized in two different ways, either {\em de dicto}, or {\em de re}. Following~\cite{Marcos:2005}, we adopt its {\em de dicto} formalization: a formula is essential, if once it is true, it is necessarily true; otherwise, it is accidental. In this article, we study the model theory and axiomatization of the logic of essence and accident, i.e. the logic with essence operator (or accident operator) as the only primitive modality. We show that the logic of essence and accident is less expressive than modal logic on non-reflexive models, but the two logics are equally expressive on reflexive models. We prove that some frame properties are undefinable in the logic of essence and accident, while some are. We propose the suitable bisimulation for this logic, based on which we characterize the expressive power of this logic within modal logic and within first-order logic. We axiomatize this logic over various frame classes, among which the symmetric case is missing, and our method is more suitable than those in the literature. We also find a method to compute certain axioms used to axiomatize this logic over special frames in the literature. As a side effect, we answer some open questions raised in \cite{Marcos:2005}.
\end{abstract}

\noindent Keywords: essence, accident, expressivity, frame definability, bisimulation, axiomatization

\section{Introduction}

As far back as Aristotle, like the notions of necessity, possibility and contingency, the notion of {\em essence} can also be related either to propositions ({\em de dicto}) or to objects ({\em de re}). The importance of the notion of essence is argued in \cite{essenceandmodality:1994}. 

\weg{The first person to formalize essence in terms of {\em de re} seems to be Kit Fine.}
The formalization of essence in terms of {\em de re} at least tracks back to Kit Fine. In his writing \cite{Fine:essence}, a logic of essence is proposed, where formulas of the form $\Box_F\phi$ express that $\phi$ is true in virtue of the essence of objects which $F$. A Hilbert-style quantified system E5 is given but without a semantics. \weg{For Fine, we should view metaphysical truth as a special case of essence, but not vise versa. Essence cannot be explained in terms of necessity
\cite[page 241]{Fine:essence}. In \cite{Fine:essence}, a logic of essence is proposed, and a Hilbert-style quantified system E5 is given but without a semantics. }In \cite{Fine:semantics}, a possible worlds semantics is presented, and a variant of E5 is shown to be sound and complete for the semantics. In \cite{Correia:essence}, a propositional version of E5 is established in accompany with an appropriate semantics, and it is shown that the system is sound and complete with respect to the proposed semantics. A new semantics for logics of essence is proposed in \cite{Girodani:2014}. \weg{The practices of Fine \cite{Fine:essence,Fine:semantics} does not formalize the notion of essence itself, or `what is essence'.}

\weg{As argued in \cite[pp.~3-4]{essenceandmodality:1994}, essence cannot be defined in the following way: an object essentially has a property if and only if it is necessary that the object has the property\weg{ (if it exists)}, because even if for example it is necessary that Socrates belongs to singleton Socrates\weg{ if Socrates and the singleton both exist}, we cannot say that Socrates essentially belongs to singleton Socrates because no part of the essence of Socrates belongs to the singleton. But that does not mean that essence cannot be defined as `truth implies necessary truth'. Instead, it seems that Fine agreed on the definition, since he said ``For a statement of essence is a statement of necessity and so it will, like any statement of necessity, be necessarily true if it is true at all.'', c.f.~\cite[p.~5]{essenceandmodality:1994}.}

There are also researchers who formalize essence in terms of {\em de dicto}.
In \cite{Small:2001}, in reconstructing G\"odel's ontological argument, {\em accidental truth}, i.e. {\em accident} is formalized as $\phi\land\neg\Box\phi$, i.e. true but not necessarily true. Accordingly, as the negation of accident, essence is formalized as $\phi\to\Box\phi$. It is said that the discussions of essential and accidental propositions at least tracks back to the XIX Century, see
\cite[p.~53]{Marcos:2005}. A logic of essence and accident is introduced in which essence is treated in the metaphysical usage in \cite{Marcos:2005}, where a complete axiomatization for the logic is shown with respect to the class of all frames. A simple axiomatization for arbitrary frames and its extensions over various frame classes are proposed in \cite{Steinsvold:2008}, but the case for symmetric frames is missing. Even though the completeness proofs thereof are simple, his method has a defect: on one hand, the canonical relation, thus the canonical frame, is automatically provided to be reflexive; on the other hand, the underlying semantics is defined on arbitrary frames, rather than on reflexive frames\weg{, and it is also shown in \cite[Prop.~3.5]{Steinsvold:2008} that $\SLEA$ is sound and complete with respect to the class of all frames}. This means that there is a non-correspondence between syntax and semantics in the logic of essence and accident. Oblivious to the literature on the logic of essence and accident, in \cite{steinsvold:2008b} the author provides a topological semantics for {\em a logic of unknown truths} and shows its completeness over the class of $\mathcal{S}4$ models.

The accident operator has various meanings in different contexts. For instance, in the setting of provability logic, `accident' means `true but unprovable', thus `$\phi$ is accident' means `$\phi$ is a G\"odel sentence' \cite{Kushida:2010}; in the setting of epistemic logic, `accident' means `unknown truths', thus `$\phi$ is accident' means `$\phi$ is true but unknown to the agent' \cite{steinsvold:2008b}.

In this article, we will follow the formalization of essence in~\cite{Marcos:2005}, study the notions of essence and accident from viewpoint of {\em de dicto}. We will discuss the model theory of the logic of essence and accident, propose some axiomatizations, whose completeness are shown with a more suitable method than those in the literature, and give an automatic method to compute certain axioms needed to characterize this logic over special frames.

The paper is organized as follows. Section~\ref{sec.langsem} introduces the language of the logic of essence and accident. Section~\ref{sec.exp} compares the relative expressive power of the logic of essence and accident and modal logic. Section~\ref{sec.framecor} explores the frame definability. We propose the bisimulation notion suitable for the logic of essence and accident in Section~\ref{sec.bis},\weg{\footnote{The idea of this part is announced without proofs in \cite{Fan:2015}.}} based on which we characterize the expressive power of this logic within modal logic and within first-order logic in Section~\ref{sec.char}. Section~\ref{sec.axiomatizations} axiomatizes the logic of essence and accident over various frames. In Section \ref{sec.comparison}, we compare our work with the literature on the logic of essence and accident and the modal logic of G\"odel sentence. We conclude with some future work in Section~\ref{sec.conclusion}.

\section{Language and Semantics} \label{sec.langsem}

First, we introduce the following language with essence operator and necessity operator as modalities, although we will focus on the language of logic of essence and accident.

\begin{definition}[Logical language $\mathcal{L}(\circ,\Box)$] Let $\BP$ be a set of propositional variables, the logical language $\mathcal{L}(\circ,\Box)$ is defined as follows:
$$\phi::=p\mid \neg\phi\mid(\phi\land\phi)\mid\circ\phi\mid\Box\phi$$
where $p\in\BP$.
Without the construct $\circ\phi$, we obtain the {\em language of modal logic $\ML$}; without the construct $\Box\phi$, we obtain the {\em language of essence and accident $\LEA$}. If $\phi\in\mathcal{L}(\circ,\Box)$, we say $\phi$ is an $\mathcal{L}(\circ,\Box)$-formula; if $\phi\in\ML$, we say $\phi$ is an $\ML$-formula; if $\phi\in\LEA$, we say $\phi$ is an $\LEA$-formula.
\end{definition}

Intuitively, $\circ\phi$ is read `it is essential that $\phi$', and $\Box\phi$ is read `it is necessary that $\phi$'. Other operators are defined as usual; in particular, $\bullet\phi$ is defined as $\neg\circ\phi$, read `it is accidental that $\phi$'. Note that $\bullet$ is {\em not} the dual of $\circ$.

\begin{definition}[Model] A {\em frame} is a tuple $\mathcal{F}=\lr{S,R}$, where $S$ is a nonempty set of possible worlds, $R$ is a binary relation over $S$. A {\em model} is a tuple $\M=\lr{\mathcal{F},V}$, where $V$ is a valuation function from $\BP$ to $\mathcal{P}(S)$. A {\em pointed model} $(\M,s)$ is a model $\M$ with a designated world $s$ in $\M$. We always omit the parentheses around $(\M,s)$ whenever convenient. We sometimes write $s\in\M$ for $s\in S$. We write $R(s)=\{t\in S\mid sRt\}$. We write $F_{\mathcal{T}}$ for the class of reflexive frames.
\end{definition}

\weg{\begin{definition}[Model] A {\em model} is a triple $\M=\lr{S,R,V}$, where $S$ is a nonempty set of possible worlds, $R$ is a binary relation over $S$, $V$ is a valuation function from $\BP$ to $\mathcal{P}(S)$. A {\em pointed model} $(\M,s)$ is a model $\M$ with a designated world $s$ in $\M$. We always omit the parentheses around $(\M,s)$ whenever convenient. We sometimes write $s\in\M$ for $s\in S$. We write $R(s)=\{t\in S\mid sRt\}$. A {\em frame} is a model without the valuation. We write $F_{\mathcal{T}}$ for the class of reflexive frames.
\end{definition}}

\begin{definition}[Semantics]\label{def.semantics} Given a pointed model $(\M,s)$ and an $\mathcal{L}(\circ,\Box)$-formula $\phi$, the satisfaction relation $\vDash$ is defined as follows:\footnote{We here use the notation $\&,\forall,\Rightarrow,\Leftrightarrow$, respectively, to stand for the metalanguage `and', `for all', `if $\cdots$ then $\cdots$', `if and only if'.}
\[
\begin{array}{|lcl|}
\hline
\M,s\vDash p&\Lra& s\in V(p)\\
\M,s\vDash\neg\phi&\Lra&\M,s\nvDash\phi\\
\M,s\vDash\phi\land\psi&\Lra&\M,s\vDash\phi~\&~\M,s\vDash\psi\\
\M,s\vDash\circ\phi&\Lra&(\mathcal{M},s\vDash\phi\Rightarrow\forall t(sRt\Rightarrow\M, t\vDash\phi))\\
\M,s\vDash\Box\phi&\Lra&\forall t(sRt\Rightarrow\M, t\vDash\phi)\\
\hline
\end{array}
\]
If $\M,s\vDash\phi$, we say $\phi$ is {\em true}, or {\em satisfied} at $(\M,s)$, sometimes we write $s\vDash\phi$; if for all $s\in \M$ we have $\M,s\vDash\phi$, we say $\phi$ is {\em valid on $\M$} and write $\M\vDash\phi$; if for all $\M$ based on $\mathcal{F}$ we have $\M\vDash\phi$, we say $\phi$ is {\em valid on $\mathcal{F}$} and write $\mathcal{F}\vDash\phi$; if for all $\mathcal{F}$ in a class of frames $F$ we have $\mathcal{F}\vDash\phi$, we say $\phi$ is {\em valid on $F$} and write $F\vDash\phi$; if the class of frames $F$ in question is arbitrary, then we say $\phi$ is {\em valid} and write $\vDash\phi$. We say $\phi$ is {\em satisfiable}, if $\nvDash\neg\phi$. The case for a set of formula is similarly defined. Given any two pointed models $(\M,s)$ and $(\N,t)$, if they satisfy the same $\LEA$-formulas, we say they are {\em $\circ$-equivalent}, notation: $(\M,s)\equkw(\N,t)$; if they satisfy the same $\ML$-formulas, we say they are {\em $\Box$-equivalent}, notation: $(\M,s)\equk(\N,t)$.
\end{definition}
\weg{\begin{proposition}
$\vDash \phi\to(\Box \phi\lra\circ \phi)$, $\vDash \neg \phi\to(\Diamond\phi\lra\neg\circ\neg \phi)$.
\end{proposition}}

Under the semantics, it is not hard to show that
\begin{proposition}\label{prop.equivalent} Let $\phi\in\mathcal{L}(\circ,\Box)$. Then
$F_{\mathcal{T}}\vDash \Box \phi\lra \phi\land \circ \phi$ and $F_{\mathcal{T}}\vDash\Diamond\phi\lra(\phi\vee\neg\circ\neg\phi)$.
\end{proposition}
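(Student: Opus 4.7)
The plan is to unfold both equivalences directly from the semantic clauses in Definition~\ref{def.semantics}, using reflexivity ($sRs$) exactly where the dictum\slash ditto transition from $\Box$ to $\circ$ (and $\Diamond$ to its disjunctive rewrite) requires a witness at the evaluation world itself. I would fix an arbitrary reflexive frame $\mathcal{F}=\langle S,R\rangle$, an arbitrary model $\M$ based on $\mathcal{F}$, and an arbitrary world $s\in\M$, and check the two biconditionals pointwise.

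For the first equivalence, the left-to-right direction assumes $\M,s\vDash\Box\phi$, i.e.\ $\forall t(sRt\Rightarrow\M,t\vDash\phi)$. Reflexivity gives $sRs$, hence $\M,s\vDash\phi$; and the same universal statement makes the implication $\M,s\vDash\phi\Rightarrow\forall t(sRt\Rightarrow\M,t\vDash\phi)$ trivially true, so $\M,s\vDash\circ\phi$. The right-to-left direction does not even use reflexivity: assuming $\M,s\vDash\phi$ and $\M,s\vDash\circ\phi$, the antecedent of the semantic clause for $\circ\phi$ is discharged, and we read off $\M,s\vDash\Box\phi$ immediately.

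For the second equivalence, the left-to-right direction is done by contraposition on $\phi$: assume $\M,s\vDash\Diamond\phi$, pick a successor $t$ with $\M,t\vDash\phi$, and if $\M,s\nvDash\phi$ then $\M,s\vDash\neg\phi$ together with the existence of a successor falsifying $\neg\phi$ refutes $\circ\neg\phi$, giving $\M,s\vDash\neg\circ\neg\phi$. For the right-to-left direction, a disjunction split: if $\M,s\vDash\phi$ then reflexivity supplies $s$ itself as a $\Diamond$-witness, so $\M,s\vDash\Diamond\phi$; if $\M,s\vDash\neg\circ\neg\phi$ then unfolding $\neg\circ\neg\phi$ yields both $\M,s\vDash\neg\phi$ and a successor $t$ with $\M,t\nvDash\neg\phi$, i.e.\ $\M,t\vDash\phi$, which again delivers $\M,s\vDash\Diamond\phi$.

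There is no real obstacle here; the only place where a reader might stumble is noticing that the conditional shape of the clause for $\circ\phi$ makes $\Box\phi\to\circ\phi$ hold \emph{on every frame} (so that the nontrivial content of the first equivalence is the conjunct $\phi$, which is exactly what reflexivity supplies), and symmetrically that in the second equivalence the disjunct $\phi$ is what reflexivity is needed to exploit as a self-witness for $\Diamond$. I would state these two observations explicitly so that the role of $F_{\mathcal{T}}$ is clear, and then the verification reduces to the routine unfolding described above.
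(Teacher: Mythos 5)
Your proof is correct, and since the paper leaves this proposition unproved (``it is not hard to show''), your routine unfolding of the semantic clauses---using reflexivity exactly to supply $s$ as its own witness for the conjunct $\phi$ in the first equivalence and for the $\Diamond$-witness in the second---is precisely the intended argument. No issues.
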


Proposition~\ref{prop.equivalent} is very important. It guides us to find the desired axioms for characterizing $\LEA$ over certain frame classes, as we will see in Section \ref{sec.axiomatization}.

\section{Expressivity} \label{sec.exp}

In this section, we compare the relative expressivity of $\LEA$ and $\ML$. A related technical definition is introduced as follows.

\begin{definition}[Expressivity]
Let logical languages $\mathcal{L}_1$ and $\mathcal{L}_2$ be interpreted on the same class $M$ of models,
\begin{itemize}
\item $\mathcal{L}_2$ is {\em at least as expressive as} $\mathcal{L}_1$, notation: $\mathcal{L}_1\preceq \mathcal{L}_1$, if for any $\phi\in \mathcal{L}_1$, there exists $\psi\in\mathcal{L}_2$ such that for all $(\M,s)\in M$, we have $\M,s\vDash\phi\leftrightarrow\psi$.
\item $\mathcal{L}_1$ and $\mathcal{L}_2$ are {\em equally expressive}, notation: $\mathcal{L}_1\equiv\mathcal{L}_2$, if $\mathcal{L}_1\preceq \mathcal{L}_2$ and $\mathcal{L}_2\preceq\mathcal{L}_1$.
\item $\mathcal{L}_1$ is {\em less expressive than} $\mathcal{L}_2$, or $\mathcal{L}_2$ is {\em more expressive than} $\mathcal{L}_1$, notation: $\mathcal{L}_1\prec\mathcal{L}_2$, if $\mathcal{L}_1\preceq \mathcal{L}_2$ and $\mathcal{L}_2\not\preceq\mathcal{L}_1$.
\end{itemize}
\end{definition}

\begin{proposition}\label{prop.lessexp}
$\LEA$ is less expressive than \ML\ on the class of $\mathcal{K}$-models, $\mathcal{B}$-models, $4$-models, $5$-models.
\end{proposition}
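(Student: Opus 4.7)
The plan is to establish the two inequalities $\LEA\preceq\ML$ and $\ML\not\preceq\LEA$ separately, uniformly across all four frame classes.

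For the first inequality I would define a translation $t\colon\LEA\to\ML$ by $t(p)=p$, letting $t$ commute with the Boolean connectives, and setting $t(\circ\phi)=t(\phi)\to\Box t(\phi)$. A routine induction on $\phi\in\LEA$ then shows $\M,s\vDash\phi$ iff $\M,s\vDash t(\phi)$ on every pointed model, and in particular on every $\mathcal{K}$-, $\mathcal{B}$-, $4$-, or $5$-model. The only non-trivial clause is the one for $\circ\phi$, which is literally the truth condition given in Definition~\ref{def.semantics}.

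For the converse direction I would exhibit a single pair of pointed models whose underlying frames lie in all four classes simultaneously, that agree on every $\LEA$-formula while disagreeing on some $\ML$-formula. Take $\M=\langle\{s\},\{(s,s)\},V\rangle$ and $\N=\langle\{u\},\emptyset,V'\rangle$, both with empty valuation. Each frame is reflexive, symmetric, transitive, and Euclidean (either outright or vacuously), so both models belong to every one of the four classes at issue. The $\ML$-formula $\Box\bot$ distinguishes them: it is false at $s$ because $sRs$ and $s\nvDash\bot$, and vacuously true at $u$.

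To see that $(\M,s)\equkw(\N,u)$, I would prove the following key lemma by induction on $\phi\in\LEA$: in any model where every propositional variable is false at every world, every $\LEA$-formula has a constant truth value across all worlds, determined only by the syntactic shape of $\phi$ (atoms counted false, $\circ$-subformulas counted true). The atomic and Boolean cases are immediate from the inductive hypothesis. For the $\circ\phi$ case, if $\phi$ is uniformly false then $\circ\phi$ is vacuously true at every world, and if $\phi$ is uniformly true then $\phi$ holds at every successor of every world, so $\circ\phi$ is true everywhere as well. Since both $\M$ and $\N$ satisfy the hypothesis of the lemma, every $\LEA$-formula receives the same truth value at $s$ as at $u$, yielding $\circ$-equivalence.

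The main obstacle is choosing models small enough that the induction goes through cleanly yet lying in all four frame classes at once, so that a single construction handles all four claims. The reflexive/isolated singleton pair achieves both, bypassing the need for a case-by-case argument; the delicate point is only that the $\circ$ clause of Definition~\ref{def.semantics} collapses to uniform truth under the \emph{all atoms false} hypothesis, which is where the implicative character of $\circ$ (as opposed to the universal character of $\Box$) is essential.
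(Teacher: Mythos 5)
Your proposal is correct and follows essentially the same route as the paper: the same translation $t$ with $t(\circ\phi)=t(\phi)\to\Box t(\phi)$ for $\LEA\preceq\ML$, and the same counterexample pair (a reflexive singleton versus an isolated point, both lying in all four frame classes and distinguished by $\Box\bot$). The only difference is cosmetic: the paper keeps $p$ true at both worlds and observes directly that $\circ\phi$ holds at both designated points for every $\phi$ (so the induction hypothesis is not even needed at the $\circ$ step), whereas your empty-valuation uniform-truth-value lemma reaches the same conclusion with slightly more machinery.
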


\begin{proof}
Define a translation $t$ from $\LEA$ to \ML:
\[
\begin{array}{lll}
t(\top)&=&\top\\
t(p)&=&p\\
t(\neg\phi)&=&\neg t(\phi)\\
t(\phi\land\psi)&=&t(\phi)\land t(\psi)\\
t(\circ\phi)&=&t(\phi)\to\Box t(\phi)\\
\end{array}
\]

It is clear that $t$ is a truth-preserving translation. Therefore \ML\ is at least as expressive as \LEA.

Now consider the following pointed models $(\mathcal{M},s)$ and $(\mathcal{N},t)$, which can be distinguished by an \ML-formula $\Box\bot$, but cannot be distinguished by any $\LEA$-formulas:

\medskip

$$
\xymatrix{\mathcal{M}:\ \ \ s:p\ar@(ur,ul) & & & \mathcal{N}:\ \ \ t:p}
$$

It is easy to check $\mathcal{M}$ and $\mathcal{N}$ are both symmetric, transitive, and Euclidean. By induction we prove that for all $\phi\in\LEA$, $\mathcal{M},s\vDash\phi$ iff $\mathcal{N},t\vDash\phi$. The base cases and boolean cases are straightforward. For the case of $\circ\phi$, it is not hard to show that $\mathcal{M},s\vDash\circ\phi$ and $\mathcal{N},t\vDash\circ\phi$ (note that here we do not need to use the induction hypothesis), thus $\mathcal{M},s\vDash\circ\phi$ iff $\mathcal{N},t\vDash\circ\phi$, as desired.
\end{proof}

As for the case of $\mathcal{D}$-models, the result about the relative expressivity of $\LEA$ and $\ML$ is same as previous, but the proof is much more sophisticated, which needs {\em simultaneous induction}.

\begin{proposition}\label{prop.lessexp-d}
\LEA\ is less expressive than $\ML$ on the class of $\mathcal{D}$-models.
\end{proposition}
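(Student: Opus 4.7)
The plan is to exhibit two pointed $\mathcal{D}$-models agreeing on every $\LEA$-formula yet separated by some $\ML$-formula. The easy direction $\LEA\preceq\ML$ reuses the translation $t$ from the proof of Proposition~\ref{prop.lessexp}, which is truth-preserving on all Kripke models and hence on the subclass of serial ones.

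Concretely, I would take $\M$ with domain $\{s_0,s_1\}$, $V_\M(p)=\{s_0\}$ and $R_\M=\{(s_0,s_0),(s_0,s_1),(s_1,s_1)\}$; and $\N$ with domain $\{t_0,t_1\}$, $V_\N(p)=\{t_0\}$ and $R_\N=\{(t_0,t_1),(t_1,t_1)\}$. Both models are serial since every world has at least one $R$-successor. Since $t_0$'s only $R$-successor is $t_1\vDash\neg p$, whereas $s_0 R s_0$ and $s_0\vDash p$, the $\ML$-formula $\Box\neg p$ holds at $(\N,t_0)$ but fails at $(\M,s_0)$, so $\ML$ separates the two pointed models.

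For $\LEA$-equivalence I would argue by \emph{simultaneous} induction on $\phi\in\LEA$ that both $\M,s_0\vDash\phi\Leftrightarrow\N,t_0\vDash\phi$ and $\M,s_1\vDash\phi\Leftrightarrow\N,t_1\vDash\phi$ hold; in particular this yields $(\M,s_0)\equkw(\N,t_0)$. The atomic and Boolean cases are routine, as the paired worlds agree on $p$. For $\circ\psi$, the pair $(s_1,t_1)$ is immediate: each world is its own unique $R$-successor, so $\circ\psi$ holds at both regardless of $\psi$. For $(s_0,t_0)$, the clause $\M,s_0\vDash\circ\psi$ unfolds to ``$s_0\vDash\psi$ implies ($s_0\vDash\psi$ and $s_1\vDash\psi$)'', which collapses to ``$s_0\vDash\psi$ implies $s_1\vDash\psi$''; on the $\N$-side the clause is ``$t_0\vDash\psi$ implies $t_1\vDash\psi$''; the two induction hypotheses identify antecedents with antecedents and consequents with consequents, so the implications agree.

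The main delicacy is choosing the models so that this simultaneous induction closes. The dead-end witness of Proposition~\ref{prop.lessexp} is ruled out by seriality, so the $\ML$-gap must be carried by the asymmetric reflexive loop $s_0 R s_0$ (absent on the $\N$-side). The common reflexive sink carrying the opposite $p$-value is then exactly what forces the two $\circ$-clauses at $(s_0,t_0)$ to reduce to the same implication under the induction hypothesis; once the models are pinned down this way, the rest is bookkeeping.
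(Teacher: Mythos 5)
Your proof is correct and follows essentially the same strategy as the paper's: reuse the translation $t$ for $\LEA\preceq\ML$, then exhibit a pair of serial pointed models separated by an $\ML$-formula and shown $\LEA$-equivalent by a simultaneous induction over the two world-pairs (the paper uses a slightly different two-point serial pair, distinguished by $\Box\Box p$, but the collapse of the $\circ$-clause to a single implication matched across the two models is the same key step). Nothing further is needed.
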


\begin{proof}
By the translation $t$ in the proof of Proposition \ref{prop.lessexp}, we have $\LEA\preceq\ML$.

Consider the following pointed models $(\M,s)$ and $(\N,s')$, which can be distinguished by an $\ML$-formula $\Box\Box p$, but cannot be distinguished by any $\LEA$-formulas:
$$
\xymatrix{\mathcal{M}:\ \ \ s:p\ar[rr]&&t:\neg p\ar@(ur,ul)\ar[ll] & & \mathcal{N}:\ \ \ s':p\ar[rr]&& t':\neg p\ar[ll]}
$$

It is not hard to see that $\M$ and $\N$ are both serial. By induction on $\phi\in\LEA$, we show simultaneously that for all $\phi$, (i) $\M,s\vDash\phi$ iff $\N,s'\vDash\phi$, and (ii) $\M,t\vDash\phi$ iff $\N,t'\vDash\phi$. The nontrivial case is $\circ\phi$.

For (i), we have the following equivalences:
\[
\begin{array}{lll}
\M,s\vDash\circ\phi&\stackrel{\text{semantics}}\Longleftrightarrow& s\vDash\phi\text{ implies }t\vDash\phi\\
&\stackrel{\text{IH for (i)}}\Longleftrightarrow& s'\vDash\phi\text{ implies }t\vDash\phi\\
&\stackrel{\text{(ii)}}\Longleftrightarrow&s'\vDash\phi\text{ implies }t'\vDash\phi\\
&\stackrel{\text{semantics}}\Longleftrightarrow& \N,s'\vDash\circ\phi\\
\\
\M,t\vDash\circ\phi&\stackrel{\text{semantics}}\Longleftrightarrow& t\vDash\phi\text{ implies }(s\vDash\phi\text{ and }t\vDash\phi)\\
&\Longleftrightarrow& t\vDash\phi\text{ implies }s\vDash\phi\\
&\stackrel{\text{IH for (ii)}}\Longleftrightarrow& t'\vDash\phi\text{ implies }s\vDash\phi\\
&\stackrel{\text{(i)}}\Longleftrightarrow&t'\vDash\phi\text{ implies }s'\vDash\phi\\
&\stackrel{\text{semantics}}\Longleftrightarrow& \N,t'\vDash\circ\phi\\
\end{array}\]

Therefore, $(\M,s)$ and $(\N,s')$ cannot be distinguished by any $\LEA$-formulas.
\end{proof}

However, on the $\mathcal{T}$-models, the situation is different.
\begin{proposition}\label{prop.equallyexp}
\LEA\ and \ML\ are equally expressive on the class of $\mathcal{T}$-models.
\end{proposition}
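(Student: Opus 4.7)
The plan is to establish the two inequalities $\LEA \preceq \ML$ and $\ML \preceq \LEA$ separately, both restricted to the class of reflexive models.

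For the direction $\LEA \preceq \ML$, I would simply reuse the truth-preserving translation $t$ defined in the proof of Proposition~\ref{prop.lessexp}. That translation, with $t(\circ\phi) = t(\phi)\to\Box t(\phi)$, works on any class of models and in particular on reflexive ones, so this half requires no new work.

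For the converse $\ML \preceq \LEA$, the key tool is Proposition~\ref{prop.equivalent}, which tells us that on reflexive frames $\Box\phi$ is equivalent to $\phi\land\circ\phi$. Guided by this, I would define a translation $r\colon \ML \to \LEA$ recursively by $r(p)=p$, commuting with the Boolean connectives, and setting
\[
r(\Box\phi) = r(\phi) \land \circ r(\phi).
\]
The claim is that for every $\ML$-formula $\phi$ and every pointed reflexive model $(\M,s)$, we have $\M,s \vDash \phi$ iff $\M,s \vDash r(\phi)$. The proof is by induction on $\phi$; the atomic and Boolean cases are immediate, and the only nontrivial case is $\Box\phi$, which is handled by combining the induction hypothesis with the validity $\Box\phi \lra \phi \land \circ\phi$ given by Proposition~\ref{prop.equivalent}.

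There is no real obstacle here: the equivalence in Proposition~\ref{prop.equivalent} has already done the conceptual work, and the translation $r$ just internalizes it into a syntactic recursion. The only point to be careful about is that the translation $r$ must be applied under the inductive hypothesis (i.e.\ to $r(\phi)$, not to $\phi$ itself) in the clause for $\Box$, so that the induction goes through cleanly.
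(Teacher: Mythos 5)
Your proposal is correct and follows exactly the route the paper takes: the translation $t$ from Proposition~\ref{prop.lessexp} gives $\LEA\preceq\ML$, and your translation $r$ with $r(\Box\phi)=r(\phi)\land\circ r(\phi)$ is precisely the paper's $t'$, justified in the same way by Proposition~\ref{prop.equivalent}. No differences worth noting.
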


\begin{proof}
By the translation $t$ in the proof of Proposition~\ref{prop.lessexp}, we have $\LEA\preceq\ML$. Now define another translation $t'$ from \ML\ to $\LEA$, where the base cases and Boolean cases are similar to the corresponding cases for $t$, and $t'(\Box\phi)=\circ t'(\phi) \land t'(\phi)$. It is straightforward to show that $t'$ is a truth-preserving translation, due to Prop.~\ref{prop.equivalent}. Thus $\ML\preceq\LEA$, and therefore $\LEA\equiv\ML$.
\end{proof}

\section{Frame correspondence} \label{sec.framecor}

In \cite[Corollary~4.3]{Marcos:2005}, the five basic frame properties, except for symmetry, are shown by using the method of mirror reduction, to be undefinable in $\LEA$. As an open question (Open 4.4 there), the author would like to know which frame properties are definable in $\LEA$. This question is answered partly in \cite{Steinsvold:2008}, where the following results are established.
\weg{In \cite{Steinsvold:2008},  which are definable in $\LEA$, thus partly answered the open question raised in \cite[Open 4.4]{Marcos:2005}.}
\begin{proposition}\label{prop.already}\cite[Prop.~2.2, Prop.~2.5, Prop.~3.8]{Steinsvold:2008}
\begin{enumerate}
\item \cite[Prop.~2.2 without proof]{Steinsvold:2008}\label{prop.one}
The property of weak transitivity, viz. $\forall x\forall y\forall z(xRy\land yRz\land x\neq z\to xRz)$, is defined by $\circ p\land p\to\circ(\circ p\land p)$. Thus the property of weak transitivity is definable in $\LEA$.
\item \cite[Prop.~2.5]{Steinsvold:2008}
The property of weak connectedness, viz. $\forall x\forall y\forall z(xRy\land xRz\to yRz\lor y=z\lor zRy)$, is defined by $\circ(\circ p\land p\to q)\vee\circ(\circ q\land q\to p)$. Thus the property of weak connectedness is definable in $\LEA$.
\item \cite[Prop.~3.8 without proof]{Steinsvold:2008}\label{prop.two}
The property of weak-weak-Euclidicity, viz. $\forall x\forall y\forall z(xRy\land xRz\land x\neq z\land y\neq z\to yRz)$, is defined by $\neg\circ\neg p\to\circ(\circ\neg p\to p)$. Thus the property of weak-weak-Euclidicity is definable in $\LEA$.
\end{enumerate}
\end{proposition}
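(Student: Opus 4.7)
The plan is to handle the three frame definability claims independently; for each I would prove soundness (the property implies the formula is valid on the frame) and completeness (a failure of the property yields a model on the frame refuting the formula at a suitable world). Throughout I would use the fact that $\circ\phi$ fails at $w$ exactly when $w\vDash\phi$ and some $R$-successor of $w$ refutes $\phi$.

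For (1), soundness: from $s\vDash\circ p\land p$ in a weakly transitive model I show every $s$-successor $y$ satisfies $\circ p\land p$; $y\vDash p$ comes from $s\vDash\circ p$, and for $\circ p$ at $y$ I pick $z$ with $yRz$ and split on $z=s$ versus $z\neq s$, the latter case reducing to $sRz$ by weak transitivity. Completeness: given $xRy$, $yRz$, $x\neq z$, $\neg xRz$, I set $V(p)=\{x\}\cup R(x)$, so $x\vDash\circ p\land p$ while $y\in R(x)$ refutes $\circ p$ via $z\notin V(p)$.

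For (2), soundness is a three-case argument: assuming both disjuncts fail at $s$ I extract successors $y,z$ with $y\vDash\circ p\land p$, $y\nvDash q$ and $z\vDash\circ q\land q$, $z\nvDash p$; weak connectedness then gives $y=z$, $yRz$, or $zRy$, each directly contradicting one of these four valuation facts. Completeness: from $xRy$, $xRz$, $y\neq z$, $\neg yRz$, $\neg zRy$ I choose $V(p)=\{y\}\cup R(y)$ and $V(q)=\{z\}\cup R(z)$; the asymmetry $z\notin V(p)$ and $y\notin V(q)$ guarantees that $\circ p\land p$ and $\circ q\land q$ both fail at $x$ (the successors $z$ and $y$ break the $\circ$ clauses), so the outer implications are vacuously true at $x$, while $y$ and $z$ witness failure at a successor of $x$ for the respective disjuncts.

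For (3), soundness assumes $s\vDash\neg\circ\neg p$, yielding $s\vDash\neg p$ and an $s$-successor $y\vDash p$, so $s\vDash\circ\neg p\to p$ vacuously. For an $s$-successor $z$ with $z\vDash\neg p$, noting $s\neq y$ and $z\neq y$, and assuming $s\neq z$ (the case $s=z$ is already handled), weak-weak-Euclidicity applied to $(s,z,y)$ in the swapped orientation yields $zRy$, which kills $\circ\neg p$ at $z$ so that $z\vDash\circ\neg p\to p$ vacuously. Completeness: given $xRy$, $xRz$, $x\neq z$, $y\neq z$, $\neg yRz$, I take $V(p)=\{z\}$, so $x\vDash\neg\circ\neg p$ via $z$, and the $x$-successor $y$ satisfies $\circ\neg p\land\neg p$ because $\neg yRz$ blocks its only possible $p$-successor. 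I expect (3) to be the delicate part in both directions: the natural guess $V(p)=\{y\}$ in completeness would need $\neg zRy$, which the failure of weak-weak-Euclidicity does not supply, forcing the flip to $V(p)=\{z\}$ that exploits the given $\neg yRz$; soundness mirrors this asymmetry, requiring weak-weak-Euclidicity to be applied with $y$ and $z$ swapped to produce $zRy$ rather than the unhelpful $yRz$.
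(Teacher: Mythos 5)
Your proof is correct, and it uses essentially the same technique the paper employs in Appendix~\ref{appendix.proofs} for the closely related Proposition~\ref{prop.new}: the same valuations ($V(p)=\{x\}\cup R(x)$ for weak transitivity, $V(p)=\{z\}$ for weak-weak-Euclidicity) and the same case splits on identity of worlds before invoking the frame property. Note that the paper itself offers no proof of this particular proposition (it is cited from Steinsvold), so your argument, including the routine three-case analysis for weak connectedness, fills that in consistently with the paper's own style.
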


In this section, we first answer affirmatively the case for symmetry, thus completing the spectrum of cases for the five basic frame properties. Apart from this, we will also give other results.

\begin{proposition}\label{prop.definable-b}
The frame property of symmetry is definable in $\LEA$.
\end{proposition}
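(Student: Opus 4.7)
The plan is to prove that symmetry is defined by the $\LEA$-formula $p\to \circ(\circ\neg p\to p)$, following the same ``characteristic-valuation'' strategy used in Proposition~\ref{prop.already}. The semantic motivation is that, unfolding $\circ$, the formula $\circ\neg p$ is equivalent to $p\lor\Box\neg p$, so $\circ\neg p\to p$ simplifies to $\Box\neg p\to p$, i.e.\ to $p\lor \Diamond p$. Consequently, under the valuation $V(p)=\{s\}$, demanding $\circ(\circ\neg p\to p)$ at $s$ amounts to demanding $\Box(p\lor\Diamond p)$ at $s$, which says exactly $\forall y(sRy\Rightarrow y=s \text{ or } yRs)$; and this coincides with symmetry at $s$.

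For the soundness direction, I would assume the frame $\mathcal{F}$ is symmetric and fix a model $\M$ on $\mathcal{F}$ together with a world $s$ with $\M,s\vDash p$. Note that $\M,s\vDash \circ\neg p$ because $\circ\neg p$ is semantically $p\lor\Box\neg p$ and $p$ holds at $s$; hence $\M,s\vDash \circ\neg p\to p$ as well. It therefore suffices to verify $\M,s\vDash \Box(\circ\neg p\to p)$. For any $y\in R(s)$, if $\M,y\vDash p$ the implication is immediate; otherwise, by symmetry $yRs$, and since $\M,s\vDash p$ this gives $\M,y\nvDash \Box\neg p$, so together with $\M,y\nvDash p$ we get $\M,y\nvDash \circ\neg p$, making $\circ\neg p\to p$ vacuously true at $y$.

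For the converse, suppose $\mathcal{F}$ is not symmetric, so there exist $s,y$ with $sRy$ but not $yRs$; observe $y\neq s$ (otherwise $sRs$ would contradict $\neg yRs$). Set $V(p):=\{s\}$. Then $\M,s\vDash p$ and $\M,y\nvDash p$, while $\neg yRs$ means $y$ has no successor in $V(p)$, so $\M,y\vDash \Box\neg p$, hence $\M,y\vDash \circ\neg p$. Therefore $\M,y\nvDash \circ\neg p\to p$, and since $y\in R(s)$ we conclude $\M,s\nvDash \Box(\circ\neg p\to p)$. Because $\M,s\vDash \circ\neg p\to p$ (by the argument from the soundness direction), this gives $\M,s\nvDash \circ(\circ\neg p\to p)$, refuting the whole formula at $s$.

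The main conceptual hurdle is locating the right ``inner'' formula: it must be semantically equivalent to $p\lor\Diamond p$ (so that demanding it at every $R$-successor of $s$ is equivalent to symmetry under the valuation $V(p)=\{s\}$), and it must be expressible in $\LEA$; the combination $\circ\neg p\to p$ achieves this compactly. Once this subformula is identified, everything else is routine bookkeeping with the semantics of $\circ$, together with the small but crucial observation that any witness $(s,y)$ to the failure of symmetry automatically satisfies $y\neq s$, which is what lets the singleton valuation $V(p)=\{s\}$ genuinely separate $s$ from the offending successor.
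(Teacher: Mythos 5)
Your proposal is correct and follows essentially the same route as the paper: the same defining formula $p\to\circ(\circ\neg p\to p)$, the same singleton valuation $V(p)=\{s\}$ in the undefinability-refuting direction, and the same use of the back-edge $yRs$ in the soundness direction (you merely phrase that step contrapositively, showing $\circ\neg p$ fails at a non-$p$ successor, where the paper derives a contradiction from assuming $t\vDash\circ\neg p\land\neg p$).
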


\begin{proof}
Given any frame $\mathcal{F}=\lr{S,R}$. We show that
$$\mathcal{F}\vDash\forall x\forall y(xRy\to yRx)\text{ iff }\mathcal{F}\vDash p\to\circ(\circ\neg p\to p).$$

Left-to-right: Suppose that $\mathcal{F}$ is symmetric, to show $\mathcal{F}\vDash p\to\circ(\circ\neg p\to p)$. For this, let $\M$ be an arbitrary model based on $\mathcal{F}$ and any $s\in S$. Assume that that $\M,s\vDash p$ (thus $s\vDash \circ\neg p\to p$) and $t$ is a successor of $s$ such that $t\vDash \circ \neg p$. As $R$ is symmetric, $tRs$. If $t\vDash \neg p$, then by the semantics of $\circ$, we should have $s\vDash\neg p$, contrary to the assumption. Then $t\vDash p$, and thus $t\vDash \circ \neg p\to p$. Since $t$ is arbitrary, we have $s\vDash\circ(\circ \neg p\to p)$. Therefore $s\vDash p\to\circ(\circ\neg p\to p)$, as desired.

Right-to-left: Suppose that $\mathcal{F}$ is not symmetric, to show $\mathcal{F}\nvDash p\to\circ(\circ\neg p\to p)$. By assumption, there exist $s,t\in S$ such that $sRt$ but {\em not} $tRs$, thus $s\neq t$. Define a valuation $V$ on $\mathcal{F}$ as $V(p)=\{s\}$. Obviously, $\lr{\mathcal{F},V},s\vDash p$, thus $s\vDash \circ\neg p\to p$. Furthermore, $t\vDash\neg p$, and given any $u$ such that $tRu$, we have $u\neq s$, thus $u\vDash\neg p$, hence $t\vDash\circ\neg p\land\neg p$, viz. $t\nvDash\circ\neg p\to p$. From this and $s\vDash \circ\neg p\to p$, it follows that $s\nvDash\circ(\circ\neg p\to p)$, then $\lr{\mathcal{F},V},s\nvDash p\to\circ(\circ\neg p\to p)$. We now conclude that $\mathcal{F}\nvDash p\to\circ(\circ\neg p\to p)$.
\end{proof}

\begin{proposition}
The frame property of coreflexivity, viz. $\forall x\forall y(xRy\to x=y)$, is defined by $\circ p$. Thus coreflexivity is definable in $\LEA$.
\end{proposition}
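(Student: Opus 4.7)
The plan is to prove the biconditional $\mathcal{F}\vDash\forall x\forall y(xRy\to x=y)$ iff $\mathcal{F}\vDash\circ p$ by handling each direction separately, in the same style as the symmetry argument in Proposition~\ref{prop.definable-b}. For the left-to-right direction, I would fix an arbitrary valuation $V$ and world $s\in S$, assume $\langle\mathcal{F},V\rangle,s\vDash p$, and take any $t$ with $sRt$; coreflexivity forces $t=s$, so $t\vDash p$ trivially, and hence $s\vDash\circ p$. Since $V$ and $s$ were arbitrary, $\mathcal{F}\vDash\circ p$.

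For the right-to-left direction, I would argue contrapositively: suppose $\mathcal{F}$ is not coreflexive, so there exist $s,t\in S$ with $sRt$ and $s\neq t$. The natural countermodel is the singleton valuation $V(p)=\{s\}$, mirroring the technique used in the symmetry proof. Then $s\vDash p$ while $t\nvDash p$ (since $t\neq s$), so by the semantics of $\circ$ we get $s\nvDash\circ p$, refuting $\circ p$ on $\mathcal{F}$.

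There is no real obstacle here; the only point requiring a tiny bit of care is checking that the singleton valuation actually separates $s$ from the offending successor $t$, which is automatic once one observes $s\neq t$. The argument is essentially the simplest instance of the general falsification pattern already in use in this section, and it reuses the same semantic clause for $\circ$ that drives Proposition~\ref{prop.definable-b}.
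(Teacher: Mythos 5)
Your proposal is correct and matches the paper's own proof essentially step for step: the forward direction uses coreflexivity to collapse every successor $t$ of $s$ to $s$ itself, and the converse uses the same singleton valuation $V(p)=\{s\}$ to falsify $\circ p$ at a witness of non-coreflexivity. Nothing further is needed.
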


\begin{proof}
Let $\mathcal{F}=\langle S,R\rangle$. We will show that $\mathcal{F}\vDash\forall x\forall y(xRy\to x=y)$ iff $\mathcal{F}\vDash\circ p$.

Suppose that $\mathcal{F}\vDash\forall x\forall y(xRy\to x=y)$. Then given any $\M$ based on $\mathcal{F}$ and any $s\in S$, if for each $t$ with $sRt$, we have $s=t$, then $s\vDash p$ implies $t\vDash p$, and thus $s\vDash \circ p$. Therefore $\mathcal{F}\vDash\circ p$.

Conversely, suppose that $\mathcal{F}\nvDash\forall x\forall y(xRy\to x=y)$. Then there are $s,t\in S$ such that $sRt$ but $s\neq t$. Define a valuation $V$ on $\mathcal{F}$ such that $V(p)=\{s\}$, then $s\vDash p$ but $t\nvDash p$, and thus $\langle\mathcal{F},V\rangle,s\nvDash\circ p$. Therefore $\mathcal{F}\nvDash\circ p$.
\end{proof}

The following result is an equivalent but different form of Proposition~\ref{prop.already}, item \ref{prop.one} and item \ref{prop.two}, respectively. For the proof details we refer to Appendix \ref{appendix.proofs}.
\begin{proposition}\label{prop.new}\
\begin{enumerate}
\item \label{prop.newone}
The frame property $\forall x\forall y\forall z(xRy\land yRz\land x\neq y\land y\neq z\land x\neq z\to xRz)$ is defined by $\circ p\land p\to\circ(\circ p\land p)$.
\item \label{prop.newtwo}
The frame property $\forall x\forall y\forall z(xRy\land xRz\land x\neq y\land x\neq z\land y\neq z\to yRz)$ is defined by $\neg\circ\neg p\to\circ(\circ\neg p\to p)$.
\end{enumerate}
\end{proposition}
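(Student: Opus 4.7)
The approach is, for each item, to give a direct frame-definability argument in the style of Proposition~\ref{prop.definable-b}: for a frame $\mathcal{F} = \langle S, R \rangle$, show the first-order property holds on $\mathcal{F}$ iff the displayed $\LEA$-formula is valid on $\mathcal{F}$. Each direction is handled separately: ``frame property implies formula validity'' via a case analysis on the possible coincidences among the three quantified points, and the converse by refuting the formula using a carefully chosen valuation.

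For item~\ref{prop.newone}, assume the three-variable property and fix a model $\M$ based on $\mathcal{F}$ and a world $s$ with $\M, s \vDash \circ p \land p$. Any $R$-successor $y$ of $s$ already satisfies $p$, so it remains to show $y \vDash \circ p$; assuming $y \vDash p$, pick any $z$ with $yRz$ and split on whether $s, y, z$ are pairwise distinct. The coincidence cases $s = y$, $y = z$, $s = z$ are immediate from $s \vDash \circ p \land p$, and in the pairwise-distinct case the frame property yields $sRz$, after which $s \vDash \circ p \land p$ gives $z \vDash p$. For the converse, if the property fails, take witnesses $s, y, z$ with $sRy$, $yRz$, pairwise distinct, and not $sRz$; put $V(p) = S \setminus \{z\}$. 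Then $s \vDash p$ and every $R$-successor of $s$ avoids $z$ (since not $sRz$), so $s \vDash \circ p \land p$; but $y$ is a successor of $s$ with $y \vDash p$, $yRz$, and $z \nvDash p$, giving $y \nvDash \circ p$ and hence $s \nvDash \circ(\circ p \land p)$.

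For item~\ref{prop.newtwo}, assume the property and $s \vDash \neg \circ \neg p$. Unpacking the semantics of $\circ$, this means $s \vDash \neg p$ together with some $R$-successor $t$ of $s$ satisfying $p$. The antecedent of $\circ \neg p \to p$ is then vacuously satisfied at $s$ (since $s \nvDash \circ \neg p$); for any successor $y$ of $s$, argue by contradiction from $y \vDash \circ \neg p$ and $y \nvDash p$, case-splitting on whether $s, y, t$ are pairwise distinct. The coincidence cases $y = s$ and $y = t$ clash immediately with the established facts, and in the pairwise-distinct case the frame property gives $yRt$, whence $y \vDash \circ \neg p$ forces $t \vDash \neg p$, contradicting $t \vDash p$. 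For the converse, from witnesses $s, y, z$ with $sRy$, $sRz$, pairwise distinct, and not $yRz$, put $V(p) = \{z\}$. Then $sRz$ and $z \vDash p$ give $s \vDash \neg \circ \neg p$, while $y \vDash \neg p$ and the fact that $z$ is the only $p$-world but not a successor of $y$ yield $y \vDash \circ \neg p$ and $y \nvDash p$, refuting $\circ(\circ \neg p \to p)$ at $s$.

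The main obstacle is organising the degenerate subcases in the validity directions: the pairwise distinctness constraints in the first-order properties are exactly what forces the frame condition to be used, and each coincidence must be dispatched using the hypothesis or the peculiar semantics of $\circ$ (the conditional reading, which is why a vacuous truth at $s$ helps in item~\ref{prop.newtwo}). The choice of valuations, $V(p) = S \setminus \{z\}$ in item~\ref{prop.newone} and $V(p) = \{z\}$ in item~\ref{prop.newtwo}, is dictated by whether the refutation needs $p$ to fail at a single point or to hold at a single point; this is parallel to the valuation $V(p) = \{s\}$ used in Proposition~\ref{prop.definable-b}.
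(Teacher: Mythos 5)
Your proposal is correct and follows essentially the same route as the paper's proof: both directions of a frame-correspondence argument, with the validity direction reducing to the pairwise-distinct case (the paper phrases this contrapositively, deriving the distinctness of the witnesses from their differing truth values, whereas you dispatch the coincidence cases directly) and the refutation direction using a valuation that isolates the non-successor point. The only substantive variation is your choice $V(p)=S\setminus\{z\}$ in item~\ref{prop.newone}, where the paper uses $V(p)=\{s\}\cup R(s)$; both make $s$ satisfy $\circ p\land p$ while killing $p$ at $z$, so the difference is cosmetic.
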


\section{Bisimulation}\label{sec.bis}

In this section, we propose the suitable notion of bisimulation for the logic of essence and accident $\LEA$. It is announced but without definitions or proofs in~\cite{Fan:2015} that the bisimulation for $\LEA$ is similar to that for the logic of strong noncontingency $\SNCL$.

We first recall the standard notion of bisimulation for modal logic $\ML$.

\begin{definition}[$\Box$-bisimulation]
Let $\M=\langle S,R,V\rangle$ and $\M'=\lr{S',R',V'}$. A nonempty binary relation $Z$ is called a {\em $\Box$-bisimulation} between $\M$ and $\M'$, if $sZs'$ implies that the following conditions are satisfied:

(Inv) for all $p\in\BP$, $s\in V(p)$ iff $s'\in V'(p)$;

($\Box$-Forth) if $sRt$ for some $t$, then there is a $t'$ such that $s'Rt'$ and $tZt'$;

($\Box$-Back) if $s'Rt'$ for some $t'$, then there is a $t$ such that $sRt$ and $tZt'$.

\medskip

We say that $(\M,s)$ and $(\M',s')$ are {\em $\Box$-bisimilar}, notation: $(\M,s)\Kbis(\M',s')$, if there exists a $\Box$-bisimulation $Z$ between $\M$ and $\M'$ such that $sZs'$. When the models involved are clear, we write it $s\Kbis s'$ for brevity.
\end{definition}

The following result will be used in Proposition \ref{prop.kbisvskwbis}.

\begin{proposition}\label{prop.Kbisimilar} Let $\M=\lr{S,R,V}$ and $\M'=\lr{S',R',V'}$ be two models, and $s\in S$ and $s'\in S'$. Then
$(\M,s)\Kbis(\M',s')$ implies the following conditions:
\begin{enumerate}
\item \label{prop.bisimilar-one} For all $p\in\BP$, $s\in V(p)$ iff $s'\in V'(p)$;
\item \label{prop.bisimilar-two} If $sRt$, then there is a $t'$ in $\M'$ such that $s'R't'$ and $t\Kbis t'$;
\item \label{prop.bisimilar-three} If $s'R't'$, then there is a $t$ in $\M$ such that $sRt$
and $t\Kbis t'$.
\end{enumerate}
\end{proposition}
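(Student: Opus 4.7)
The plan is simply to unpack the definition of $\Box$-bisimilarity and observe that the same witnessing relation can be reused. By hypothesis $(\M,s)\Kbis(\M',s')$, so there is a $\Box$-bisimulation $Z$ between $\M$ and $\M'$ with $sZs'$. Item (1) is then immediate from the (Inv) clause applied to the pair $(s,s')$.

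For item (2), suppose $sRt$. The ($\Box$-Forth) clause applied to $sZs'$ yields some $t'$ with $s'R't'$ and $tZt'$. The key point is that the very same relation $Z$ is already a $\Box$-bisimulation (it was chosen at the start), and it contains the pair $(t,t')$, so by definition $(\M,t)\Kbis(\M',t')$, i.e.\ $t\Kbis t'$. Item (3) is symmetric: apply ($\Box$-Back) to obtain $t$ with $sRt$ and $tZt'$, and conclude $t\Kbis t'$ by the same reuse of $Z$.

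There is no real obstacle here; the statement is essentially a restatement of the definition, with the only content being the observation that ``$tZt'$ for some bisimulation $Z$'' is literally what ``$t\Kbis t'$'' means, so no new witness needs to be constructed. Accordingly the proof will be short: one sentence fixing $Z$, one sentence each for (1), (2), (3).
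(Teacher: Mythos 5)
Your proof is correct and matches the paper's one-line argument in substance: both reduce the claim to unpacking the definition of a $\Box$-bisimulation at the pair $(s,s')$. The only difference is that the paper invokes the fact that $\Kbis$ is itself a $\Box$-bisimulation (which tacitly uses closure of bisimulations under unions), whereas you reuse the fixed witnessing relation $Z$ and observe that $tZt'$ already certifies $t\Kbis t'$ --- if anything a slightly more self-contained variant of the same argument.
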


\begin{proof}
Follows directly from the fact that $\Kbis$ is a $\Box$-bisimulation and the definition of $\Box$-bisimulation.
\end{proof}

However, the notion of $\Box$-bisimulation is too refined for the logic $\LEA$, as will be shown below. The following  example arises in the proof of Proposition \ref{prop.lessexp}:

$$
\xymatrix{\mathcal{M}:\ \ \ s:p\ar@(ur,ul) & & & \mathcal{N}:\ \ \ t:p}
$$

It is not hard to show that $\M$ and $\N$ are both image-finite models, and that $(\M,s)$ and $(\N,t)$ satisfy the same $\LEA$-formulas, but they are not $\Box$-bisimilar. Therefore, we need to redefine a suitable notion of bisimulation for $\LEA$.

\begin{definition}[$\circ$-bisimulation]\label{def.circ-bis}
Let $\M=\langle S,R,V\rangle$. A nonempty binary relation $Z$ over $S$ is called a {\em $\circ$-bisimulation} on $\M$, if $sZs'$ implies that the following conditions are satisfied:

(Inv) for all $p\in\BP$, $s\in V(p)$ iff $s'\in V(p)$;

($\circ$-Forth) if $sRt$ and $(s,t)\notin Z$ for some $t$, then there is a $t'$ such that $s'Rt'$ and $tZt'$;

($\circ$-Back) if $s'Rt'$ and $(s',t')\notin Z$ for some $t'$, then there is a $t$ such that $sRt$ and $tZt'$.

\medskip

We say that $(\M,s)$ and $(\M',s')$ are {\em $\circ$-bisimilar}, notation: $(\M,s)\kwbis(\M',s')$, if there exists a $\circ$-bisimulation $Z$ on the disjoint union of $\M$ and $\M'$ such that $sZs'$.
\end{definition}

The following proposition states that we can build more sophisticated $\circ$-bisimulations from the simpler ones. For the proof details we refer to Appendix \ref{appendix.proofs}.

\begin{proposition}\label{prop.bis-union}
If $Z$ and $Z'$ are both $\circ$-bisimulations on $\M$, then $Z\cup Z'$ is also a $\circ$-bisimulation on $\M$.
\end{proposition}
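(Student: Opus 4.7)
The plan is to verify directly, from Definition~\ref{def.circ-bis}, that $Z\cup Z'$ satisfies the three clauses (Inv), ($\circ$-Forth) and ($\circ$-Back). Fix any pair $(s,s')$ with $s(Z\cup Z')s'$; by definition of union, either $sZs'$ or $sZ's'$, and the argument will proceed by cases on which disjunct holds.

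First I would dispatch (Inv): whichever of $sZs'$ or $sZ's'$ holds, (Inv) for that relation gives atomic equivalence of $s$ and $s'$, and this is exactly what is required for $Z\cup Z'$. Next, for ($\circ$-Forth), suppose $sRt$ and $(s,t)\notin Z\cup Z'$. The key observation is that the hypothesis $(s,t)\notin Z\cup Z'$ is strictly stronger than what each single relation needs: it forces both $(s,t)\notin Z$ and $(s,t)\notin Z'$. Hence, in the subcase $sZs'$, I can apply $\circ$-Forth for $Z$ (using $(s,t)\notin Z$) to obtain some $t'$ with $s'Rt'$ and $tZt'$; and in the subcase $sZ's'$, I apply $\circ$-Forth for $Z'$ (using $(s,t)\notin Z'$) to obtain some $t'$ with $s'Rt'$ and $tZ't'$. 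Either way, $t(Z\cup Z')t'$, so the required witness exists. The ($\circ$-Back) clause is entirely symmetric and handled in the same way, swapping the roles of $s$ and $s'$.

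There is no real obstacle here; the only point worth highlighting is precisely the asymmetry between the hypothesis and the conclusion of the Forth/Back clauses. A naive attempt at proving unions closed under bisimulation in a setting where the Forth clause is conditional on a \emph{non-}relation could fail if the non-relation hypothesis on the union were weaker than the non-relation hypotheses on the components; but here it is in fact stronger, which is exactly what makes the closure argument work. Since $Z$ and $Z'$ are both nonempty, $Z\cup Z'$ is nonempty, and we conclude that $Z\cup Z'$ is a $\circ$-bisimulation on $\M$.
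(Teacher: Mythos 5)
Your proposal is correct and follows essentially the same route as the paper's own proof: case split on whether $sZs'$ or $sZ's'$, and use the key fact that $(s,t)\notin Z\cup Z'$ implies $(s,t)\notin Z$ and $(s,t)\notin Z'$, so each component's Forth/Back clause applies. Nothing is missing.
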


In particular, by Definition~\ref{def.circ-bis}, one can see that $\circ$-bisimilarity is the largest $\circ$-bisimulation. And also, $\circ$-bisimilarity is an equivalence relation. Note that the proof is highly nontrivial. For the proof details we refer to Appendix \ref{appendix.proofs}.

\begin{proposition}\label{prop.max-bis}
The $\circ$-bisimilarity $\kwbis$ is an equivalence relation.
\end{proposition}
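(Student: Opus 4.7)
The plan is to verify reflexivity, symmetry, and transitivity of $\kwbis$; the first two are routine while transitivity is the main obstacle and deserves the bulk of the argument.

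Reflexivity follows by exhibiting, on the disjoint union of $\M$ with a renamed copy of itself, the relation that pairs each world with its copy (in both directions); this is straightforwardly a $\circ$-bisimulation containing $(s,s)$. Symmetry follows from the observation that the clauses ($\circ$-Forth) and ($\circ$-Back) in Definition~\ref{def.circ-bis} are dual under swapping $s\leftrightarrow s'$, so $Z^{-1}$ is a $\circ$-bisimulation whenever $Z$ is.

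For transitivity, suppose $(\M_1,s_1)\kwbis(\M_2,s_2)\kwbis(\M_3,s_3)$ via $\circ$-bisimulations $Z_{12}$ and $Z_{23}$. Let $\M:=\M_1\sqcup\M_2\sqcup\M_3$; since edges in a disjoint union stay within components, both $Z_{12}$ and $Z_{23}$ remain $\circ$-bisimulations on $\M$, and by Proposition~\ref{prop.bis-union} (iteratively applied, so that the union of all $\circ$-bisimulations on $\M$ is itself a $\circ$-bisimulation) there is a largest $\circ$-bisimulation $\kwbis_\M$ on $\M$, and it contains both $Z_{12}$ and $Z_{23}$. Thus $s_1\kwbis_\M s_2$ and $s_2\kwbis_\M s_3$. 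If $\kwbis_\M$ is transitive on $\M$ we obtain $s_1\kwbis_\M s_3$, and the restriction of $\kwbis_\M$ to $\M_1\sqcup\M_3$ is still a $\circ$-bisimulation (again because $R$-successors of worlds in $\M_1\sqcup\M_3$ stay in $\M_1\sqcup\M_3$), witnessing $(\M_1,s_1)\kwbis(\M_3,s_3)$.

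The heart of the argument is transitivity of $\kwbis_\M$, and the plan is to show that the transitive closure $\kwbis_\M^+$ is itself a $\circ$-bisimulation on $\M$; by maximality of $\kwbis_\M$ this will force $\kwbis_\M^+=\kwbis_\M$. So fix $(x,z)\in\kwbis_\M^+$ witnessed by a chain $x=y_0\kwbis_\M y_1\kwbis_\M\cdots\kwbis_\M y_n=z$. The (Inv) clause chains through atomic equivalences. For ($\circ$-Forth), given $xRt$ with $(x,t)\notin\kwbis_\M^+$, I would inductively build $t_1,\ldots,t_n$ with $y_iRt_i$ and $t\kwbis_\M t_1\kwbis_\M\cdots\kwbis_\M t_i$, by applying ($\circ$-Forth) of $\kwbis_\M$ to the pair $y_i\kwbis_\M y_{i+1}$ at the successor $y_iRt_i$ at step $i$, which in turn requires $(y_i,t_i)\notin\kwbis_\M$.

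The main obstacle is the apparent stuck case where $(y_i,t_i)\in\kwbis_\M$, since then ($\circ$-Forth) does not fire and the induction threatens to halt. I would dispose of it by contradiction: from $(x,y_i)\in\kwbis_\M^+$ (the initial sub-chain), $(y_i,t_i)\in\kwbis_\M\subseteq\kwbis_\M^+$, and $(t_i,t)\in\kwbis_\M^+$ (the reverse of the already-constructed chain, using that $\kwbis_\M^+$ is symmetric because $\kwbis_\M$ is, by the already-established symmetry step), transitivity of $\kwbis_\M^+$ yields $(x,t)\in\kwbis_\M^+$, contradicting the hypothesis $(x,t)\notin\kwbis_\M^+$. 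Hence the stuck case never arises, the iteration completes, and $t_n$ is a successor of $z$ with $(t,t_n)\in\kwbis_\M^+$. The ($\circ$-Back) clause is handled symmetrically using symmetry of $\kwbis_\M$.
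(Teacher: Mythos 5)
Your reflexivity step and your overall plan for transitivity are sound, but the symmetry step contains a genuine error, and the transitivity argument inherits it. The clauses ($\circ$-Forth) and ($\circ$-Back) of Definition~\ref{def.circ-bis} are \emph{not} dual under swapping, because their guards $(s,t)\notin Z$ and $(s',t')\notin Z$ each test an ordered pair living entirely on one side, and these pairs are not inverted when you pass from $Z$ to $Z^{-1}$. Concretely, let $\M$ consist of a single world $s$ with no successors, let $\M'$ consist of $s'$ and $t'$ with the single edge $s'R't'$, and let every valuation be empty. Then $Z=\{(s,s'),(s',t')\}$ is a $\circ$-bisimulation on the disjoint union (every guard fails to fire, e.g.\ ($\circ$-Back) at $(s,s')$ is blocked because $(s',t')\in Z$), so $(\M,s)\kwbis(\M',s')$. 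But $Z^{-1}=\{(s',s),(t',s')\}$ is not a $\circ$-bisimulation: at the pair $(s',s)$ we have $s'R't'$ and $(s',t')\notin Z^{-1}$, so ($\circ$-Forth) now demands a successor of $s$, and $s$ has none. The repair is to take the symmetric closure $Z\cup Z^{-1}$ rather than the plain inverse: for a new pair $(b,a)$ with $(a,b)\in Z$, a guard $(b,u)\notin Z\cup Z^{-1}$ implies $(b,u)\notin Z$, so ($\circ$-Back) of $Z$ at $(a,b)$ fires, and its conclusion only gets easier to meet in the larger relation. (The paper's own symmetry proof does essentially this, inverting only the cross-model pairs of $Z$ while retaining its within-model fragments un-inverted, precisely so that the guards keep their meaning.)

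Once symmetry is repaired, your transitivity argument goes through: writing $W$ for the largest $\circ$-bisimulation on $\M_1\sqcup\M_2\sqcup\M_3$, $W$ is then genuinely symmetric (its symmetric closure is a $\circ$-bisimulation, hence contained in $W$ by maximality), and your contradiction disposing of the stuck case $(y_i,t_i)\in W$ is valid. Showing that the transitive closure $W^+$ is itself a $\circ$-bisimulation, and hence equals $W$, is a genuinely different and arguably cleaner organization than the paper's, which instead builds an explicit witnessing relation for $(\M_1,s_1)\kwbis(\M_3,s_3)$ by composing $Z_{12}$ and $Z_{23}$ and padding the composite with several auxiliary within-model parts to keep the guards from firing spuriously. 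Two small points to tighten: Proposition~\ref{prop.bis-union} only gives closure under binary (hence finite) unions, so for the union of \emph{all} $\circ$-bisimulations you should remark that the same monotonicity argument handles arbitrary unions; and in the base case $i=0$ the guard follows directly from $(x,t)\notin W^+$ together with $W\subseteq W^+$.
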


The following proposition indicates the relationship between $\circ$-bisimilarity and $\Box$-bisimilarity: $\circ$-bisimilarity is strictly weaker than $\Box$-bisimilarity. This corresponds to the fact that $\LEA$ is strictly weaker than $\ML$.

\begin{proposition}\label{prop.kbisvskwbis}
Let $(\M,s)$ and $(\M',s')$ be pointed models. If $(\M,s)\Kbis (\M',s')$, then $(\M,s)\kwbis(\M',s')$; but the converse does not hold.
\end{proposition}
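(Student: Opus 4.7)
The plan is to prove the two halves separately. For the forward implication, I would take a $\Box$-bisimulation $Z$ between $\M$ and $\M'$ witnessing $sZs'$ and argue that the very same $Z$, viewed as a binary relation on the disjoint union $\M\sqcup\M'$, already witnesses $\circ$-bisimilarity. The invariance clause is verbatim the same in both definitions. The $\circ$-Forth clause is obtained from $\Box$-Forth by adding the extra precondition $(s,t)\notin Z$, so it is strictly weaker and automatically holds whenever $\Box$-Forth does; the same remark applies to the Back clauses. No further construction is required.

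For the converse, I would reuse the two pointed models from the proof of Proposition~\ref{prop.lessexp}: a reflexive single point $\M$ with $s:p$, and an irreflexive single point $\N$ with $t:p$. First, $(\M,s)\not\Kbis(\N,t)$ because $\M,s\vDash\Diamond\top$ while $\N,t\nvDash\Diamond\top$, and $\Box$-bisimilar pointed models must agree on all $\ML$-formulas. Second, to exhibit $(\M,s)\kwbis(\N,t)$, I would put $Z=\{(s,t),(s,s)\}$ on $\M\sqcup\N$. Invariance is immediate since $p$ holds everywhere. For $\circ$-Forth starting from either pair of $Z$, the only candidate successor edge on the left is the loop $sRs$; because $(s,s)\in Z$, the precondition $(s,s)\notin Z$ fails and the clause is vacuously satisfied. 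The $\circ$-Back clause is vacuous because $t$ has no successors in $\N$.

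The main obstacle is the second half: specifically, noticing that the naive choice $Z=\{(s,t)\}$ is \emph{not} a $\circ$-bisimulation, since $\circ$-Forth would then demand a successor of $t$ $Z$-matched to $s$, which does not exist. One has to enlarge $Z$ by the self-pair $(s,s)$ so that the guard $(s,t)\notin Z$ in the $\circ$-Forth clause blocks the loop $sRs$. This small trick reflects the deeper reason $\LEA$ cannot detect the self-loop at $s$: under the essence operator, a transition to a world with the same propositional profile as the source incurs no obligation.
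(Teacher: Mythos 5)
Your proposal is correct and follows essentially the same route as the paper: the forward direction observes that the $\circ$-Forth/Back clauses only strengthen the antecedents of the $\Box$-Forth/Back clauses (the paper packages this by taking $Z$ to be the $\Kbis$-relation rather than a single witnessing bisimulation, an inessential difference), and the converse uses the very same one-point reflexive/irreflexive counterexample with the same relation $Z=\{(s,s),(s,t)\}$. Your remark that the self-pair $(s,s)$ must be included to block the loop is exactly the point the paper makes in its footnote about such constructions.
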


\begin{proof}
Suppose that $(\M,s)\Kbis (\M',s')$. Define $Z=\{(x,x')\mid x\Kbis x',~x\in \M,~x'\in \M'\}$. We will show that $Z$ is a $\circ$-bisimulation on the disjoint union of $\M$ and $\M'$ with $sZs'$.

First, by supposition, we have $sZs'$, thus $Z$ is nonempty. We need only check that $Z$ satisfies the three conditions of $\circ$-bisimulation. Assume that $xZx'$. By definition of $Z$, we obtain $x\Kbis x'$. Using item~\ref{prop.bisimilar-one} of Proposition~\ref{prop.Kbisimilar}, we have $x$ and $x'$ satisfy the same propositional variables, thus (Inv) holds. For ($\circ$-Forth), suppose that $xRy$ and $(x,y)\notin Z$ for some $y$, then using item~\ref{prop.bisimilar-two} of Proposition~\ref{prop.Kbisimilar}, we get there exists $y'$ in $\M'$ such that $x'R'y'$ and $y\Kbis y'$, thus $yZy'$. The proof for condition ($\circ$-Back) is similar, by using item \ref{prop.bisimilar-three} of Proposition~\ref{prop.Kbisimilar}.

For the converse, recall the example in Proposition~\ref{prop.lessexp}. There, let $Z=\{(s,s),(s,t)\}$. It is not hard to show that $Z$ is a $\circ$-bisimulation on the disjoint union of $\M$ and $\N$, thus $(\M,s)\kwbis(\N,t)$. However, $(\M,s)\not\Kbis(\N,t)$, as $\M,s\nvDash\Box\bot$ but $\N,t\vDash\Box\bot$.
\end{proof}

The following result says that $\LEA$-formulas are invariant under $\circ$-bisimilarity. This means that $\LEA$-formulas cannot distinguish $\circ$-bisimilar models.
\begin{proposition}\label{prop.invariance}
Let $(\M,s)$ and $(\M',s')$ be pointed models. If $(\M,s)\kwbis(\M',s')$, then $(\M,s)\equkw(\M',s')$. In other words, $\circ$-bisimilarity implies $\circ$-equivalence.
\end{proposition}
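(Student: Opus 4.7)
The plan is to prove the invariance by induction on the structure of $\phi \in \LEA$, showing the stronger statement that whenever $Z$ is a $\circ$-bisimulation on the disjoint union of $\M$ and $\M'$ and $uZu'$, then $u \vDash \phi$ iff $u' \vDash \phi$. The atomic case is immediate from (Inv), and the Boolean cases are routine. The only case of real interest is $\circ\phi$; the invariance of $\circ$-formulas is what forces the particular back-and-forth clauses chosen in Definition~\ref{def.circ-bis}.

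For the inductive step, fix a pair $uZu'$ and suppose $u \vDash \circ\phi$; I will show $u' \vDash \circ\phi$. Unpacking the semantics, assume $u' \vDash \phi$ and pick an arbitrary $R'$-successor $v'$ of $u'$; I must show $v' \vDash \phi$. The induction hypothesis applied to the pair $(u,u')$ gives $u \vDash \phi$, so from $u \vDash \circ\phi$ every $R$-successor $v$ of $u$ satisfies $v \vDash \phi$.

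The key step is now a case split on whether $(u',v') \in Z$. If $(u',v') \in Z$, the induction hypothesis applied to that pair, together with $u' \vDash \phi$, directly yields $v' \vDash \phi$ — no back-or-forth clause is invoked. If $(u',v') \notin Z$, then ($\circ$-Back) supplies some $v$ with $uRv$ and $vZv'$; then $v \vDash \phi$ by the preceding paragraph, and the induction hypothesis at $(v,v')$ delivers $v' \vDash \phi$. The reverse implication is symmetric, using ($\circ$-Forth).

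The main obstacle — and the reason this is more subtle than the standard modal invariance lemma — is precisely the case $(u',v') \in Z$: the ($\circ$-Back) clause is deliberately weaker than its $\Box$-counterpart and gives no information in that case, so one has to observe that the induction hypothesis alone suffices, because $Z$-related points already agree on $\phi$. This is the structural asymmetry that allows $\circ$-bisimulation to be strictly coarser than $\Box$-bisimulation (as in Proposition~\ref{prop.kbisvskwbis}) while still preserving $\LEA$, and recognising that the clause must omit the diagonal case is the conceptual heart of the argument.
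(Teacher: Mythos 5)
Your proof is correct and follows essentially the same route as the paper's: a structural induction in which the hypothesis is strengthened to all $Z$-related pairs (including pairs inside a single model), with (Inv) handling atoms and ($\circ$-Forth)/($\circ$-Back) handling the $\circ\phi$ case. The only cosmetic difference is that you argue the $\circ\phi$ step directly and therefore split explicitly on whether $(u',v')\in Z$, whereas the paper argues contrapositively from $s\nvDash\circ\phi$ and obtains $(s,t)\notin Z$ from the induction hypothesis because $s$ and $t$ disagree on $\phi$ --- the two presentations are dual and rest on the same observation.
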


\begin{proof}
Assume that $(\M,s)\kwbis(\M',s')$, then there is a $\circ$-bisimulation $Z$ on the disjoint union of $\M$ and $\M'$ such that $sZs'$. We need to show that for any $\phi\in \LEA$, $\M,s\vDash\phi$ iff $\M',s'\vDash\phi$.

The proof continues by induction on the structure of $\phi$. The non-trivial case is $\circ\phi$.

Suppose that $\M,s\nvDash\circ\phi$. Then $s\vDash\phi$ but there exists $t$ such that $sRt$ and $t\nvDash\phi$. By the induction hypothesis, $(s,t)\notin Z$. Then by ($\circ$-Forth) that there exists $t'$ such that $s'R't'$ and $tZt'$, thus $(\M,t)\kwbis (\M',t')$. From $s\kwbis s'$ and the induction hypothesis and $s\vDash\phi$, it follows that $s'\vDash\phi$. Analogously, we can infer $t'\nvDash\phi$. Therefore $\M',s'\nvDash\circ\phi$. For the converse use ($\circ$-Back).
\end{proof}

With the notion of $\circ$-bisimulation, we can simplify the proofs in the previous sections. We here take Proposition~\ref{prop.lessexp-d} as an example, to show that $(\M,s)$ and $(\N,s')$ therein are $\circ$-bisimilar, rather than using simultaneous induction. For this, we define $Z=\{(s,s'),(t,t'),(t,t)\}$\footnote{Note that in order to guarantee $Z$ is indeed a $\circ$-bisimulation, the pair $(t,t)$ must be contained in $Z$.}. We can show that $Z$ is indeed a $\circ$-bisimulation on the disjoint union of $\M$ and $\N$, thus $(\M,s)\kwbis (\N,s')$.\weg{ and $t\kwbis t'$. By Prop.~\ref{prop.invariance}, we have for all $\phi\in\LEA$, (i) $\M,s\vDash\phi$ iff $\N,s'\vDash\phi$, and (ii) $\M,t\vDash\phi$ iff $\N,t'\vDash\phi$.}

\medskip

For the converse, we have

\begin{proposition}[Hennessy-Milner Theorem]\label{prop.hennessy-milner-theorem}
Let $\M$ and $\M'$ be both image-finite models and $s\in\M$ and $s'\in\M'$. Then $(\M,s)\equkw(\M',s')$ iff $(\M,s)\kwbis(\M',s')$.
\end{proposition}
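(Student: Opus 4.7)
The plan is to handle the right-to-left direction immediately by appeal to Proposition~\ref{prop.invariance}, and to attack the left-to-right direction by showing that $\circ$-equivalence, viewed as a binary relation on the disjoint union $\M\uplus\M'$, is itself a $\circ$-bisimulation on that disjoint union. Since $(s,s')\in\equkw$ by assumption, this will immediately yield $(\M,s)\kwbis(\M',s')$ once the three clauses of Definition~\ref{def.circ-bis} are verified.

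Clause (Inv) is automatic because every propositional variable is itself an $\LEA$-formula. I would then focus on ($\circ$-Forth), since ($\circ$-Back) is entirely symmetric. So assume $x\equkw y$, $xRu$ in the disjoint union, and $(x,u)\notin\equkw$, that is $x\not\equkw u$. The task is to produce $v$ with $yRv$ in the disjoint union and $u\equkw v$. Suppose for contradiction that no such $v$ exists, and let $\{v_1,\ldots,v_n\}$ enumerate the $R$-successors of $y$; this set is finite by image-finiteness of $\M$ and $\M'$, and the case $n=0$ will be subsumed by reading the empty disjunction as $\bot$.

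The crux is the following construction. From $x\not\equkw u$ pick an $\LEA$-formula $\alpha$ with $x\vDash\alpha$ and $u\nvDash\alpha$ (replacing $\alpha$ by $\neg\alpha$ if needed). From $u\not\equkw v_i$ pick, for each $i$, a formula $\beta_i\in\LEA$ with $u\nvDash\beta_i$ and $v_i\vDash\beta_i$ (again, negating if needed). Set $\chi:=\alpha\vee\bigvee_{i=1}^n\beta_i\in\LEA$. Then $x\vDash\chi$, $u\nvDash\chi$, and $v_i\vDash\chi$ for every $i$. The first two facts together with $xRu$ yield $x\vDash\bullet\chi$; the third fact shows every successor of $y$ satisfies $\chi$, so $y\nvDash\bullet\chi$. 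Since $\bullet\chi\in\LEA$, this contradicts $x\equkw y$.

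The main obstacle, and the reason this argument departs from the standard Hennessy--Milner proof for $\Box$-bisimulation, is the absence of a diamond operator in $\LEA$: one cannot simply form $\Diamond\psi$ to separate $x$ from $y$. What makes the proof go through is precisely the extra side condition $(x,u)\notin Z$ in Definition~\ref{def.circ-bis}: it furnishes the \emph{accident witness} $\alpha$ needed to turn a would-be $\Diamond$-statement into a $\bullet$-statement, which is expressible in $\LEA$. The ($\circ$-Back) clause follows by the mirror-image construction, swapping the roles of $\M$ and $\M'$ throughout.
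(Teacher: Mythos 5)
Your proposal is correct and follows essentially the same route as the paper's proof: both directions are handled identically (right-to-left via Proposition~\ref{prop.invariance}, left-to-right by showing $\equkw$ is a $\circ$-bisimulation on the disjoint union), and your witness formula $\chi=\alpha\vee\bigvee_{i=1}^n\beta_i$ is, up to choice of polarity, exactly the paper's $\phi_1\land\cdots\land\phi_n\to\phi$. The only cosmetic difference is that the paper first derives $s'\nvDash\circ\phi$ to establish that the successor set is nonempty, whereas you absorb the $n=0$ case by reading the empty disjunction as $\bot$; both are fine.
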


\begin{proof}
Let $\M$ and $\M'$ be both image-finite models and $s\in\M$ and $s'\in\M'$. Based on Proposition~\ref{prop.invariance}, we need only to show the direction from left to right. Assume that $(\M,s)\equkw(\M',s')$, we need to show that $\equkw$ is a $\circ$-bisimulation on the disjoint union of $\M$ and $\M'$, which implies $(\M,s)\kwbis(\M',s')$. It suffices to show the condition ($\circ$-Forth), as the proof for ($\circ$-Back) is similar.

Suppose that there exists $t$ such that $sRt$ and $s\not\equkw t$, to show for some $t'$ it holds that $s'R't'$ and $t\equkw t'$. Since $s\not\equkw t$, there is a $\phi\in\LEA$ such that $s\vDash\phi$ but $t\nvDash\phi$, and thus $s\nvDash\circ\phi$ due to $sRt$. By assumption, we have $s'\vDash\phi$ and $s'\nvDash\circ\phi$, and thus there exists $v'$ such that $s'R'v'$ and $v'\nvDash\phi$. Let $S'=\{t'\in\M'\mid s'R't'\}$. It is easy to see that $S'\neq \emptyset$. As $\M'$ is image-finite, $S'$ must be finite, say $S'=\{t_1',t_2',\cdots,t_n'\}$. If there is no $t'_i\in S'$ such that $t\equkw t'_i$, then for every $t_i'\in S'$ there exists $\phi_i\in\LEA$ such that $t\vDash\phi_i$ but $t_i'\nvDash\phi_i$. It follows that $t\vDash\phi_1\land\cdots\land\phi_n$, and thus $t\nvDash\phi_1\land\cdots\land\phi_n\to\phi$; furthermore, from $s\vDash\phi$ follows that $s\vDash\phi_1\land\cdots\land\phi_n\to\phi$. Hence $s\nvDash\circ(\phi_1\land\cdots\land\phi_n\to\phi)$. Note that for all $t_i'\in S'$, $t_i'\nvDash\phi_1\land\cdots\land\phi_n$, thus $t_i'\vDash\phi_1\land\cdots\land\phi_n\to\phi$. We also have $s'\vDash\phi_1\land\cdots\land\phi_n\to\phi$, and then $s'\vDash\circ(\phi_1\land\cdots\land\phi_n\to\phi)$, which is contrary to the assumption and $s\nvDash\circ(\phi_1\land\cdots\land\phi_n\to\phi)$.  Therefore, we have for some $t'$ it holds that $s'R't'$ and $t\equkw t'$.
\end{proof}

If we remove the condition of `image-finite', then $\kwbis$ does not coincide with $\equkw$.

\begin{example}\label{example.not-m-saturated}
Consider two models $\M=\lr{S,R,V}$ and $\M'=\lr{S',R',V'}$, where $S=\mathbb{N}\cup\{s\}$, $R=\{(s,n)\mid n\in\mathbb{N}\},V(p_n)=\{n\}$
and $S'=\mathbb{N}\cup\{s',\omega\}$, $R'=\{(s',n)\mid n\in\mathbb{N}\}\cup\{(s',\omega)\}$, and $V'(p_n)=\{n\}$. This can be visualized as follows:
$$
\xymatrix{
s \ar[d]\ar[dr]\ar[drr]\ar[drrr]&&&\M\\
p_1&p_2&p_3& \dots
}
\qquad
\qquad
\xymatrix{
s' \ar[d]\ar[dr]\ar[drr]\ar[drrr]\ar[r]&\omega&&\M'\\
p_1&p_2&p_3& \dots
}
$$
We have:
\begin{itemize}
\item Neither of $\M$ and $\M'$ is image-finite, as $s$ and $s'$ both have infinite many successors.
\item $(\M,s)\equkw(\M',s')$. By induction on $\phi\in\LEA$, we show that for any $\phi$, $\M,s\vDash\phi$ iff $\M',s'\vDash\phi$. The non-trivial case is $\circ\phi$, that is to show, $\M,s\vDash\circ\phi$ iff $\M',s'\vDash\circ\phi$. The direction from right to left follows directly from $R(s)\subseteq R'(s')$. For the other direction, suppose that $\M,s\vDash\circ\phi$. Then $s\vDash\phi$ implies for any $n\in\mathbb{N}$, $n\vDash\phi$. By the induction hypothesis, $s'\vDash\phi$ implies for any $n\in\mathbb{N}$, $n\vDash\phi$. As $\phi$ is finite, it contains only finitely many propositional variables. Without loss of generality, we may assume that $n$ is the largest number of subscripts of propositional variables occurring in $\phi$. Then by induction on $\phi$, we can show that $n+1\vDash\phi$ iff $\omega\vDash\phi$. Thus $s'\vDash\phi$ implies for any $n\in\mathbb{N}\cup\{\omega\}$, $n\vDash\phi$. Therefore $\M',s'\vDash\circ\phi$, as desired.
\item $(\M,s)\not\kwbis(\M',s')$. Suppose, for a contradiction, that $(\M,s)\kwbis(\M',s')$, then there exists a $\circ$-bisimulation $Z$ such that $sZs'$. Now we have $s'R'\omega$. And also $(s',\omega)\notin Z$, for otherwise $s'\kwbis \omega$, thus e.g. $s'\vDash\circ\neg p_1$ iff $\omega\vDash\circ\neg p_1$, contrary to the fact that $s'\nvDash\circ\neg p_1$ but $\omega\vDash\circ\neg p_1$. By the condition ($\circ$-Back), we obtain that there exists $m\in\mathbb{N}$ such that $sRm$ and $mZ\omega$, thus $m\kwbis \omega$. However, $m\vDash p_m$ but $\omega\nvDash p_m$, contradiction.
\end{itemize}
\end{example}

Proposition~\ref{prop.hennessy-milner-theorem} can be extended to the following stronger proposition. Here by {\em $\LEA$-saturated model} we mean, given any $s$ in this model and any set $\Gamma\subseteq\LEA$, if all of finite subsets of $\Gamma$ are satisfiable in the successors of $s$, then $\Gamma$ is also satisfiable in the successors of $s$.

\begin{proposition}\label{prop.delta-saturated}
Let $\M$ and $\M'$ be both $\LEA$-saturated models and $s\in\M$ and $s'\in\M'$. Then $(\M,s)\equkw(\M',s')$ iff $(\M,s)\kwbis (\M',s')$.
\end{proposition}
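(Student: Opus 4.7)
The plan is to parallel the proof of Proposition~\ref{prop.hennessy-milner-theorem} by showing that $\equkw$ itself is a $\circ$-bisimulation on the disjoint union of $\M$ and $\M'$; the right-to-left direction is then immediate from Proposition~\ref{prop.invariance}. The clause (Inv) is automatic since propositional atoms belong to $\LEA$, and ($\circ$-Back) is symmetric to ($\circ$-Forth), so the content lies entirely in verifying ($\circ$-Forth).

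For ($\circ$-Forth), assume $s\equkw s'$, $sRt$, and $s\not\equkw t$. I would produce $t'\in R'(s')$ with $t\equkw t'$ as follows. First, fix a separating formula $\phi\in\LEA$ with $s\vDash\phi$ and $t\nvDash\phi$ (switching to $\neg\phi$ if the witness initially runs the other way). Next, let $\Gamma=\{\psi\in\LEA\mid t\vDash\psi\}$ be the $\LEA$-theory of $t$.

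The crux is to show that every finite subset of $\Gamma\cup\{\neg\phi\}$ is satisfiable at some successor of $s'$; the conclusion will then follow by $\LEA$-saturation of $\M'$ applied to $\Gamma\cup\{\neg\phi\}$. Since $\Gamma$ is closed under conjunction, it suffices to handle a single $\psi\in\Gamma$: from $t\vDash\psi$ and $t\nvDash\phi$ we get $t\nvDash\psi\to\phi$, while $s\vDash\phi$ gives $s\vDash\psi\to\phi$, hence $s\nvDash\circ(\psi\to\phi)$. By $s\equkw s'$ we obtain $s'\nvDash\circ(\psi\to\phi)$, and unpacking the semantics of $\circ$ yields some $u'\in R'(s')$ with $u'\vDash\psi\wedge\neg\phi$. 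Saturation now delivers a single $t'\in R'(s')$ realizing every formula in $\Gamma\cup\{\neg\phi\}$, and because $\LEA$ is closed under negation, $t'$ satisfies exactly the same $\LEA$-formulas as $t$, i.e.\ $t\equkw t'$.

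The main obstacle, and the reason this argument is subtler than the classical Hennessy-Milner proof for $\ML$, is that $\circ$ has no dual: one cannot directly translate \emph{``$\psi$ is satisfied at a successor of $s'$''} into a formula refuted at $s'$. The device of the separating formula $\phi$ is what restores this information flow, by repackaging arbitrary members of the theory of $t$ in the form $\psi\to\phi$, where the semantics of $\circ$ forces the existence of a suitable successor; once this reduction is in place, invoking $\LEA$-saturation closes the proof.
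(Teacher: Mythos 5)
Your proposal is correct and follows essentially the same route as the paper's proof: both show that $\equkw$ is a $\circ$-bisimulation on the disjoint union, use the separating formula to package members of the $\LEA$-theory of $t$ into implications of the form $\psi\to\phi$ whose $\circ$ fails at $s$ and hence at $s'$, extract successors of $s'$ witnessing finite satisfiability, and close with $\LEA$-saturation. The only cosmetic differences are that you carry $\neg\phi$ explicitly in the target set (redundant, since $\neg\phi$ already belongs to the theory of $t$) and argue the existence of the witnessing successor directly rather than by contradiction.
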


\begin{proof}
Based on Proposition \ref{prop.invariance}, we need only show the direction from left to right.

Let $\M=\lr{S,R,V}$ and $\M'=\lr{S',R',V'}$ be both $\LEA$-saturated models. Suppose that $(\M,s)\equkw(\M',s')$, we will show that $\equkw$ is a $\circ$-bisimulation on the disjoint union of $\M$ and $\M'$, which implies $(\M,s)\kwbis (\M',s')$. It suffices to show the condition ($\circ$-Forth) holds, as the proof of ($\circ$-Back) is similar.

Assume that $sRt$ and $s\not\equkw t$ for some $t$, to show there exists $t'$ such that $s'R't'$ and $t\equkw t'$. Let $\Gamma=\{\phi\in\LEA\mid t\vDash\phi\}$. It is clear that $t\vDash\Gamma$. Then for any finite $\Sigma\subseteq \Gamma$, $t\vDash\bigwedge\Sigma$. As $s\not\equkw t$, there exists $\psi\in\LEA$ such that $s\vDash\psi$ but $t\nvDash\psi$, thus $s\vDash\bigwedge\Sigma\to\psi$ but $t\nvDash\bigwedge\Sigma\to\psi$, hence $s\nvDash\circ(\bigwedge\Sigma\to\psi)$. If for any $u'$ with $s'R'u'$ we have $u'\nvDash\bigwedge\Sigma$, then $u'\vDash\bigwedge\Sigma\to\psi$. Since $s\equkw s'$ and $s\vDash\psi$, it follows that $s'\vDash\psi$, thus $s'\vDash\bigwedge\Sigma\to\psi$, hence $s'\vDash\circ(\bigwedge\Sigma\to\psi)$, contrary to $s\equkw s'$ and $s\nvDash\circ(\bigwedge\Sigma\to\psi)$. Therefore there exists $u'$ such that $s'R'u'$ and $u'\vDash\bigwedge\Sigma$. Because $\M'$ is $\LEA$-saturated, for some $t'$ we have $s'R't'$ and $t'\vDash\Gamma$. Furthermore, $t\equkw t'$: given any $\phi\in\LEA$, if $t\vDash\phi$, then $\phi\in\Gamma$, hence $t'\vDash\phi$; if $t\nvDash\phi$, i.e., $t\vDash\neg\phi$, then $\neg\phi\in\Gamma$, hence $t'\vDash\neg\phi$, i.e., $t'\nvDash\phi$, as desired.
\end{proof}

The condition `$\LEA$-saturated' is also indispensable, which can also be illustrated with Example \ref{example.not-m-saturated}. In that example, $\M$ is not $\LEA$-saturated. To see this point, note that the set $\{\neg p_1,\neg p_2,\cdots,\neg p_n\}$ is finitely satisfiable in the successors of $s$, but the set itself is {\em not} satisfiable in the successors of $s$. In the meantime, $(\M,s)\equkw (\M',s')$ but $(\M,s)\not\kwbis(\M',s')$.

\medskip

We have seen from Definition~\ref{def.circ-bis} that the notion of $\circ$-bisimulation is quite different from that of $\Box$-bisimulation. However, it is surprising that the notion of $\circ$-bisimulation contraction is very similar to that of $\Box$-bisimulation contraction, by simply replacing $\Kbis$ with $\kwbis$.
\begin{definition}[$\circ$-bisimulation contraction] Let $\M=\lr{S,R,V}$ be a model. The {\em $\circ$-bisimulation contraction} of $\M$ is the quotient structure $[\M]=\lr{[S],[R],[V]}$ such that
\begin{itemize}
\item $[S]=\{[s]\mid s\in S\}$, where $[s]=\{t\in S\mid s\kwbis t\}$;
\item $[s][R][t]$ iff there exist $s'\in[s]$ and $t'\in[t]$ such that $s'Rt'$\weg{ and $s'\not\kwbis t'$};
\item $[V](p)=\{[s]\mid s\in V(p)\}$ for all $p\in \BP$.
\end{itemize}
\end{definition}

Under this definition, we obtain that the contracted model (via $\kwbis$) is $\circ$-bisimilar to the original one, and that the $\mathcal{S}5$-model property is preserved under $\circ$-bisimulation contraction. For the proof details we refer to Appendix \ref{appendix.proofs}.

\begin{proposition}\label{prop.circ-bis-con}
Let $\M=\lr{S,R,V}$ be a model, and let $[\M]=\lr{[S],[R],[V]}$ be the $\circ$-bisimulation contraction of $\M$. Then for any $s\in S$, we have $([\M],[s])\kwbis(\M,s)$.
\end{proposition}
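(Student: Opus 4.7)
The plan is to exhibit a single $\circ$-bisimulation $Z$ on the disjoint union $\M\uplus[\M]$ that contains $([s],s)$ for every $s\in S$. A first attempt using only the graph $\{(s,[s])\mid s\in S\}$ fails, because a leaf $v\in S$ has no successor to match a non-trivial $[\M]$-successor of $[v]$. The right candidate identifies two elements whenever they name the same $\kwbis$-class: let $\pi:S\cup[S]\to[S]$ send $s\in S$ to $[s]$ and fix every $[s]\in[S]$, and put $xZy$ iff $\pi(x)=\pi(y)$. Then $[s]Zs$ for every $s$, and it suffices to check that $Z$ is a $\circ$-bisimulation on $\M\uplus[\M]$.

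The atomic clause (Inv) is straightforward: if $\pi(x)=\pi(y)=[t]$, then by the definition of $[V]$ and by (Inv) for $\kwbis$ itself, which is a $\circ$-bisimulation on $\M$ by Proposition~\ref{prop.max-bis}, both $x$ and $y$ agree with $t$ on each $p\in\BP$. For ($\circ$-Forth) I split on which side of the disjoint union $x$ and $y$ lie on. If $x,y\in S$, the clause reduces to ($\circ$-Forth) for $\kwbis$ on $\M$. If $x\in S$ and $y=[t]\in[S]$, a successor $r$ of $x$ with $(x,r)\notin Z$ is matched by $[r]$: the definition of $[R]$ gives $[t][R][r]$ using the representatives $x\in[t]$ and $r\in[r]$, and $rZ[r]$ is immediate. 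The self-pair case $x=y=[s]$ is trivial, as any successor of $[s]$ is matched by itself.

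The only step requiring real work is the case $x=[s]\in[S]$, $y=s''\in S$ with $[s]=[s'']$. Suppose $[s]$ has a $[\M]$-successor $[v]$ with $([s],[v])\notin Z$, i.e.\ $[s]\neq[v]$. Unfolding the definition of $[R]$, there exist representatives $s^{\ast}\in[s]$ and $v^{\ast}\in[v]$ with $s^{\ast}Rv^{\ast}$ in $\M$. Crucially, $s^{\ast}\not\kwbis v^{\ast}$, since $s^{\ast}\kwbis v^{\ast}$ would force $[s]=[v]$. Now $s''\kwbis s^{\ast}$, so the ($\circ$-Back) clause for $\kwbis$ applied to the edge $s^{\ast}Rv^{\ast}$ yields some $r\in S$ with $s''Rr$ and $r\kwbis v^{\ast}$; hence $[r]=[v]$ and $rZ[v]$, as required. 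I expect this step to be the main obstacle, since it hinges on the non-equivalence $s^{\ast}\not\kwbis v^{\ast}$ in order to unlock ($\circ$-Back) for $\kwbis$ with a chosen representative $s^{\ast}$ potentially different from $s''$. The verification of ($\circ$-Back) for $Z$ is symmetric, and together these give $([\M],[s])\kwbis(\M,s)$.
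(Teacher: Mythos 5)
Your proof is correct and takes essentially the same approach as the paper: the paper's witnessing relation is $\{([w],w)\mid w\in S\}\cup\{([w],[v])\mid w\kwbis v\}$, and your $Z$ is simply its symmetric closure enlarged with the $\kwbis$-pairs inside $S$. The step you flag as the main obstacle --- passing from the representative edge $s^{\ast}Rv^{\ast}$ to a successor of $s''$ via the ($\circ$-Back) clause of $\kwbis$, unlocked by $s^{\ast}\not\kwbis v^{\ast}$ --- is exactly the paper's ($\circ$-Forth) argument.
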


\begin{proposition}\label{prop.s5-preserved}
Let $\M=\lr{S,R,V}$ be a model, and let $[\M]=\lr{[S],[R],[V]}$ be the $\circ$-bisimulation contraction of $\M$. If $\M$ is an $\mathcal{S}5$-model, then $[\M]$ is also an $\mathcal{S}5$-model.
\end{proposition}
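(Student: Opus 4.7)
The plan is to verify that $[R]$ inherits reflexivity, symmetry, and transitivity from $R$. Reflexivity and symmetry are straightforward. For reflexivity, given any $[s]$, since $sRs$ and $s\in [s]$, the definition of $[R]$ immediately yields $[s][R][s]$. For symmetry, if $[s][R][t]$ is witnessed by $s'\in [s]$ and $t'\in [t]$ with $s'Rt'$, then $t'Rs'$ by symmetry of $R$, which witnesses $[t][R][s]$.

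Transitivity is the main obstacle. Suppose $[s][R][t]$ and $[t][R][u]$. By the definition of $[R]$, there exist $s_1\in [s]$, $t_1,t_2\in [t]$, and $u_1\in [u]$ with $s_1Rt_1$ and $t_2Ru_1$. The two witnesses in $[t]$ need not coincide; we only know $t_1\kwbis t_2$, so one cannot simply chain $s_1Rt_1Ru_1$ via transitivity of $R$. To bridge the gap, I would first observe that $\kwbis$ (restricted to $\M\times\M$) is itself a $\circ$-bisimulation on $\M$: this follows from Proposition~\ref{prop.bis-union} extended to arbitrary unions, since $\kwbis$ is the union of all $\circ$-bisimulations on $\M$.

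The argument then splits into two cases. If $t_2\kwbis u_1$, then since $\kwbis$ is an equivalence relation (Proposition~\ref{prop.max-bis}) we have $[t]=[u]$, and $[s][R][u]$ follows from $[s][R][t]$. Otherwise $t_2\not\kwbis u_1$, i.e.\ $(t_2,u_1)\notin{\kwbis}$. Applying the ($\circ$-Back) clause of $\kwbis$ to $t_1\kwbis t_2$ together with $t_2Ru_1$, we obtain some $u_2$ with $t_1Ru_2$ and $u_2\kwbis u_1$. Then $s_1Rt_1$ and $t_1Ru_2$ combine, via transitivity of $R$, to yield $s_1Ru_2$; since $s_1\in [s]$ and $u_2\in [u_1]=[u]$, this witnesses $[s][R][u]$. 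The delicate step I expect to be most subtle is exactly this invocation of ($\circ$-Back) to \emph{reroute} the $R$-edge from $t_2$ to a parallel edge out of $t_1$, since this is what allows the two $[R]$-steps to be composed without the bisimilar witnesses ever being forced to be equal.
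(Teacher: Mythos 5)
Your proposal is correct and follows essentially the same route as the paper's proof: the paper also dismisses reflexivity and symmetry as immediate, splits transitivity into the cases $t_2\kwbis u_1$ (giving $[t]=[u]$) and $t_2\not\kwbis u_1$, and in the latter uses the ($\circ$-Back) clause of $\kwbis$ to reroute the edge $t_2Ru_1$ to an edge $t_1Ru_2$ with $u_2\kwbis u_1$ before applying transitivity of $R$. Your extra remark justifying that $\kwbis$ is itself a $\circ$-bisimulation is a welcome bit of added care but does not change the argument.
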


\section{Characterization Results}\label{sec.char}

As $\vDash\circ\phi\leftrightarrow(\phi\to\Box\phi)$, the logic of essence and accident can be seen as a fragment of standard modal logic, and thus also a fragment of first-order logic. In this section we characterize the logic of essence and accident within standard modal logic and within first-order logic. To make our exposition self-contained, we introduce some definitions and results from e.g. \cite{blackburnetal:2001} without proofs, refer to Appendix \ref{appendix.preliminaries}.

Since $\LEA$ can be viewed as a fragment of $\ML$, every $\LEA$-formula can be seen as an $\ML$-formula. By Proposition~\ref{prop.equk}, we have

\begin{lemma}\label{lem.equkw}
Let $\M$ be a model and $s\in\M$. Then $ue(\M)$ is $\LEA$-saturated and $(\M,s)\equkw(ue(\M),\pi_s)$.
\end{lemma}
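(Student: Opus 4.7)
The plan is to reduce both claims to the corresponding modal-logic facts packaged in Proposition~\ref{prop.equk}, which (presumably) asserts in the standard way that $ue(\M)$ is $\ML$-saturated and that $(\M,s)\equk(ue(\M),\pi_s)$. The bridge to $\LEA$ is the truth-preserving translation $t:\LEA\to\ML$ introduced in the proof of Proposition~\ref{prop.lessexp}, where in particular $t(\circ\phi)=t(\phi)\to\Box t(\phi)$; for every pointed model $(\N,w)$ and every $\phi\in\LEA$ one has $\N,w\vDash\phi$ iff $\N,w\vDash t(\phi)$.

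For the $\circ$-equivalence part I would argue directly: given any $\phi\in\LEA$, Proposition~\ref{prop.equk} tells us that $(\M,s)$ and $(ue(\M),\pi_s)$ agree on the $\ML$-formula $t(\phi)$, and by truth-preservation of $t$ they therefore agree on $\phi$ itself. Hence $(\M,s)\equkw(ue(\M),\pi_s)$.

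For the $\LEA$-saturation part, fix a world $w$ in $ue(\M)$ and a set $\Gamma\subseteq\LEA$ every finite subset of which is satisfied in some successor of $w$. Consider $t[\Gamma]=\{t(\phi)\mid\phi\in\Gamma\}\subseteq\ML$. By truth-preservation of $t$, every finite subset of $t[\Gamma]$ is satisfied in some successor of $w$ too. Applying $\ML$-saturation of $ue(\M)$ to $t[\Gamma]$, there exists a single successor $u$ of $w$ with $ue(\M),u\vDash t(\phi)$ for every $\phi\in\Gamma$; one more invocation of truth-preservation of $t$ yields $ue(\M),u\vDash\phi$ for all $\phi\in\Gamma$, which is precisely what $\LEA$-saturation of $ue(\M)$ at $w$ demands.

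I do not expect a genuine obstacle: the lemma is essentially a corollary of the standard ultrafilter-extension theorem for $\ML$ together with the embedding $t$ of $\LEA$ into $\ML$. The only point worth handling carefully is the elementary bookkeeping that truth-preservation of $t$ at \emph{every} pointed model (including all successors of $w$ in $ue(\M)$) lets one transfer finite satisfiability back and forth between $\Gamma$ and $t[\Gamma]$; beyond that, both halves of the statement follow line by line from Proposition~\ref{prop.equk}.
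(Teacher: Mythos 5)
Your proposal is correct and takes essentially the same route as the paper: the paper derives the lemma in one line by viewing $\LEA$ as a fragment of $\ML$ via the translation $t$ and invoking Proposition~\ref{prop.equk}, and your argument is just a careful spelling-out of that reduction (transferring both $\ML$-saturation and $\equk$ through the truth-preserving map $t$). No gaps; the bookkeeping you flag about applying truth-preservation of $t$ at every successor is exactly the right point to be careful about, and it goes through.
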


From Lemma \ref{lem.equkw} and Proposition~\ref{prop.delta-saturated}, it follows that
\begin{lemma}\label{lem.ue}
Let $(\M,s)$ and $(\N,t)$ be pointed models. Then $(\M,s)\equkw(\N,t)$ implies $(ue(\M),\pi_s)\kwbis(ue(\N),\pi_t)$.
\end{lemma}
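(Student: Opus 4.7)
The plan is to chain together the two ingredients already in place, namely Lemma~\ref{lem.equkw} and Proposition~\ref{prop.delta-saturated}. The statement is essentially saying: ultrafilter extensions turn $\circ$-equivalence into $\circ$-bisimilarity, and the proof should be a short transfer argument rather than a new technical construction.

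First I would apply Lemma~\ref{lem.equkw} to both $\M$ and $\N$, obtaining $(\M,s)\equkw(ue(\M),\pi_s)$ and $(\N,t)\equkw(ue(\N),\pi_t)$, together with the fact that $ue(\M)$ and $ue(\N)$ are $\LEA$-saturated. Next I would use the assumption $(\M,s)\equkw(\N,t)$ and transitivity of $\equkw$ (which is immediate from its definition as equivalence of $\LEA$-theories) to conclude $(ue(\M),\pi_s)\equkw(ue(\N),\pi_t)$.

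At this point both models are $\LEA$-saturated, so the Hennessy--Milner-style result for saturated models (Proposition~\ref{prop.delta-saturated}) applies and yields $(ue(\M),\pi_s)\kwbis(ue(\N),\pi_t)$, which is exactly what is required.

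There is no real obstacle here once Lemma~\ref{lem.equkw} and Proposition~\ref{prop.delta-saturated} are in place; the only thing to double-check is that $\equkw$ composes across distinct models (which it trivially does, since it is defined purely in terms of satisfaction of $\LEA$-formulas at a pointed model), so the three instances of $\equkw$ obtained above can indeed be chained into a single $\circ$-equivalence between $(ue(\M),\pi_s)$ and $(ue(\N),\pi_t)$.
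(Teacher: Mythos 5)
Your argument is exactly the one the paper intends: it derives the lemma by combining Lemma~\ref{lem.equkw} (applied to both models) with Proposition~\ref{prop.delta-saturated}, chaining the $\equkw$-instances by transitivity. The proposal is correct and matches the paper's approach.
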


We are now close to prove two characterization results: the logic of essence and accident is the $\circ$-bisimulation-invariant fragment of standard modal logic and of first-order logic. In the following, by an $\ML$-formula $\phi$ (resp. a first-order formula $\alpha$) {\em is invariant under $\circ$-bisimulation}, we mean for any models $(\M,s)$ and $(\N,t)$, if $(\M,s)\kwbis(\N,t)$, then $\M,s\vDash\phi$ iff $\N,t\vDash\phi$ (resp. $\M,s\vDash \alpha$ iff $\N,t\vDash\alpha$).
\begin{theorem}\label{thm.delta-characterization}
An $\ML$-formula is equivalent to an $\LEA$-formula iff it is invariant under $\circ$-bisimulation.
\end{theorem}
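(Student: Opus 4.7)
The plan is to follow the standard van Benthem-style characterization argument, adapted to the $\circ$-bisimulation setting. The easy direction is immediate: if an $\ML$-formula $\phi$ is equivalent to some $\LEA$-formula $\psi$, then Proposition~\ref{prop.invariance} gives that $\psi$ is invariant under $\kwbis$, and hence so is $\phi$. All the work lies in the converse.

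For the hard direction, suppose $\phi \in \ML$ is invariant under $\circ$-bisimulation. I would define the set of $\LEA$-consequences
\[ T(\phi) = \{ \psi \in \LEA \mid \phi \vDash \psi \}, \]
and aim to show $T(\phi) \vDash \phi$. Once that is established, the compactness theorem (obtained by translating $\ML$ into first-order logic via the standard translation) produces a finite subset whose conjunction entails $\phi$; since each conjunct is already entailed by $\phi$, the conjunction is the desired $\LEA$-equivalent.

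To prove $T(\phi) \vDash \phi$, take a pointed model $(\M,s) \vDash T(\phi)$. Let $\Sigma$ be the set of $\LEA$-formulas true at $(\M,s)$. Then $\Sigma \cup \{\phi\}$ is finitely satisfiable: otherwise some conjunction $\psi_1 \wedge \cdots \wedge \psi_n$ of members of $\Sigma$ would satisfy $\phi \vDash \neg(\psi_1 \wedge \cdots \wedge \psi_n)$, placing $\neg(\psi_1 \wedge \cdots \wedge \psi_n)$ in $T(\phi)$ and contradicting $(\M,s) \vDash T(\phi)$. Compactness therefore yields a pointed model $(\N,t)$ with $(\N,t) \vDash \phi$ and $(\M,s) \equkw (\N,t)$. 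Passing to ultrafilter extensions and applying Lemma~\ref{lem.ue} gives $(ue(\M), \pi_s) \kwbis (ue(\N), \pi_t)$. Since ultrafilter extensions preserve $\ML$-formulas (via the embedding $\pi$), we have $(ue(\N), \pi_t) \vDash \phi$, and invariance under $\kwbis$ transports this to $(ue(\M), \pi_s) \vDash \phi$; applying the modal preservation of $ue$ once more yields $(\M,s) \vDash \phi$, as required.

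The main obstacle is the detour through ultrafilter extensions: the compactness step only produces elementary equivalence in the $\LEA$-fragment, which is far weaker than $\circ$-bisimilarity in general (cf.\ Example~\ref{example.not-m-saturated}), so one must pass to $\LEA$-saturated models before invoking Proposition~\ref{prop.delta-saturated}. The saturation machinery encapsulated in Lemma~\ref{lem.equkw} and Lemma~\ref{lem.ue} is exactly what bridges this gap, and the surrounding argument is then routine bookkeeping. A minor point to verify is that the standard modal preservation property of $ue$ applies here without modification, since $\phi$ is an ordinary $\ML$-formula.
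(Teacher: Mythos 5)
Your proposal is correct and follows essentially the same route as the paper: the set $T(\phi)$ of $\LEA$-consequences plays the role of the paper's $MOC(\phi)$ (the paper merely works with the $\ML$-translations $t(\psi)$ rather than the $\LEA$-formulas themselves, which is immaterial), and the detour through ultrafilter extensions via Lemma~\ref{lem.equkw}, Lemma~\ref{lem.ue} and Proposition~\ref{prop.delta-saturated} is exactly the paper's argument. Your closing remark that one needs the ordinary modal preservation property of $ue$ for the $\ML$-formula $\phi$ (i.e.\ Proposition~\ref{prop.equk}) is well taken --- the paper in fact cites Lemma~\ref{lem.equkw} at that step, where Proposition~\ref{prop.equk} is what is really being used.
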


\begin{proof}
Based on Proposition~\ref{prop.invariance}, we need only show that the direction from right to left. For this, suppose that an $\ML$-formula $\phi$ is invariant under $\circ$-bisimulation.

Let $MOC(\phi)=\{t(\psi)\mid \psi\in\LEA,\phi\vDash t(\psi)\}$, where $t$ is a translation function which recursively translates every $\LEA$-formula into the corresponding $\ML$-formulas; in particular, $t(\circ\psi)=t(\psi)\to\Box t(\psi)$.

If we can show that $MOC(\phi)\vDash\phi$, then by Compactness Theorem of modal logic, there exists a finite set $\Gamma\subseteq MOC(\phi)$ such that $\bigwedge\Gamma\vDash \phi$, i.e., $\vDash\bigwedge\Gamma\to \phi$. Besides, the definition of $MOC(\phi)$ implies that $\phi\vDash\bigwedge\Gamma$, i.e., $\vDash \phi\to\bigwedge\Gamma$, and thus $\vDash\bigwedge\Gamma\lra\phi$. Since every $\gamma\in\Gamma$ is a translation of an $\LEA$-formula, so is $\Gamma$. Then we are done.

Assume that $\M,s\vDash MOC(\phi)$, to show that $\M,s\vDash\phi$. Let $\Sigma=\{t(\psi)\mid \psi\in\LEA,\M,s\vDash t(\psi)\}$. We now claim $\Sigma\cup\{\phi\}$ is satisfiable: otherwise, by Compactness Theorem of modal logic again, there exists finite $\Sigma'\subseteq \Sigma$ such that $\phi\vDash\neg\bigwedge\Sigma'$, thus $\neg\bigwedge\Sigma'\in MOC(\phi)$. By assumption, we obtain $\M,s\vDash\neg\bigwedge\Sigma'$. However, the definition of $\Sigma$ and $\Sigma'\subseteq \Sigma$ implies $\M,s\vDash\bigwedge\Sigma'$, contradiction.

Thus we may assume that $\N,t\vDash\Sigma\cup\{\phi\}$. We can show $(\M,s)\equkw(\N,t)$ as follows: for any $\psi\in\LEA$, if $\M,s\vDash\psi$, then $\M,s\vDash t(\psi)$, and then $t(\psi)\in\Sigma$, thus $\N,t\vDash t(\psi)$, hence $\N,t\vDash\psi$; if $\M,s\nvDash\psi$, i.e., $\M,s\vDash\neg\psi$, then $\M,s\vDash t(\neg\psi)$, and then $t(\neg\psi)\in\Sigma$, thus $\N,t\vDash t(\neg\psi)$, hence $\N,t\vDash\neg\psi$, i.e. $\N,t\nvDash\psi$.

We now construct the ultrafilter extensions of $\M$ and $\N$, denoted by $ue(\M)$ and $ue(\N)$, respectively. According to the fact that $(\M,s)\equkw(\N,t)$ and Lemma \ref{lem.ue}, we have $(ue(\M),\pi_s)\kwbis(ue(\N),\pi_t)$. Since $\N,t\vDash\phi$, by Lemma \ref{lem.equkw}, we have $ue(\N),\pi_t\vDash\phi$. From supposition it follows that $ue(\M),\pi_s\vDash\phi$. Using Lemma \ref{lem.equkw} again, we conclude that $\M,s\vDash\phi$.
\end{proof}

\begin{theorem}
A first-order formula is equivalent to an $\LEA$-formula iff it is invariant under $\circ$-bisimulation.
\end{theorem}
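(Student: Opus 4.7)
The plan is to mirror the proof of Theorem~\ref{thm.delta-characterization} (the $\ML$-characterization), but to replace the modal ultrafilter extension with an $\omega$-saturated elementary extension, so that the full first-order theory of the starting model is preserved. The forward direction is easy: compose the translation $t$ from the proof of Proposition~\ref{prop.lessexp} with the usual standard translation of $\ML$ into first-order logic to obtain a standard translation $ST$ from $\LEA$ into first-order logic in one free variable; invariance of $ST(\psi)$ under $\circ$-bisimulation is then immediate from Proposition~\ref{prop.invariance}.

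For the converse, let $\alpha(x)$ be a first-order formula invariant under $\circ$-bisimulation and set $MOC(\alpha) = \{ST(\psi) \mid \psi \in \LEA,\ \alpha \vDash ST(\psi)\}$. If I can show $MOC(\alpha) \vDash \alpha$, then first-order compactness produces a finite $\Gamma \subseteq MOC(\alpha)$ with $\vDash \bigwedge\Gamma \lra \alpha$, and the left-hand side is (the translation of) an $\LEA$-formula, so we are done. To establish $MOC(\alpha) \vDash \alpha$, fix $(\M,s) \vDash MOC(\alpha)$ and let $\Sigma = \{ST(\psi) \mid \M,s \vDash ST(\psi)\}$ be the (translated) $\LEA$-type of $s$. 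A standard compactness argument---if some finite $\Sigma' \subseteq \Sigma$ satisfied $\alpha \vDash \neg\bigwedge\Sigma'$, then $\neg\bigwedge\Sigma' \in MOC(\alpha)$, contradicting $\M,s \vDash \Sigma'$---yields a pointed model $(\N,t) \vDash \Sigma \cup \{\alpha\}$. By construction, $(\M,s) \equkw (\N,t)$.

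The final step is to upgrade $\circ$-equivalence to $\circ$-bisimilarity. Take $\omega$-saturated elementary extensions $\M^* \succeq \M$ and $\N^* \succeq \N$; these exist by standard model-theoretic constructions. Once I verify that such $\M^*$ and $\N^*$ are $\LEA$-saturated in the paper's sense, Proposition~\ref{prop.delta-saturated} immediately delivers $(\M^*,s) \kwbis (\N^*,t)$. Since elementary equivalence transfers $\alpha$ from $(\N,t)$ to $(\N^*,t)$, $\circ$-bisimulation invariance transfers $\alpha$ from $(\N^*,t)$ to $(\M^*,s)$, and elementary equivalence transfers it back to $(\M,s)$, yielding $\M,s \vDash \alpha$ as required.

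The main obstacle will be confirming that $\omega$-saturation implies the paper's notion of $\LEA$-saturation: the latter concerns $\LEA$-types realized in successors of a distinguished world, whereas $\omega$-saturation speaks of arbitrary first-order types over finite parameter sets. The transfer is routine---add the current world as a parameter and consider, for a finitely satisfiable $\Gamma \subseteq \LEA$, the first-order type $\{R(c,y)\} \cup \{ST(\psi)(y) \mid \psi \in \Gamma\}$ in the free variable $y$ over the parameter $c$ naming $s$---but it is the one place where genuine model-theoretic input beyond the van Benthem template is needed. Everything else faithfully adapts the classical argument, with Proposition~\ref{prop.delta-saturated} playing the Hennessy--Milner role in place of the modal version.
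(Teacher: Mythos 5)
Your proposal is correct, but it takes a genuinely different route from the paper. The paper's proof is a short reduction: since $\Box$-bisimilarity implies $\circ$-bisimilarity (Proposition~\ref{prop.kbisvskwbis}), any first-order formula invariant under $\circ$-bisimulation is a fortiori invariant under $\Box$-bisimulation; van Benthem's theorem (Proposition~\ref{thm.vanbenthem}) then yields an equivalent $\ML$-formula, which is still invariant under $\circ$-bisimulation and hence, by Theorem~\ref{thm.delta-characterization}, equivalent to an $\LEA$-formula. You instead run the classical van Benthem-style argument directly at the first-order level: detour through $MOC(\alpha)$, compactness, and an upgrade of $\equkw$ to $\kwbis$ via $\omega$-saturated elementary extensions together with Proposition~\ref{prop.delta-saturated}. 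Your identification of the one nontrivial verification---that $\omega$-saturation entails the paper's notion of $\LEA$-saturation via the type $\{R(c,y)\}\cup\{ST(\psi)(y)\mid\psi\in\Gamma\}$---is exactly right, and that step goes through as you describe. What the paper's route buys is brevity and the reuse of two already-available black boxes (van Benthem and the $\ML$-characterization); what your route buys is independence from both of them---in particular, your argument would survive even if one only had the first-order invariance hypothesis and the saturation-based Hennessy--Milner result, and it makes the proof self-contained modulo standard model theory rather than modulo modal correspondence theory. Both are valid; the paper's observation that $\circ$-invariance is a weakening of $\Box$-invariance is the small trick your proof does not need.
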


\begin{proof}
Based on Proposition~\ref{prop.invariance}, we need only show the direction from right to left. For this, suppose that a first-order formula $\alpha$ is invariant under $\circ$-bisimulation, then by Proposition~\ref{prop.kbisvskwbis}, we have that $\alpha$ is also invariant under $\Box$-bisimulation. From van Benthem Characterization Theorem (Proposition~\ref{thm.vanbenthem}), it follows that $\alpha$ is equivalent to an $\ML$-formula $\phi$. From this and supposition, it follows that $\phi$ is invariant under $\circ$-bisimulation. By Theorem~\ref{thm.delta-characterization}, $\phi$ is equivalent to an $\LEA$-formula. Therefore, $\alpha$ is equivalent to an $\LEA$-formula.
\end{proof}

\section{Axiomatizations}\label{sec.axiomatization}\label{sec.axiomatizations}

This section deals with the axiomatization for the logic $\LEA$ over various classes of frames. We first handle the minimal system.

\begin{definition} [Axiomatic system \SLEA] The axiomatic system \SLEA\ consists of all propositional tautologies (\TAUT), uniform substitution (\SUB), modus ponens (\MP), plus the following axioms and inference rule:
\[
\begin{array}{ll}
\KwTop& \circ\top\\

\EquiKw&\neg p\to\circ p\\

\KwCon&\circ p\land\circ q\to\circ (p\land q)\\

\R&\text{From }\phi\to\psi \text{ infer }\circ\phi\land\phi\to\circ\psi\\
\end{array} \]

A {\em derivation} from $\Gamma$ to $\phi$ in $\SLEA$, notation: $\Gamma\vdash_{\SLEA}\phi$, is a finite sequence of $\LEA$-formulas in which each formula is either an instantiation of an axiom, or an element of $\Gamma$, or the result of applying an inference rule to prior formulas in the sequence. Formula $\phi$ is {\em provable} in $\SLEA$, or a {\em theorem}, notation: $\vdash\phi$, if there is a derivation from the empty set $\emptyset$ to $\phi$ in $\SLEA$.
\end{definition}

Intuitively, Axiom $\KwTop$ says that tautologies are not accidentally true (i.e. $\neg\bullet\top$); Axiom $\EquiKw$ says that whatever is accidentally true is always true (i.e. $\bullet p\to p$); Axiom $\KwCon$ says that if the conjunction is accidentally true, then at least one conjunct thereof is accidentally true (i.e. $\bullet(p\land q)\to\bullet p\vee\bullet q$); Rule $\R$ stipulates the {\em almost} monotonicity of the essence operator.

When it comes to completeness, any of the axioms $\KwTop,\EquiKw$ and $\KwCon$ is indispensable in the system $\SLEA$, otherwise the subsystems will be {\em not} complete. As for Axiom $\KwTop$, define a nonstandard semantics $\Vdash$ as $\vDash$, except that all formulas of the form $\circ\phi$ are interpreted as $\neg\phi$. We can check under this semantics, $\SLEA-\KwTop$ is sound, but $\KwTop$ is not valid, which means that $\KwTop$ is not provable in $\SLEA-\KwTop$. However, $\KwTop$ is valid under the standard semantics $\vDash$. Therefore, $\SLEA-\KwTop$ is {\em  not} complete with respect to the semantics $\vDash$. As for Axiom $\EquiKw$, define another nonstandard semantics $\Vvdash$ as $\vDash$, except that all formulas of the form $\circ\phi$ are interpreted as $\phi$. One can show that under the semantics $\Vvdash$, the subsystem $\SLEA-\EquiKw$ is sound, but $\EquiKw$ is {\em not} valid. Thus there is a validity (i.e. Axiom $\EquiKw$) under the standard semantics $\vDash$, which is unprovable in $\SLEA-\EquiKw$, and hence $\SLEA-\EquiKw$ is {\em not} complete with respect to the semantics $\vDash$.

As to the indispensability of Axiom $\KwCon$, the situation is more complicated. For this, we need to switch the interpretations of $\circ$ and $\bullet$, and the soundness of a system is defined as ``all of the theorems involved in the occurrence of $\circ$ are invalid'', where the notion of validity is defined as usual (see Definition~\ref{def.semantics}). Then one can check that under this specification, the subsystem $\SLEA-\KwCon$ is sound, but Axiom $\KwCon$ is valid, thus $\KwCon$ is {\em not} provable in $\SLEA-\KwCon$. However, $\KwCon$ is valid under the semantics $\vDash$, hence $\SLEA-\KwCon$ is {\em not} complete with respect to the semantics $\vDash$.

From the indispensability of the axioms $\KwTop,\EquiKw$ and $\KwCon$, we have also shown that all of the three axioms are {\em independent} in the system $\SLEA$.

Note that our axiomatic system $\SLEA$ is equivalent to, but slightly different from Steinsvold's $B_K$ in \cite{Steinsvold:2008}. We can show easily that $\SLEA$ is sound with respect to the class of all frames. \weg{It is not hard to show that $\SLEA$ contains the same theorems as $B_K$.}

Using Axiom $\KwCon$ and Rule $\SUB$, we can show by induction on $n\in\mathbb{N}$ that
\begin{proposition}\label{prop.circ-conj}
$\vdash\circ\phi_1\land\cdots\land\circ\phi_n\to\circ(\phi_1\land\cdots\land\phi_n)$.
\end{proposition}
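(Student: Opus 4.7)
The plan is to proceed by induction on $n$, using Axiom $\KwCon$ together with the rule $\SUB$ of uniform substitution, exactly as the paper hints.

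For the base cases, $n = 1$ is immediate (the formula becomes $\circ \phi_1 \to \circ \phi_1$, a propositional tautology), and $n = 2$ is the instance of $\KwCon$ obtained by substituting $p \mapsto \phi_1$ and $q \mapsto \phi_2$ via $\SUB$, followed by $\MP$ with propositional reasoning.

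For the inductive step, assume
\[
\vdash \circ \phi_1 \land \cdots \land \circ \phi_n \to \circ(\phi_1 \land \cdots \land \phi_n).
\]
Apply $\SUB$ to Axiom $\KwCon$ with $p \mapsto \phi_1 \land \cdots \land \phi_n$ and $q \mapsto \phi_{n+1}$ to obtain
\[
\vdash \circ(\phi_1 \land \cdots \land \phi_n) \land \circ \phi_{n+1} \to \circ(\phi_1 \land \cdots \land \phi_n \land \phi_{n+1}).
\]
Conjoining the induction hypothesis with the trivial $\vdash \circ \phi_{n+1} \to \circ \phi_{n+1}$ and chaining through the substituted instance of $\KwCon$ using $\TAUT$ and $\MP$, we arrive at
\[
\vdash \circ \phi_1 \land \cdots \land \circ \phi_n \land \circ \phi_{n+1} \to \circ(\phi_1 \land \cdots \land \phi_n \land \phi_{n+1}),
\]
which closes the induction.

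There is no real obstacle here; the only slightly non-obvious point is remembering that $\SUB$ is needed to lift the schematic $\KwCon$ (stated for the atoms $p,q$) to the compound formulas $\phi_1 \land \cdots \land \phi_n$ and $\phi_{n+1}$, after which the propositional glue is entirely routine.
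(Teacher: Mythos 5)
Your proof is correct and follows exactly the route the paper indicates (the paper itself gives only the one-line hint ``Using Axiom $\KwCon$ and Rule $\SUB$, we can show by induction on $n$'' and omits the details): induction on $n$ with the inductive step discharged by the $\SUB$-instance of $\KwCon$ at $p\mapsto\phi_1\land\cdots\land\phi_n$, $q\mapsto\phi_{n+1}$, glued together propositionally. Nothing is missing.
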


We are now ready to build the canonical model for $\SLEA$.

\weg{\begin{lemma}[Truth Lemma]\label{lem.truthlem} Let $s\in S^c$ and $\phi\in\LEA$. We have
$$\M^c,s\vDash\phi\Longleftrightarrow\phi\in s.$$
\end{lemma}

\begin{proof}
See Appendix~\ref{appendix.proofs} for proof details.
\end{proof}

Based on Lemma \ref{lem.truthlem}, it is a standard exercise to show that
\begin{theorem}[Completeness of $\SLEA$]\label{thm.comp-k}
$\SLEA$ is sound and strongly complete with respect to the class of all frames.
\end{theorem}

Note that the canonical relation $R^c$ is reflexive, thus $R^c$ is also serial.
\begin{theorem}[Completeness of $\SLEA$ over $\mathcal{T}$-frames]
$\SLEA$ is sound and strongly complete with respect to the class of $\mathcal{T}$-frames.
\end{theorem}

\begin{theorem}[Completeness of $\SLEA$ over $\mathcal{D}$-frames]
$\SLEA$ is sound and strongly complete with respect to the class of $\mathcal{D}$-frames.
\end{theorem}}

\begin{definition}[Canonical model for $\SLEA$]\label{def.cm-k} The model $\M^c=\lr{S^c,R^c,V^c}$ is the {\em canonical model} of $\SLEA$, where
\begin{itemize}
\item $S^c=\{s\mid s\text{ is a maximal consistent set for }\SLEA\}$;
\item For any $s,t\in S^c$, $sR^ct$ iff (for all $\phi\in\LEA$, if $\circ\phi\land\phi\in s$, then $\phi\in t$) and $s\neq t$;
\item $V^c(p)=\{s\in S^c\mid p\in s\}$.
\end{itemize}
\end{definition}

The canonical model here is not reflexive, which is consistent with the semantics of $\circ$, in contrast to the definition in \cite{Steinsvold:2008} (see Section~\ref{sec.comparison}).

\begin{lemma}[Truth Lemma]\label{lem.truthlem} Let $s\in S^c$ and $\phi\in\LEA$. We have
$$\M^c,s\vDash\phi\Longleftrightarrow\phi\in s.$$
\end{lemma}

\begin{proof}
By induction on $\phi$. The only nontrivial case is $\circ\phi$, that is to show, $\M^c,s\vDash\circ\phi\Longleftrightarrow\circ\phi\in s$.

`$\Longleftarrow$': Suppose towards contradiction that $\circ\phi\in s$ but $\M^c,s\nvDash\circ\phi$, then $s\vDash\phi$ but there is a $t\in S^c$ with $sR^ct$ and $t\nvDash\phi$. By the induction hypothesis, we have $\phi\in s$ but $\phi\notin t$. Thus $\circ\phi\land\phi\in s$. Since $sR^ct$, we obtain $\phi\in t$, contradiction.

`$\Longrightarrow$': Suppose $\circ\phi\notin s$, to show $\M^c,s\nvDash\circ\phi$. By the induction hypothesis, we need only show that $\phi\in s$ but there is a $t\in S^c$ with $sR^ct$ and $\neg\phi\in t$. First, $\phi\in s$ follows from the supposition $\circ\phi\notin s$, Axiom $\EquiKw$ and Rule $\SUB$. Besides, we show that the set $\{\psi\mid \circ\psi\land\psi\in s\}\cup\{\neg\phi\}$ is consistent.

The proof proceeds as follows: if the set is not consistent, then there exist $\psi_1,\cdots,\psi_n\in\{\psi\mid \circ\psi\land\psi\in s\}$\footnote{Note that Axiom $\KwTop$ provides the nonempty of the set $\{\psi\mid \circ\psi\land\psi\in s\}$.} such that $\vdash\psi_1\land\cdots\land\psi_n\to\phi$. Using Rule $\R$, we get $\vdash\circ(\psi_1\land\cdots\land\psi_n)\land(\psi_1\land\cdots\land\psi_n)\to\circ\phi$. From this and Proposition~\ref{prop.circ-conj} follows that $\vdash\circ\psi_1\land\cdots\land\circ\psi_n\land(\psi_1\land\cdots\land\psi_n)\to\circ\phi$. Since $\circ\psi_i\land\psi_i\in s$ for all $i\in[1,n]$, we have $\circ\phi\in s$, contrary to the supposition.

We have thus shown that $\{\psi\mid \circ\psi\land\psi\in s\}\cup\{\neg\phi\}$ is consistent. By Lindenbaum's Lemma, there is a $t\in S^c$ such that $\{\psi\mid \circ\psi\land\psi\in s\}\cup\{\neg\phi\}\subseteq t$. Since $\phi\in s$ but $\phi\notin t$, we obtain $s\neq t$. Thus $sR^ct$ and $\neg\phi\in t$, as desired.
\end{proof}

Based on Lemma \ref{lem.truthlem}, it is a standard exercise to show that
\begin{theorem}[Completeness of $\SLEA$ over $\mathcal{K}$-frames]\label{thm.comp-k}
$\SLEA$ is sound and strongly complete with respect to the class of all frames.
\end{theorem}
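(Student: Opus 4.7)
The plan is to prove the two halves of the theorem separately in the standard way, with the Truth Lemma (Lemma~\ref{lem.truthlem}) doing all of the real work for completeness.

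For soundness, I would check that each of the four axioms is valid on every frame and that each inference rule preserves validity on every frame. Axioms \TAUT, \SUB, \MP\ are routine. Axiom \KwTop\ is immediate since $\top$ holds everywhere, so the conditional in the semantic clause for $\circ$ is discharged trivially. Axiom \EquiKw\ follows because the clause $\M,s\vDash\circ p$ only requires anything when $\M,s\vDash p$, so if $\M,s\vDash\neg p$ then $\M,s\vDash\circ p$ vacuously. Axiom \KwCon\ follows by observing that if $\M,s\vDash p\land q$, then $\M,s\vDash p$ and $\M,s\vDash q$; from $\circ p,\circ q$ at $s$ all successors satisfy both $p$ and $q$, hence $p\land q$. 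For Rule \R, assume $\vDash\phi\to\psi$ and $\M,s\vDash\circ\phi\land\phi$: then $\M,s\vDash\psi$ and every successor satisfies $\phi$, hence $\psi$, so $\M,s\vDash\circ\psi$.

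For strong completeness, I would follow the usual Henkin strategy. Given any $\SLEA$-consistent set $\Gamma\subseteq\LEA$, apply Lindenbaum's Lemma (proved in the usual way from \TAUT, \MP, and \SUB) to extend $\Gamma$ to a maximal $\SLEA$-consistent set $s\in S^c$. By Lemma~\ref{lem.truthlem}, for every $\phi\in\Gamma$ we have $\M^c,s\vDash\phi$, so $\Gamma$ is satisfied in the canonical model at the world $s$. Since $\M^c$ is a model on the underlying canonical frame, this shows that every $\SLEA$-consistent set is satisfiable on the class of all frames, which is the contrapositive of strong completeness.

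The one potential subtlety worth flagging is that the canonical relation $R^c$ in Definition~\ref{def.cm-k} is explicitly forced to be irreflexive by the clause $s\neq t$, so Lindenbaum's Lemma is being applied here only to secure states, not to do any frame-class engineering; the frame class in question is the arbitrary one, so no additional closure properties of $R^c$ need to be verified. I expect no real obstacle beyond carefully citing Lemma~\ref{lem.truthlem}, since the nontrivial Henkin-style maneuver—namely consistency of $\{\psi\mid\circ\psi\land\psi\in s\}\cup\{\neg\phi\}$ via Proposition~\ref{prop.circ-conj} and Rule \R—has already been discharged inside the Truth Lemma itself.
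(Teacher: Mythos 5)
Your proposal is correct and matches the paper's intent exactly: the paper itself dismisses this theorem with ``Based on Lemma~\ref{lem.truthlem}, it is a standard exercise,'' and your write-up (soundness by direct verification of the axioms and Rule \R, completeness via Lindenbaum's Lemma plus the Truth Lemma) is precisely that standard exercise, with the correct observation that no frame-closure properties of $R^c$ need to be checked for the class of all frames. No gaps.
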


The same story goes with $\SLEA$ and the class of serial frames. But note that $\M^c$ is not necessarily serial. Thus we need to transform $\M^c$ into a serial model, in the meanwhile the truth-values of $\LEA$-formulas should be preserved.

\begin{theorem}[Completeness of $\SLEA$ over $\mathcal{D}$-frames]\label{thm.comp-d}
$\SLEA$ is sound and strongly complete with respect to the class of $\mathcal{D}$-frames.
\end{theorem}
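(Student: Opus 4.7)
The plan is to reuse the canonical model $\M^c$ of Definition~\ref{def.cm-k}, whose Truth Lemma (Lemma~\ref{lem.truthlem}) already does the heavy lifting, and then fix seriality with a minimal patch that does not disturb the truth values of $\LEA$-formulas. Soundness is immediate, since serial frames form a subclass of all frames and the soundness half of Theorem~\ref{thm.comp-k} applies. For completeness, I would take an $\SLEA$-consistent set $\Gamma$, extend it via Lindenbaum to some $s\in S^c$, obtain $\M^c,s\vDash\Gamma$ from Lemma~\ref{lem.truthlem}, and then repair the fact that $R^c$ is not guaranteed to be serial---nothing in its definition rules out states $u$ with $R^c(u)=\emptyset$, and the clause $s\neq t$ actually forbids such a $u$ from rescuing itself by being its own successor.

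The repair I propose is to form $\M^{c*}=\lr{S^c,R^{c*},V^c}$ where $R^{c*}=R^c\cup\{(u,u)\mid R^c(u)=\emptyset\}$, which is patently serial, and then establish a transfer lemma: for every $u\in S^c$ and every $\phi\in\LEA$, $\M^c,u\vDash\phi$ iff $\M^{c*},u\vDash\phi$. Combined with Lemma~\ref{lem.truthlem} this gives $\M^{c*},s\vDash\Gamma$ on a serial frame, completing the proof.

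The one step requiring actual work is the $\circ\phi$ case of the inductive transfer lemma, and this is where the conceptual content sits. For a non-dead-end $u$ we have $R^{c*}(u)=R^c(u)$, so the induction hypothesis applied at $u$ and at each successor closes the case directly. For a dead-end $u$, on the one hand $\M^c,u\vDash\circ\phi$ is vacuously true, and on the other hand in $\M^{c*}$ the sole successor of $u$ is $u$ itself, so $\M^{c*},u\vDash\circ\phi$ reduces to ``$u\vDash\phi$ implies $u\vDash\phi$'', a tautology. What makes the patch safe---the subtle point worth keeping in mind---is the conditional shape of the semantics of $\circ$: adding a reflexive loop at $u$ can never falsify $\circ\phi$ at $u$, since any extra demand it imposes has the form ``if $\phi$ holds at $u$ then $\phi$ holds at $u$''. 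This is precisely the feature that fails for $\Box$, which is why for $\ML$ securing seriality requires a less trivial transformation, whereas here it comes essentially for free.
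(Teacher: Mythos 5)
Your proposal is correct and matches the paper's own proof: the paper likewise forms $\M^{\bf D}$ by adding a reflexive loop at each endpoint of $\M^c$ and then verifies that $\LEA$-truth is preserved, with the $\circ\phi$ case handled exactly as you describe (vacuous at dead ends in $\M^c$, tautologous in the repaired model). No substantive difference in approach.
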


\begin{proof}
Define $\M^{\bf D}=\lr{S^c,R^{\bf D},V^c}$, where $S^c$ and $V^c$ is the same as in Definition~\ref{def.cm-k}, and $R^{\bf D}=R^c\cup\{(t,t)\mid t\text{~is an endpoint in~}\M^c\}$.\footnote{The method, called `reflexivizing the endpoints', is also used in \cite[Theorem~5.6]{Fanetal:2015}.} Now it is obvious that $\M^{\bf D}$ is serial.

\weg{We also need}It suffices to show that the truth-values of $\LEA$-formulas are preserved under the model transformation. That is to show: for all $s\in S^c$, for all $\phi\in\LEA$, we have $\M^c,s\vDash\phi$ iff $\M^{\bf D},s\vDash\phi$. The nontrivial case is $\circ\phi$. If $s$ is an endpoint in $\M^c$, then by semantics, $\M^c,s\vDash\circ\phi$ and $\M^{\bf D},s\vDash\circ\phi$, thus we have $\M^c,s\vDash\circ\phi$ iff $\M^{\bf D},s\vDash\circ\phi$. If $s$ is {\em not} an endpoint in $\M^c$, then the claim is clear.
\weg{Now suppose $\Gamma\nvdash\phi$. Then $\Gamma\cup\{\neg\phi\}$ is $\SLEA$-consistent. Using Lindenbaum's Lemma, there is an $s\in \M^c$ such that $\Gamma\cup\{\neg\phi\}\subseteq s$. By Truth Lemma (Lemma~\ref{lem.truthlem}), $\M^c,s\vDash\Gamma\cup\{\neg\phi\}$. From the above $(\star)$, it follows that $\M^{\bf D},s\vDash\Gamma\cup\{\neg\phi\}$. Moreover, $\M^{\bf D}$ is serial. Therefore, $\Gamma\nvDash_{F_{\bf D}}\phi$, where $F_{\bf D}$ is the class of ${\bf D}$-frames.}
\end{proof}

The same story also goes with $\SLEA$ and the class of reflexive frames. However, according to the definition of $R^c$, $\M^c$ is {\em not} reflexive, thus we need to transform $\M^c$ into a reflexive model. Notice that the truth-values of $\LEA$-formulas should be preserved under the model transformation.

\begin{theorem}[Completeness of $\SLEA$ over $\mathcal{T}$-frames]\label{thm.comp-t}
$\SLEA$ is sound and strongly complete with respect to the class of $\mathcal{T}$-frames.
\end{theorem}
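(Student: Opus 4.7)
The plan is to imitate the proof of Theorem~\ref{thm.comp-d}: we start from the canonical model $\M^c$ of Definition~\ref{def.cm-k}, transform it into a reflexive model by simply adding a self-loop at every world, and verify that the transformation preserves the truth-value of every $\LEA$-formula. Concretely, define $\M^{\bf T}=\lr{S^c,R^{\bf T},V^c}$ with
\[
R^{\bf T}=R^c\cup\{(s,s)\mid s\in S^c\}.
\]
By construction $\M^{\bf T}$ is reflexive. Note that, since the clause $s\neq t$ appears in the definition of $R^c$, the canonical relation is irreflexive, so every pair $(s,s)$ is genuinely a newly added edge.

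The central claim is: for every $s\in S^c$ and every $\phi\in\LEA$,
\[
\M^c,s\vDash\phi\iff\M^{\bf T},s\vDash\phi.
\]
The proof proceeds by induction on $\phi$. Atomic and Boolean cases are immediate because $S^c$ and $V^c$ are unchanged. The only case that needs inspection is $\circ\phi$. The direction from $\M^{\bf T}$ to $\M^c$ is easy, since $R^c\subseteq R^{\bf T}$ and the induction hypothesis lets us match truth-values at successors. For the other direction, assume $\M^c,s\vDash\circ\phi$; we show $\M^{\bf T},s\vDash\circ\phi$. If $\M^{\bf T},s\nvDash\phi$ the conclusion is trivial, so suppose $\M^{\bf T},s\vDash\phi$. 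Let $t$ be any $R^{\bf T}$-successor of $s$. Either $sR^ct$, in which case the induction hypothesis together with $\M^c,s\vDash\circ\phi$ and $\M^c,s\vDash\phi$ yields $\M^c,t\vDash\phi$, hence $\M^{\bf T},t\vDash\phi$; or $t=s$, in which case $\M^{\bf T},t\vDash\phi$ holds by assumption. Thus $\M^{\bf T},s\vDash\circ\phi$.

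Once the preservation claim is established, strong completeness is routine: if $\Gamma\nvdash\phi$, then $\Gamma\cup\{\neg\phi\}$ is $\SLEA$-consistent, so by Lindenbaum's Lemma it extends to some $s\in S^c$; by the Truth Lemma (Lemma~\ref{lem.truthlem}) we have $\M^c,s\vDash\Gamma\cup\{\neg\phi\}$, and by the preservation claim $\M^{\bf T},s\vDash\Gamma\cup\{\neg\phi\}$. Since $\M^{\bf T}$ is reflexive, this witnesses $\Gamma\nvDash_{F_{\mathcal{T}}}\phi$.

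The main obstacle, as in the $\mathcal{D}$ case, is making sure that no $\LEA$-formula sees the newly added edges; the crucial point is that the only fresh $R^{\bf T}$-successor of $s$ is $s$ itself, which trivially satisfies $\phi$ whenever the antecedent of the implication in the semantics of $\circ\phi$ holds. This is precisely the feature of the semantics of $\circ$ that makes the method work and that has no analogue for the $\Box$-operator.
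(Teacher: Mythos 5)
Your proposal is correct and follows essentially the same route as the paper: both take the reflexive closure $R^{\bf T}=R^c\cup\{(s,s)\mid s\in S^c\}$ and prove by induction that truth-values of $\LEA$-formulas are preserved, the key observation in the $\circ\phi$ case being that the only new successor of $s$ is $s$ itself, which automatically satisfies $\phi$ whenever the antecedent of the $\circ$-clause does. The paper presents this as a single chain of biconditionals while you split it into two implications, but the underlying argument is identical.
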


\begin{proof}
Define $\M^c$ as in Definition~\ref{def.cm-k}. Define $\M^{\bf T}=\lr{S^c,R^{\bf T},V^c}$ as $\M^c$, except that $R^{\bf T}$ is the reflexive closure of $R^c$, i.e. $R^{\bf T}=R^c\cup\{(s,s)\mid s\in S^c\}$, equivalently, $sR^{\bf T}t$ iff (for all $\phi\in\LEA$, if $\circ\phi\land\phi\in s$, then $\phi\in t$) or $s=t$. It is now obvious that $R^{\bf T}$ is reflexive.

It suffices to show that the truth-values of $\LEA$-formulas are preserved under the model transformation. That is to show, for all $s\in S^c$, for all $\phi\in\LEA$, we have $\M^c,s\vDash\phi$ iff $\M^{\bf T},s\vDash\phi$. The proof proceeds with induction on $\phi$. The nontrivial case is $\circ\phi$, as follows.
\[\begin{array}{ll}
&\M^c,s\vDash\circ\phi\\
\Longleftrightarrow& \M^c,s\vDash\phi \text{~implies for all }t\in S^c,\text{ if }sR^ct,\text{ then }\M^c,t\vDash\phi\\
\Longleftrightarrow& \M^c,s\vDash\phi\text{ implies }\M^c,s\vDash\phi\text{ and for all }t\in S^c,\text{ if }sR^ct,\text{ then }\M^c,t\vDash\phi\\
\stackrel{\text{IH}}\Longleftrightarrow& \M^{\bf T},s\vDash\phi\text{ implies }\M^{\bf T},s\vDash\phi\text{ and for all }t\in S^c,\text{ if }sR^ct,\text{ then }\M^{\bf T},t\vDash\phi\\
\stackrel{sR^{\bf T}s}\Longleftrightarrow& \M^{\bf T},s\vDash\phi\text{ implies for all }t\in S^c,\text{ if }sR^{\bf T}t,\text{ then }\M^{\bf T},t\vDash\phi \\
\Longleftrightarrow& \M^{\bf T},s\vDash\circ\phi.\\
\end{array}\]
\weg{We only need to prove that the truth lemma holds for $\SLEA$ under the definition of $\M^{\bf T}$. That is to show, for all $\phi\in \LEA$,
$$\M^{\bf T},s\vDash\phi\Longleftrightarrow\phi\in s.$$

The nontrivial case is $\circ\phi$, i.e., to show $\M^{\bf T},s\vDash\circ\phi\Longleftrightarrow\circ\phi\in s$. The direction from left to right is similar to `$\Longrightarrow$' in Lemma~\ref{lem.truthlem}.

It suffices to show the other direction. For this, suppose, for a contradiction, that $\circ\phi\in s$ but $\M^{\bf T},s\nvDash\circ\phi$, then $s\vDash\phi$ but there is a $t\in S^c$ such that $sR^{\bf T}t$ and $t\nvDash\phi$, and then by induction hypothesis, $\phi\in s$ and $\phi\notin t$, thus $s\neq t$. From this and $sR^{\bf T}t$ and $\circ\phi\land\phi\in s$, we obtain $\phi\in t$, contradiction.}
\end{proof}

\medskip

We now consider the extensions of the system $\SLEA$. The table below indicates the extra axioms and the corresponding systems, with on the right-hand side the classes of frames for which we will demonstrate completeness.
\[
\begin{array}{|l|l|l|l|}
\hline
\text{Notation}&\text{Axioms}&\text{Systems}&\text{Frame classes}\\
\hline
\KwTr& \circ p\land p\to\circ\circ p& \SLCLTr=\SLCL+\KwTr&4~(\mathcal{S}4)\\
\KwB& p\to\circ(\circ\neg p\to p)&\SLCLB=\SLCL+\KwB&\mathcal{B} ~(\mathcal{TB})\\
\KwEuc&\neg \circ\neg p\to\circ (\circ\neg p\to p)&\SLCLBEuc=\SLCLB+\KwEuc&\mathcal{B}5~(\mathcal{S}5)\\
\hline
\end{array}
\]

Note that Axiom $\KwTr$ is different from the axiom B4 in~\cite[p.~95]{Steinsvold:2008}, i.e. $\circ p\land p\to\circ(\circ p\land p)$. One can show that B4 is provable in $\SLCLTr$, with the aid of Axioms $\KwTr$, $\KwCon$ and Rule $\SUB$. It is shown in~\cite[Prop.~3.6]{Steinsvold:2008} that $\SLCL+\text{B4}$ is sound and complete with respect to the class of $4$-frames (weakly transitive frames, and also $\mathcal{S}4$-frames). The same argument goes with the system $\SLCLTr$. But we will show that $\SLCLTr$ is sound and strongly complete with respect to the class of $4$-frames within our framework. And we can see that $\SLCLTr$ is simpler than $\SLCL+\text{B4}$.

We can compute the above axioms from the standard ones in modal logic. But note that the point here is Proposition~\ref{prop.equivalent}. In other words, the underlying class of frames is $F_{\mathcal{T}}$, the class of reflexive frames, rather than the class of all frames.
\begin{align}
&~\Box p\to\Box\Box p\label{41}\\
\Lra&~\circ p\land p\to\circ(\circ p\land p)\land (\circ p\land p)\label{42}\\
\Lra&~\circ p\land p\to\circ(\circ p\land p)\label{43}
\end{align}
The equivalent transition from (\ref{41}) to (\ref{42}) follows from Proposition~\ref{prop.equivalent}. By simplification, we obtain the axiom B4 in~\cite[p.~95]{Steinsvold:2008}, i.e. (\ref{43}).
\begin{align}
&~p\to\Box\Diamond p\label{b1}\\
\Lra&~p\to\Box(p\lor\neg\circ\neg p)\label{b2}\\
\Lra&~p\to(p\lor\neg\circ\neg p)\land\circ(p\lor\neg\circ\neg p)\label{b3}\\
\Lra&~p\to\circ(\circ\neg p\to p)\label{b4}
\end{align}
The equivalent transitions from (\ref{b1}) to (\ref{b3}) follow from Proposition~\ref{prop.equivalent}. By simplification and using Rule $\REKw$ (i.e. $\dfrac{\phi\lra\psi}{\circ\phi\lra\circ\psi}$), we obtain Axiom $\circ{\bf B}$, i.e. (\ref{b4}).
\weg{\begin{align}
&~\Diamond p\to\Box\Diamond p\label{51}\\
\Leftrightarrow& ~p\lor\Diamond p\to\Box(p\lor\Diamond p)\land(p\lor\Diamond p)\label{52}\\
\Lra&~ (\neg p\to\Diamond p)\to\Box(\neg p\to\Diamond p)\land(\neg p\to\Diamond p)\label{53}\\
\Lra&~ (\neg p\to\neg \circ\neg p)\to\circ (\neg p\to\neg \circ\neg p)\land (\neg p\to\neg \circ\neg p)\label{54}\\
\Lra&~ (p\lor \neg \circ\neg p)\to\circ (\neg p\to\neg \circ\neg p)\label{55}\\
\Lra&~ (p\lor \neg \circ\neg p)\to\circ (\circ\neg p\to p)\label{56}\\
\Lra&~(p\to\circ (\circ\neg p\to p))\land(\neg \circ\neg p\to\circ (\circ\neg p\to p))\label{57}\\
\Rightarrow&~\neg \circ\neg p\to\circ (\circ\neg p\to p)\label{58}
\end{align}
The equivalent transition from (\ref{51}) to (\ref{52}) is due to the validity of $\Diamond \phi\lra \phi\lor\Diamond \phi$ and of $\Box\phi\lra\Box\phi\land\phi$ on reflexive frames. The equivalent transition from (\ref{53}) to (\ref{54}) follows from Prop.~\ref{prop.equivalent}}
\begin{align}
&~\Diamond p\to\Box\Diamond p\label{51}\\
\Lra&~ (p\lor \neg \circ\neg p)\to\circ (p\lor \neg \circ\neg p)\land (p\lor \neg \circ\neg p)\label{55}\\
\Lra&~ (p\lor \neg \circ\neg p)\to\circ (\circ\neg p\to p)\label{56}\\
\Lra&~(p\to\circ (\circ\neg p\to p))\land(\neg \circ\neg p\to\circ (\circ\neg p\to p))\label{57}\\
\Rightarrow&~\neg \circ\neg p\to\circ (\circ\neg p\to p)\label{58}
\end{align}
The equivalent transition from (\ref{51}) to (\ref{55})  follows from Proposition~\ref{prop.equivalent}. The equivalent transition from (\ref{56}) to (\ref{57}) is due to the validity of $(\phi\vee\psi\to\chi)\lra(\phi\to\chi)\land(\psi\to\chi)$. The {\em implicative} (rather than equivalent) transition from (\ref{57}) to (\ref{58}) is because Axiom $\circ{\bf B}$ is invalid on Euclidean frames. In this way, we get the axiom (\ref{58}), i.e. $\circ{\bf5}$, or following the term in~\cite[p.~100]{Steinsvold:2008}, B5.

From the above transition from (\ref{51}) to (\ref{58}), we can see that the standard axiom ${\bf 5}$ in modal logic is equivalent to $\circ{\bf B}~\&\circ{\bf5}$, rather than $\circ{\bf5}$. This tells us that $\circ{\bf B}~\&\circ{\bf5}$ is the desired axiom for characterizing the logic of essence and accident over symmetric and Euclidean frames, but $\circ{\bf5}$ may {\em not} be the desired axiom for characterizing this logic over Euclidean frames, as one can show.\footnote{In \cite[page 101]{Steinsvold:2008}, the author claimed without a proof that $B_K+B5$ is the logic of $K5_{EA}$, which means that $B_K+B5$, equivalently, our $\SLCL+\KwEuc$, is sound and {\em complete} with respect to the class of Euclidean frames. However, by using his canonical model (see also Section~\ref{sec.comparison}), we cannot see how the canonical relation therein is provided to be Euclidean.\label{footnote}}

\medskip

\weg{We here focus on the completeness of $\SLCLB$. As for this, according to Thm.~\ref{thm.comp-k}, we only need to show that
\begin{proposition}\label{prop.sym}
$R^c$ is symmetric.
\end{proposition}

\begin{proof}
Let $s,t\in S^c$. Suppose $sR^ct$, to show $tR^cs$. For this, assume for any $\phi$ such that $\circ\phi\land\phi\in t$, we need only show $\phi\in s$. If $\phi\notin s$, i.e. $\neg\phi\in s$, then using Axiom $\circ{\bf B}$ and Rule $\SUB$, we obtain $\circ(\circ\neg\neg\phi\to\neg\phi)\in s$, viz. $\circ(\circ\phi\to\neg\phi)\in s$. Besides, from $\neg\phi\in s$ it follows that $\circ\phi\to\neg\phi\in s$. We have thus shown that $\circ(\circ\phi\to\neg\phi)\land(\circ\phi\to\neg\phi)\in s$. By supposition and the definition of $R^c$, we conclude that $\circ\phi\to\neg\phi\in t$, that is, $\circ\phi\land\phi\notin t$, in contradiction to the assumption. Therefore $\phi\in s$.
\end{proof}

From Prop.~\ref{prop.definable-b}, Thm.~\ref{thm.comp-k} and Prop.~\ref{prop.sym}, it is immediate that

\begin{theorem}[Completeness of $\SLCLB$ over $\mathcal{B}$-frames]\label{thm.comp-kb}
$\mathbf{KB^\circ}$ is sound and strongly complete with respect to the class of $\mathcal{B}$-frames.
\end{theorem}

Since $R^c$ is reflexive, thus we have

\begin{theorem}[Completeness of $\SLCLB$ over $\mathcal{TB}$-frames]
$\SLCLB$ is sound and strongly complete with respect to the class of $\mathcal{TB}$-frames.
\end{theorem}}

\weg{As we mentioned in the footnote~\ref{footnote}, system $\SLCL+\KwEuc$ may not be complete with respect to the class of Euclidean frames. Despite this, we indeed have the following result.

\begin{theorem}[Completeness of $\SLCLBEuc$ over $\mathcal{B}5$-frames]
$\SLCLBEuc$ is sound and strongly complete with respect to the class of $\mathcal{B}5$-frames.
\end{theorem}

\begin{proof}
The validity of Axiom $\KwEuc$ can be derived from Prop.~\ref{prop.new}.\ref{prop.newtwo}.
According to Thm.~\ref{thm.comp-kb}, we need only show that $R^c$ is Euclidean.

Let $s,t,u\in S^c$. Suppose $sR^ct$ and $sR^cu$, to show $tR^cu$. Assume towards contradiction that there exists $\phi$ such that $\circ\phi\land\phi\in t$ but $\phi\notin u$. Since $R^c$ is symmetric and $sR^ct$, we have $tR^cs$, then $\phi\in s$ due to $\circ\phi\land\phi\in t$. We also have $\neg\circ\phi\in s$, for otherwise $\circ\phi\in s$, then $\circ\phi\land\phi\in s$, from which and $sR^cu$ we have $\phi\in u$, contrary to $\phi\notin u$. From $\neg\circ\phi\in s$, using Axiom $\KwEuc$ and Rule $\SUB$, we obtain $\circ(\circ\phi\to\neg\phi)\in s$. Besides, we can show $\circ\phi\to\neg\phi\in s$, thus $\circ(\circ\phi\to\neg\phi)\land(\circ\phi\to\neg\phi)\in s$. Thanks to $sR^ct$, we get $\circ\phi\to\neg\phi\in t$, i.e. $\circ\phi\land\phi\notin t$, contrary to the assumption. This completes the proof.
\end{proof}

Since $R^c$ is reflexive, we also have
\begin{theorem}[Completeness of $\SLCLBEuc$ over $\mathcal{S}5$-frames]
$\SLCLBEuc$ is sound and strongly complete with respect to the class of $\mathcal{S}5$-frames.
\end{theorem}}

When considering the case for transitivity, the difficulty will arise. Because it is possible that $sR^ct$ and $tR^cu$ and $u=s$. By the definition of $R^c$, we do not have $sR^cu$. We call this kind of world $s$ (viz. $u$) a {\em non-transitive world w.r.t. $R^c$}.\weg{ Otherwise, we call it a transitive world w.r.t. $R^c$.} Thus we need to transform $\M^c$ into a transitive model. Notice that the truth-values of $\LEA$-formulas should be preserved under the transformation.
\begin{theorem}[Completeness of $\SLCLTr$ over $4$-frames]\label{thm.comp-4}
$\SLCLTr$ is sound and strongly complete with respect to the class of $4$-frames.
\end{theorem}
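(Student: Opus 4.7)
My plan is to adapt the strategy used for the completeness of $\SLEA$ over $\mathcal{D}$- and $\mathcal{T}$-frames (Theorems~\ref{thm.comp-d} and~\ref{thm.comp-t}): I will reuse the canonical model $\M^c=\lr{S^c,R^c,V^c}$ from Definition~\ref{def.cm-k} verbatim, then modify the relation so that the resulting frame is transitive while the truth value of every $\LEA$-formula is unchanged. The natural candidate is $\M^{\bf 4}=\lr{S^c,R^{\bf 4},V^c}$, where $R^{\bf 4}$ is the transitive closure of $R^c$, i.e. $sR^{\bf 4}t$ iff there exists a finite sequence $s=s_0R^cs_1R^c\cdots R^cs_n=t$ with $n\geq 1$. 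Transitivity of $R^{\bf 4}$ is then immediate.

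The main work is to verify the truth-preservation claim: for every $s\in S^c$ and every $\phi\in\LEA$, $\M^c,s\vDash\phi$ iff $\M^{\bf 4},s\vDash\phi$. I would proceed by induction on $\phi$, with only the case $\circ\phi$ requiring care. For the left-to-right direction, suppose $\M^c,s\vDash\circ\phi$, so by Lemma~\ref{lem.truthlem} we have $\circ\phi\in s$. If $\M^{\bf 4},s\nvDash\phi$ the conclusion holds vacuously; otherwise the IH gives $\phi\in s$, hence $\circ\phi\land\phi\in s$. The key step is to establish the invariant that $\circ\phi\land\phi$ propagates along every $R^c$-chain starting at $s$: if $\circ\phi\land\phi\in s_i$ and $s_iR^cs_{i+1}$, then the definition of $R^c$ yields $\phi\in s_{i+1}$, while Axiom $\KwTr$ (together with \SUB\ and \MP) gives $\circ\circ\phi\in s_i$, so $\circ\circ\phi\land\circ\phi\in s_i$, and the definition of $R^c$ now yields $\circ\phi\in s_{i+1}$, completing the induction. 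Applying this to an arbitrary chain witnessing $sR^{\bf 4}t$ shows $\phi\in t$, and IH then gives $\M^{\bf 4},t\vDash\phi$. The right-to-left direction is easy: if $\M^c,s\nvDash\circ\phi$, there is a single-step $R^c$-witness refuting $\circ\phi$, which by IH and the inclusion $R^c\subseteq R^{\bf 4}$ is already an $R^{\bf 4}$-witness in $\M^{\bf 4}$.

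Once truth preservation is in place, strong completeness follows by the standard Lindenbaum argument: given $\Gamma\nvdash_{\SLCLTr}\phi$, the set $\Gamma\cup\{\neg\phi\}$ is $\SLCLTr$-consistent and extends to some $s\in S^c$; then Lemma~\ref{lem.truthlem} combined with truth preservation delivers $\M^{\bf 4},s\vDash\Gamma\cup\{\neg\phi\}$ on the transitive frame $\lr{S^c,R^{\bf 4}}$. The main obstacle is precisely the propagation lemma in the forward direction of the $\circ\phi$ case: without Axiom $\KwTr$ there is no reason for $\circ\phi$ itself to survive an $R^c$-step, so the inductive invariant $\circ\phi\land\phi\in s_i$ would break down beyond the first step. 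The fact that the lean axiom $\KwTr=\circ p\land p\to\circ\circ p$ is already strong enough to carry this propagation, without recourse to the heavier B4 of~\cite{Steinsvold:2008}, is exactly the reason our system is simpler, as remarked before the theorem.
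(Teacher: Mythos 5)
Your proof is correct, but it is organized differently from the paper's. The paper does not take the transitive closure; it observes that, thanks to Axiom $\KwTr$, the canonical relation $R^c$ already satisfies $sR^cu$ whenever $sR^ct$, $tR^cu$ and $s\neq u$ (the argument is exactly your one-step propagation of $\circ\phi\land\phi$), so the only failures of transitivity are the diagonal pairs excluded by the clause $s\neq t$ in Definition~\ref{def.cm-k}. Accordingly it sets $R^{\bf Tr}=R^c\cup\{(s,s)\mid sR^ct,\ tR^cs\text{ for some }t\}$, proves transitivity of $R^{\bf Tr}$ by a four-case analysis, and then obtains truth preservation almost for free from the purely semantic observation that adding a reflexive loop at $s$ never changes the value of $\circ\phi$ at $s$. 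You instead make transitivity trivial by closing under composition and shift all the work into truth preservation, which you discharge with the syntactic invariant that $\circ\phi\land\phi$ persists along $R^c$-chains. The two constructions in fact yield the same relation (any chain between distinct endpoints collapses to a single $R^c$-step by your propagation lemma, and any cycle yields a two-cycle through its base point), and both hinge on the same application of $\KwTr$; the paper's version keeps the modification of the model minimal and the truth-preservation step semantic, while yours avoids the case analysis at the price of an explicit induction on chain length. One small point of care in your write-up: Lemma~\ref{lem.truthlem} is stated for the canonical model of $\SLEA$, so you should note (as the paper does implicitly) that it holds verbatim for the canonical model built from maximal $\SLCLTr$-consistent sets before invoking it.
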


\weg{\begin{proof}
Suppose $sR^ct$ and $tR^cu$, to show $sR^cu$. Consider two cases.
\begin{itemize}
\item $s\neq u$.

\item $s=u$. Define $\M^{\bf Tr}=\lr{S^c,R^{\bf Tr},V^c}$, where $S^c$ is the set of maximal $\SLCLTr$-consistent sets, $V^c$ is the same as in Definition \ref{def.cm-k}, and $R^{\bf Tr}=R^c\cup\{(s,s)\mid sR^ct, tR^cs\}$, where $R^c$ is defined as in Definition \ref{def.cm-k}.
\end{itemize}
\end{proof}}

\begin{proof}
Define $\M^c$ as in Definition~\ref{def.cm-k} w.r.t. $\SLCLTr$. Define $\M^{\bf Tr}=\lr{S^c,R^{\bf Tr},V^c}$ as $\M^c$, except that $R^{\bf Tr}=R^c\cup\{(s,s)\mid sR^ct,tR^cs \text{ for some }t\in S^c\}$.

We first show that $R^{\bf Tr}$ is transitive. Assume for any $s,t,u\in S^c$ that $sR^{\bf Tr}t$ and $tR^{\bf Tr}u$, to show $sR^{\bf Tr}u$. We consider the following cases.
\begin{itemize}
\item $sR^ct$ and $tR^cu$. If $s=u$, then by definition of $R^{\bf Tr}$, it is obvious that $sR^{\bf Tr}u$. We only need to consider $s\neq u$. In this case, suppose for any $\phi\in\LEA$ that $\circ\phi\land\phi\in s$, by Axiom $\KwTr$ and Rule $\SUB$, we get $\circ\circ\phi\in s$. Then from $sR^ct$, we can infer $\circ\phi\land\phi\in t$. Combining this and $tR^cu$, we obtain $\phi\in u$. Therefore, $sR^cu$, and thus $sR^{\bf Tr}u$.
\item $sR^ct$, and for some $s'$, $tR^cs'$ and $s'R^cu$ and $t=u$. Then it is obvious that $sR^cu$, thus $sR^{\bf Tr}u$.
\item for some $s'\in S^c$,  $sR^cs'$ and $s'R^ct$ and $s=t$, and $tR^cu$. Then it is obvious that $sR^cu$, thus $sR^{\bf Tr}u$.
\item for some $s',s''\in S^c$,  $sR^cs'$ and $s'R^ct$ and $s=t$, and $tR^cs''$ and $s''R^cu$ and $t=u$. Then $sR^cs'$ and $s'R^cu$ and $s=u$ for some $s'\in S^c$. Thus $sR^{\bf Tr}u$.
\end{itemize}
Either case implies $sR^{\bf Tr}u$.

It suffices to show that the truth-values of $\LEA$-formulas are preserved under the model transformation. That is to show, for any $s\in S^c$, for any $\phi\in\LEA$, we have:
$\M^c,s\vDash\phi\Longleftrightarrow\M^{\bf Tr},s\vDash\phi.$ The proof proceeds with induction on $\phi$. The nontrivial case is $\circ\phi$. If $s$ is {\em not} a non-transitive world w.r.t. $R^c$, then the claim is clear. Otherwise, i.e., if $s$ is a non-transitive world w.r.t. $R^c$, then
\[\begin{array}{ll}
&\M^c,s\vDash\circ\phi\\
\Longleftrightarrow&\M^c,s\vDash\phi\text{ implies for all }t\in S^c,\text{ if }sR^ct,\text{ then }\M^c,t\vDash\phi\\
\Longleftrightarrow&\M^c,s\vDash\phi\text{ implies }\M^c,s\vDash\phi\text{ and for all }t\in S^c,\text{ if }sR^ct,\text{ then }\M^c,t\vDash\phi\\
\stackrel{\text{IH}}\Longleftrightarrow&\M^{\bf Tr},s\vDash\phi\text{ implies }\M^{\bf Tr},s\vDash\phi\text{ and for all }t\in S^c,\text{ if }sR^ct,\text{ then }\M^{\bf Tr},t\vDash\phi\\
\stackrel{sR^{\bf Tr}s}\Longleftrightarrow&\M^{\bf Tr},s\vDash\phi\text{ implies for all }t\in S^c,\text{ if }sR^{\bf Tr}t,\text{ then }\M^{\bf Tr},t\vDash\phi \\
\Longleftrightarrow& \M^{\bf Tr},s\vDash\circ\phi.\\
\end{array}\]

According to the previous analysis, this completes the proof.
\weg{Moreover, we need to show that: $(\ast)$: for all $s\in S^c$, for all $\phi\in\LEA$, $\M^{\bf Tr},s\vDash\phi$ iff $\phi\in s$, from which we get the strong completeness. We need only show the case for $\circ\phi$.

Suppose that $\circ\phi\in s$, to show $\M^{\bf Tr},s\vDash\circ\phi$. For this, assume that $\M^{\bf Tr},s\vDash\phi$ (i.e. $\phi\in s$ by induction hypothesis) and $sR^{\bf Tr}t$ for $t\in S^{\bf Tr}$, it suffice to show that $\M^{\bf Tr},t\vDash\phi$. Since $sR^{\bf Tr}t$, we have $s(R^c)^nt$ for some $n>0$. We proceed by induction on $n>0$.
\weg{Suppose, for a contradiction, that $\circ\phi\in s$ but $\M^{\bf Tr},s\nvDash\circ\phi$, then $\M^{\bf Tr},s\vDash\phi$ and there is a $t\in S^c$ such that $sR^{\bf Tr}t$ and $t\nvDash\phi$. By induction hypothesis, we have $\phi\in s$ and $\phi\notin t$. Since $sR^{\bf Tr}t$, we have $s(R^c)^nt$ for some $n>0$. We proceed by induction on $n>0$. }

\begin{itemize}
\item $n=1$, i.e. $sR^ct$. From $\M^{\bf Tr},s\vDash\phi$, induction hypothesis and $\circ\phi\in s$, it follows that $\circ\phi\land\phi\in s$. According to the definition of $R^c$, we have $\phi\in t$. By induction hypothesis, $\M^{\bf Tr},t\vDash\phi$.  \weg{If $sR^ct$, then from $\circ\phi\land\phi\in s$ follows that $\phi\in t$, contradiction. If there exists $u\in S^c$ such that $sR^cu$ and $uR^ct$, then from $\circ\phi\land\phi\in s$ and Axiom $\KwTr$, we have $\circ\phi\land\circ\circ\phi\in s$. Then by $sR^cu$, we obtain $\phi\in u$ and also  $\circ\phi\in u$. Then by $uR^ct$, we get $\phi\in t$, contradiction. In either case we come to a contradiction, thus $\M^{\bf Tr},s\vDash\circ\phi$.}
\item $n=m+1$, where $m>0$. In this case, there exists $t'\in S^{\bf Tr}$ such that $sR^ct'$ and $t'(R^c)^mt$. Since $\circ\phi\land\phi\in s$, we have $\circ\circ\phi\in s$ due to Axiom $\KwTr$, and we thus have $\circ\circ\phi\land\circ\phi\in s$. Since $sR^ct'$, we have $\circ\phi\land\phi\in t'$. By induction hypothesis for $m$, we can get $\M^{\bf Tr},t'\vDash\circ\phi\land\phi$. Then $t\vDash\phi$ follows from $t'(R^c)^mt$.
    \weg{In this case, there exists $t'\in S^{\bf Tr}$ such that $s(R^c)^mt'$ and $t'R^ct$. Since $\circ\phi\land\phi\in s$, we have $\circ\circ\phi\in s$ due to Axiom $\KwTr$, and we thus have $\circ\circ\phi\land\circ\phi\in s$. By induction hypothesis for $m$, we have $\circ\phi\land\phi\in t'$. Then according to the definition of $R^c$ and $t'R^ct$, we obtain $\phi\in t$.}
\end{itemize}

Now, suppose $\circ\phi\notin s$, by a similar argument to `$\Longrightarrow$' in Lemma~\ref{lem.truthlem}, we can infer that $\phi\in s$ and there is a $t\in S^c$ such that $sR^ct$ and $\neg\phi\in t$, thus $sR^{\bf Tr}t$. By induction hypothesis, $s\vDash\phi$ and $t\nvDash\phi$. Therefore, $\M^{\bf Tr},s\vDash\circ\phi$.

We have thus shown $(\ast)$, as desired.}
\end{proof}

\begin{theorem}[Completeness of $\SLCLTr$ over $\mathcal{S}4$-frames] \label{thm.comp-s4}
$\SLCLTr$ is sound and strongly complete with respect to the class of $\mathcal{S}4$-frames.
\end{theorem}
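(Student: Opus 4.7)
The plan is to merge the canonical-model constructions from Theorems~\ref{thm.comp-t} and~\ref{thm.comp-4}. Build $\M^c=\lr{S^c,R^c,V^c}$ for $\SLCLTr$ as in Definition~\ref{def.cm-k}, let $R^{\bf Tr}$ be as in the proof of Theorem~\ref{thm.comp-4}, and define $\M^{\bf S4}=\lr{S^c,R^{\bf S4},V^c}$ by $R^{\bf S4}=R^{\bf Tr}\cup\{(s,s)\mid s\in S^c\}$, i.e.\ the reflexive closure of $R^{\bf Tr}$.

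Next I would check that $R^{\bf S4}$ is an $\mathcal{S}4$-relation: reflexivity is immediate by construction, and for transitivity, given $sR^{\bf S4}t$ and $tR^{\bf S4}u$, one splits into cases according to whether each edge lies in $R^{\bf Tr}$ or is a freshly added reflexive pair. When both edges are in $R^{\bf Tr}$, transitivity of $R^{\bf Tr}$ (already established in Theorem~\ref{thm.comp-4}) gives $sR^{\bf Tr}u\subseteq R^{\bf S4}$; when one of the edges is a fresh pair $(x,x)$, it forces $s=t$ or $t=u$, and the composite edge collapses to an edge already in $R^{\bf S4}$.

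Then I would prove the truth-preservation $\M^c,s\vDash\phi\Lra\M^{\bf S4},s\vDash\phi$ by induction on $\phi\in\LEA$; only the case $\circ\phi$ is nontrivial. The key observation is that since $sR^{\bf S4}s$ holds for every $s$, one may insert the vacuous conjunct ``$\M,s\vDash\phi$ and $\ldots$'' into the unfolding of $\circ\phi$ exactly as in the proof of Theorem~\ref{thm.comp-t}, and then combine this trick with the case analysis of Theorem~\ref{thm.comp-4} to handle the $R^c\to R^{\bf Tr}$ step. Together with the Truth Lemma (Lemma~\ref{lem.truthlem}) for $\M^c$ and Lindenbaum's Lemma, this yields strong completeness over $\mathcal{S}4$-frames; soundness of $\KwTr$ on such frames is immediate from Proposition~\ref{prop.equivalent}. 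The main obstacle will be verifying transitivity of $R^{\bf S4}$ in the mixed cases where a fresh reflexive pair $(s,s)$ is composed with one of the ``bowtie'' pairs already residing in $R^{\bf Tr}$, and checking that the preservation step at the level of $\circ\phi$ does not accidentally quantify over successors which $\M^c$ failed to see.
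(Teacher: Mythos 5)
Your proposal is correct and essentially matches the paper's proof: the paper works directly with $\M^{\bf T}$, the reflexive closure of $R^c$ (w.r.t.\ $\SLCLTr$), and your $R^{\bf S4}$ is extensionally the same relation, since the loops added in forming $R^{\bf Tr}$ are all of the form $(s,s)$ and are absorbed by the subsequent reflexive closure. Hence the ``mixed cases'' you worry about are vacuous, and the transitivity check plus the loop-insensitivity argument for $\circ\phi$ go through exactly as in Theorems~\ref{thm.comp-t} and~\ref{thm.comp-4}.
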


\begin{proof}
Define $\M^{\bf T}$ as in the proof of Theorem~\ref{thm.comp-t} w.r.t.~$\SLCLTr$. By Theorem~\ref{thm.comp-t}, we only need to show that $R^{\bf T}$ is transitive.

Suppose for any $s,t,u\in S^c$ that $sR^{\bf T}t$ and $tR^{\bf T}u$, to show $sR^{\bf T}u$. If $s=u$, then it is clear that $sR^{\bf T}u$. We only need to consider the case where $s\neq u$. In this case, according to the definition of $R^{\bf T}$, we consider the following subcases (it is impossible that $s=t$ and $t=u$, because in this case we already have $s\neq u$).
\begin{itemize}
\item $sR^ct$ and $tR^cu$. Since $s\neq u$, by a similar argument to the corresponding part of the first item in Theorem~\ref{thm.comp-4}, we can obtain $sR^cu$.
\item $sR^ct$ and $t=u$. It is obvious that $sR^cu$.
\item $s=t$ and $tR^cu$. It is obvious that $sR^cu$.
\end{itemize}
Either case implies $sR^{\bf T}u$, which completes the proof.
\end{proof}

For the completeness of $\SLCLB$, according to Theorem~\ref{thm.comp-k}, we only need to show that
\begin{proposition}\label{prop.sym}
$R^c$ is symmetric.
\end{proposition}

\begin{proof}
Let $s,t\in S^c$. Suppose $sR^ct$ (thus $s\neq t$), to show $tR^cs$. For this, assume for any $\phi$ such that $\circ\phi\land\phi\in t$, we need only show $\phi\in s$. If $\phi\notin s$, i.e. $\neg\phi\in s$, then using Axiom $\circ{\bf B}$ and Rule $\SUB$, we obtain $\circ(\circ\neg\neg\phi\to\neg\phi)\in s$, viz. $\circ(\circ\phi\to\neg\phi)\in s$. Besides, from $\neg\phi\in s$ it follows that $\circ\phi\to\neg\phi\in s$. We have thus shown that $\circ(\circ\phi\to\neg\phi)\land(\circ\phi\to\neg\phi)\in s$. By supposition and the definition of $R^c$, we conclude that $\circ\phi\to\neg\phi\in t$, that is, $\circ\phi\land\phi\notin t$, in contradiction to the assumption. Therefore $\phi\in s$.
\end{proof}

From Proposition~\ref{prop.definable-b}, Theorem~\ref{thm.comp-k} and Proposition~\ref{prop.sym}, it is immediate that

\begin{theorem}[Completeness of $\SLCLB$ over $\mathcal{B}$-frames]\label{thm.comp-kb}
$\mathbf{KB^\circ}$ is sound and strongly complete with respect to the class of $\mathcal{B}$-frames.
\end{theorem}

\begin{theorem}[Completeness of $\SLCLB$ over $\mathcal{TB}$-frames]\label{thm.comp-tb}
$\SLCLB$ is sound and strongly complete with respect to the class of $\mathcal{TB}$-frames.
\end{theorem}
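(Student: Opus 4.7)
The plan is to combine the two constructions already used for $\SLCLB$ over $\mathcal{B}$-frames (Theorem~\ref{thm.comp-kb}) and for $\SLEA$ over $\mathcal{T}$-frames (Theorem~\ref{thm.comp-t}). Starting from the canonical model $\M^c$ for $\SLCLB$, I would define $\M^{\bf T}=\lr{S^c,R^{\bf T},V^c}$ exactly as in the proof of Theorem~\ref{thm.comp-t}, namely by taking $R^{\bf T}=R^c\cup\{(s,s)\mid s\in S^c\}$, the reflexive closure of the canonical relation. By construction $R^{\bf T}$ is reflexive.

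Next I would verify that $R^{\bf T}$ is symmetric. By Proposition~\ref{prop.sym}, $R^c$ is already symmetric (this uses Axiom $\KwB$, which lies in $\SLCLB$). Since the diagonal $\{(s,s)\mid s\in S^c\}$ is trivially symmetric, and the union of two symmetric relations is symmetric, $R^{\bf T}$ is symmetric as well. Hence the underlying frame of $\M^{\bf T}$ belongs to the class of $\mathcal{TB}$-frames.

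The remaining task is to check that the truth-values of $\LEA$-formulas are preserved under the passage from $\M^c$ to $\M^{\bf T}$, i.e. for every $s\in S^c$ and every $\phi\in\LEA$, $\M^c,s\vDash\phi$ iff $\M^{\bf T},s\vDash\phi$. This goes by induction on $\phi$, and the only nontrivial case $\circ\phi$ is handled word-for-word by the chain of equivalences in the proof of Theorem~\ref{thm.comp-t}: the key observation is that the added reflexive arrow at $s$ is harmless, because the semantics of $\circ\phi$ at $s$ already takes into account the truth-value of $\phi$ at $s$ itself. Given this invariance and Lemma~\ref{lem.truthlem}, strong completeness follows by the standard Lindenbaum argument: any $\SLCLB$-consistent set $\Gamma$ extends to some maximal consistent $s\in S^c$, and then $\M^{\bf T},s\vDash\Gamma$ on a reflexive symmetric frame, so $\Gamma$ is satisfiable on $\mathcal{TB}$-frames.

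I expect no real obstacle here: the symmetry of $R^c$ is reused verbatim from Proposition~\ref{prop.sym}, and the preservation of $\LEA$-truth under reflexive closure is reused verbatim from Theorem~\ref{thm.comp-t}. The only point to take care of is to confirm that Proposition~\ref{prop.sym} is stated and proved in a form that applies to the canonical model for $\SLCLB$ (rather than for $\SLEA$), which it is, since the proof only needs Axiom $\KwB$ plus $\SUB$ and the definition of $R^c$.
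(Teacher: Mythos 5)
Your proposal is correct and follows essentially the same route as the paper: the paper likewise defines $\M^{\bf T}$ as the reflexive closure of the canonical model for $\SLCLB$, inherits the truth-preservation and completeness argument from Theorem~\ref{thm.comp-t}, and reduces the remaining work to showing that $R^{\bf T}$ is symmetric via Proposition~\ref{prop.sym} together with the trivial symmetry of the added diagonal pairs. No gaps.
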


\begin{proof}
Define $\M^{\bf T}$ as in the proof of Theorem~\ref{thm.comp-t} w.r.t. $\SLCLB$. By Theorem~\ref{thm.comp-t}, we need only show that $R^{\bf T}$ is symmetric.

Let $s,t\in S^c$. Suppose $sR^{\bf T}t$, to show $tR^{\bf T}s$. By supposition, $sR^ct$ or $s=t$. If $sR^ct$, then by a similar argument to Proposition~\ref{prop.sym}, we can show that $tR^cs$; if $s=t$, then $t=s$. Either case implies $tR^{\bf T}s$, as desired.
\end{proof}

As we mentioned in the footnote~\ref{footnote}, system $\SLCL+\KwEuc$ may not be complete with respect to the class of Euclidean frames. Despite this, we indeed have the following result, i.e. Theorem~\ref{thm.comp-b5}. For this, however, we cannot provide that the canonical relation $R^c$ in Definition~\ref{def.cm-k} is Euclidean, since it may be the case that $sR^ct$ and $sR^cu$ but $t=u$ (thus it does not hold that $tR^cu$). We call this kind of world a {\em non-Euclidean world w.r.t. $R^c$}. We need to handle these special worlds, to transform $\M^c$ into an Euclidean model. Note that the transformation need to keep the symmetry of $R^c$, and also preserve the truth-values of $\LEA$-formulas.

\begin{theorem}[Completeness of $\SLCLBEuc$ over $\mathcal{B}5$-frames]\label{thm.comp-b5}
$\SLCLBEuc$ is sound and strongly complete with respect to the class of $\mathcal{B}5$-frames.
\end{theorem}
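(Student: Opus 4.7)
The plan is to build the canonical model $\M^c$ for $\SLCLBEuc$ as in Definition~\ref{def.cm-k} and then massage it into a $\mathcal{B}5$-model without altering the truth-values of $\LEA$-formulas, in the same spirit as Theorems~\ref{thm.comp-d}, \ref{thm.comp-t}, \ref{thm.comp-4}, and~\ref{thm.comp-tb}. By Proposition~\ref{prop.sym}, the canonical relation $R^c$ is already symmetric, since that argument only invokes Axiom $\KwB$, which is present in $\SLCLBEuc$. The obstruction to being Euclidean is precisely the non-Euclidean situation flagged just before the statement: $sR^ct$ and $sR^cu$ with $t=u$, where one would need $tR^ct$ but $R^c$ is irreflexive by construction. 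To repair this I would set
$$R^{\bf B5}=R^c\cup\{(t,t)\mid sR^ct \text{ for some } s\in S^c\}$$
and work with $\M^{\bf B5}=\lr{S^c,R^{\bf B5},V^c}$.

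Symmetry of $R^{\bf B5}$ is immediate: diagonal pairs are symmetric and $R^c$ already is. For the Euclidean property, suppose $sR^{\bf B5}t$ and $sR^{\bf B5}u$. When $s=t$ or $s=u$ the goal $tR^{\bf B5}u$ reduces to symmetry of $R^{\bf B5}$ or to a pre-existing loop, and the case $t=u$ is exactly what the augmentation covers. The only genuine case is $sR^ct$ and $sR^cu$ with $t\neq u$, where I claim $tR^cu$ outright. Assume $\circ\phi\land\phi\in t$ and, for contradiction, $\neg\phi\in u$. By symmetry $tR^cs$, so $\phi\in s$. If $\circ\phi\in s$ then $sR^cu$ forces $\phi\in u$, contradicting $\neg\phi\in u$; hence $\neg\circ\phi\in s$. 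Substituting $\neg\phi$ for $p$ in Axiom $\KwEuc$ gives $\neg\circ\phi\to\circ(\circ\phi\to\neg\phi)$, so $\circ(\circ\phi\to\neg\phi)\in s$, while $\circ\phi\to\neg\phi$ itself lies in $s$ trivially from $\neg\circ\phi\in s$. Applying the definition of $R^c$ to $sR^ct$ then yields $\circ\phi\to\neg\phi\in t$, which together with $\circ\phi\in t$ forces $\neg\phi\in t$, contradicting $\phi\in t$.

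The last step is to verify that $\M^c$ and $\M^{\bf B5}$ satisfy the same $\LEA$-formulas at every world, by induction on $\phi$. The only nontrivial case is $\circ\phi$: if $s$ has no $R^c$-predecessor, then $R^{\bf B5}(s)=R^c(s)$ and there is nothing to check; if $s$ does have one, the only new successor of $s$ is $s$ itself, and its contribution to the semantic clause is the vacuous implication ``$\M^{\bf B5},s\vDash\phi$ implies $\M^{\bf B5},s\vDash\phi$'', so the truth-value of $\circ\phi$ is unchanged, exactly as in the proofs of Theorems~\ref{thm.comp-t} and~\ref{thm.comp-s4}. Combined with the Truth Lemma (Lemma~\ref{lem.truthlem}) for $\SLCLBEuc$ and the standard Lindenbaum construction, this delivers strong completeness. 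The main obstacle I expect is the Euclidean step in the case $t\neq u$: it requires feeding information backwards to $s$ via $\KwB$-symmetry and then forwards again to $t$ via a substitution instance of $\KwEuc$; the bookkeeping around the added self-loops is routine, but it is noteworthy that the irreflexive design of $R^c$ is exactly what forces this augmentation rather than a purely axiom-driven fix.
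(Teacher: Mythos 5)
Your proposal is correct and follows essentially the same route as the paper: the same augmentation $R^c\cup\{(t,t)\mid sR^ct\text{ for some }s\}$, the same key step for the case $t\neq u$ (pulling $\phi$ back to $s$ via $\KwB$-symmetry, then forcing $\circ\phi\in s$ via the $\neg\phi/p$ instance of $\KwEuc$ and pushing forward to $u$), and the same self-loop-is-vacuous argument for preservation of $\LEA$-truth-values. The only cosmetic difference is that you run the $t\neq u$ case by contradiction from $\neg\phi\in u$ where the paper argues directly; just remember to also record, as the paper does in one line, that soundness (validity of $\KwEuc$ on $\mathcal{B}5$-frames) follows from an argument like Proposition~\ref{prop.new}.\ref{prop.newtwo}.
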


\begin{proof}
The validity of Axiom $\KwEuc$ can be derived from a similar argument to Proposition~\ref{prop.new}.\ref{prop.newtwo}, and the soundness of $\SLCLB$ follows from Theorem~\ref{thm.comp-kb}.

Define $\M^c$ as in Definition~\ref{def.cm-k} w.r.t. $\SLCLBEuc$. Define $\M^{\bf Euc}=\lr{S^c,R^{\bf Euc},V^c}$ as $\M^c$, except that $R^{\bf Euc}=R^c\cup\{(t,t)\mid sR^ct\text{ for some }s\in S^c\}$.

We first show $R^{\bf Euc}$ is Euclidean. Assume for any $s,t,u\in S^c$ that $sR^{\bf Euc}t$ and $sR^{\bf Euc}u$, to show that $tR^{\bf Euc}u$. According to the definition of $R^{\bf Euc}$, we consider the following cases.

\begin{itemize}
\item $sR^ct$ and $sR^cu$. Consider two subcases.
    \begin{itemize}
    \item $t=u$. It is obvious that $tR^{\bf Euc}u$.

    \item $t\neq u$. Suppose for any $\phi$ that $\circ\phi\land\phi\in t$, we need to show $\phi\in u$, from which we have $tR^cu$. Since $R^c$ is symmetric (Proposition~\ref{prop.sym}), from $sR^ct$ it follows that $tR^cs$. Then the supposition implies $\phi\in s$. Moreover, we have $\circ\phi\in s$, for otherwise, $\neg\circ\phi\in s$, by Axiom $\KwEuc$ and Rule $\SUB$, $\circ(\circ\phi\to\neg\phi)\in s$; we also have $\circ\phi\to\neg\phi\in s$, then from $sR^ct$ follows $\circ\phi\to\neg\phi\in t$, contrary to the supposition. We have thus shown $\phi\land\circ\phi\in s$. This entails $\phi\in u$ due to $sR^cu$. Then $tR^cu$, and thus $tR^{\bf Euc}u$.
    \end{itemize}
\item $sR^ct$, and $s'R^cs$ and $s'R^cu$ and $s=u$ for some $s'\in S^c$. Then $uR^ct$. Since $R^c$ is symmetric (Proposition~\ref{prop.sym}), we obtain $tR^cu$, thus $tR^{\bf Euc}u$.
\item $sR^cu$, and $s'R^cs$ and $s'R^ct$ and $s=t$ for some $s'\in S^c$. Then it is clear that $tR^cu$, thus $tR^{\bf Euc}u$.
\item $s'R^cs$ and $s'R^ct$ and $s=t$ and $s''R^cs$ and $s''R^cu$ and $s=u$ for some $s',s''\in S^c$. Then $s'R^ct, s'R^cu$ and $t=u$, thus $tR^{\bf Euc}u$.
\end{itemize}

We then show $R^{\bf Euc}$ is symmetric. Suppose for any $s,t\in S^c$ that $sR^{\bf Euc}t$, to show $tR^{\bf Euc}s$. By supposition, we have either $sR^ct$, or $s'R^cs,s'R^ct,s=t$ for some $s'\in S^c$. If $sR^ct$, by Proposition~\ref{prop.sym}, we have $tR^cs$; if $s'R^cs,s'R^ct,s=t$ for some $s'\in S^c$, then $s'R^ct,s'R^cs,t=s$. Either case implies $tR^{\bf Euc}s$.

It suffices to show: for any $s\in S^c$, for any $\phi\in \LEA$, we have $\M^c,s\vDash\phi\Longleftrightarrow\M^{\bf Euc},s\vDash\phi$. That is to say, the truth-values of $\LEA$-formulas are preserved under the transformation. The proof proceeds with induction on $\phi$. The nontrivial case is $\circ\phi$.\weg{ If $s$ is {\em not} a non-Euclidean world w.r.t. $R^c$, then the claim is clear.} If $s$ is a non-Euclidean world w.r.t. $R^c$, then by semantics, $\M^c,s\vDash\circ\phi$ is equivalent to ($\M^c,s\vDash\phi$ implies for all $t\in S^c$, if $sR^ct$, then $\M^c,t\vDash\phi$), which is equivalent to ($\M^c,s\vDash\phi$ implies $\M^c,s\vDash\phi$ and for all $t\in S^c$, if $sR^ct$, then $\M^c,t\vDash\phi$). By induction hypothesis, this is equivalent to ($\M^{\bf Euc},s\vDash\phi$ implies $\M^{\bf Euc},s\vDash\phi$ and for all $t\in S^c$, if $sR^{c}t$, then $\M^{\bf Euc}, t\vDash\phi$). Since $sR^{\bf Euc}s$, this is equivalent to ($\M^{\bf Euc},s\vDash\phi$ implies for all $t\in S^c$, if $sR^{\bf Euc}t$, then $\M^{\bf Euc}, t\vDash\phi$), which means exactly $\M^{\bf Euc},s\vDash\circ\phi$. Otherwise, i.e., if $s$ is {\em not} a non-Euclidean world w.r.t. $R^c$, then it is obvious that $\M^c,s\vDash\circ\phi\Longleftrightarrow\M^{\bf Euc},s\vDash\circ\phi$.
\weg{Define $\M^{\bf Euc}=\lr{S^c,R^{\bf Euc},V^c}$, where $S^c$ is the set of maximal $\SLCLB$-consistent sets, $R^{\bf Euc}$ is the Euclidean closure of $R^c$, i.e. $R^{\bf Euc}=R^c\cup\{(t,u)\mid s(R^c)^nt,sR^cu \text{ for some }s\in S^c\text{ and some }n\in\mathbb{N}\}$.\weg{ we have
\[\begin{array}{ll}
& \M^c,s\vDash\circ\phi\\
\Longleftrightarrow& \M^c,s\vDash\phi\text{ implies for all }t\in S^c,\text{ if }sR^ct,\text{ then }\M^c,t\vDash\phi\\
\Longleftrightarrow&
\end{array}\]}

We first show that $R^{\bf Euc}$ is indeed Euclidean: assume that $sR^{\bf Euc}t$ and $sR^{\bf Euc}u$, to show that $tR^{\bf Euc}u$. According to the definition of $R^{\bf Euc}$, we consider four cases.
\begin{itemize}
\item $sR^ct$ and $sR^cu$. Suppose for any $\phi$ that $\circ\phi\land\phi\in t$, we need to show $\phi\in u$, from which we have $tR^cu$. Since $R^c$ is symmetric (Proposition~\ref{prop.sym}), from $sR^ct$ it follows that $tR^cs$. Then the supposition implies $\phi\in s$. Moreover, we have $\circ\phi\in s$, for otherwise, $\neg\circ\phi\in s$, by Axiom $\KwEuc$ and Rule $\SUB$, $\circ(\circ\phi\to\neg\phi)\in s$; we also have $\circ\phi\to\neg\phi\in s$, then from $sR^ct$ follows $\circ\phi\to\neg\phi\in t$, contrary to the supposition. We have thus shown $\phi\land\circ\phi\in s$. This entails $\phi\in u$ due to $sR^cu$.
\item $sR^ct$ and $s'(R^c)^ns$ and $s'R^cu$ for some $s'\in S^c$ and some $n\in\mathbb{N}$. Then $s'(R^c)^{n+1}t$ and $s'R^cu$. Then $tR^{\bf Euc}u$.
\item $sR^cu$ and $s'(R^c)^ns$ and $s'R^ct$ for some $s'\in S^c$ and some $n\in\mathbb{N}$.
\end{itemize}

And also, $R^{\bf Euc}$ is symmetric: given any $s,t\in S^c$ such that $sR^{\bf Euc}t$, then either $sR^ct$, or there is a $u\in S^c$ such that $uR^cs$ and $uR^ct$. If the former is the case, then by Proposition~\ref{prop.sym}, $tR^cs$; if the latter is the case, then clearly $uR^ct$ and $uR^cs$. Either case implies $tR^{\bf Euc}s$.

It suffices to show that the truth lemma holds for $\SLCLBEuc$ under $\M^{\bf Euc}$. That is to show, for any $\phi\in\LEA$, $$\M^{\bf Euc},s\vDash\phi\Longleftrightarrow\phi\in s.$$
The nontrivial case is $\circ\phi$, i.e. to show $\M^{\bf Euc},s\vDash\circ\phi\Longleftrightarrow\circ\phi\in s$. The direction from left to right is similar to `$\Longrightarrow$' in Lemma~\ref{lem.truthlem}. For the other direction, suppose, for a contradiction, that $\circ\phi\in s$ but $\M^{\bf Euc},s\nvDash\circ\phi$. Then $s\vDash\phi$ and there is a $t\in S^c$ such that $sR^{\bf Euc}t$ and $t\nvDash \phi$. By induction hypothesis, $\phi\in s$ and $\phi\notin t$. By $sR^{\bf Euc}t$, we consider two cases. If $sR^ct$, then from $\circ\phi\land\phi\in s$, we have $\phi\in t$, contradiction. If there exists $u\in S^c$ such that $uR^cs$ and $uR^ct$, then since $R^c$ is symmetric (Proposition \ref{prop.sym}), $sR^cu$. From this follows that $\phi\in u$. Thus $\circ\phi\notin u$ (i.e. $\neg\circ\phi\in u$), for otherwise from $uR^ct$, it follows that $\phi\in t$, contradiction.  Using Axiom $\KwEuc$, we obtain $
\circ(\circ\phi\to\neg\phi)\in u$ and also $\circ\phi\to\neg\phi\in u$, then from $uR^cs$, we get $\circ\phi\to\neg\phi\in s$, contrary to $\circ\phi\land\phi\in s$. Either case implies a contradiction, hence $\M^{\bf Euc},s\vDash\circ\phi$, as desired.}
\end{proof}

\begin{theorem}[Completeness of $\SLCLBEuc$ over $\mathcal{S}5$-frames]\label{thm.comp-s5}
$\SLCLBEuc$ is sound and strongly complete with respect to the class of $\mathcal{S}5$-frames.
\end{theorem}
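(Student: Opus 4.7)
The plan is to iterate on Theorem~\ref{thm.comp-b5} by taking the reflexive closure of its canonical relation, mirroring the way Theorem~\ref{thm.comp-tb} builds on Theorem~\ref{thm.comp-kb}. Since an $\mathcal{S}5$-relation is precisely a reflexive $\mathcal{B}5$-relation (equivalently, a reflexive Euclidean relation, since reflexive $+$ Euclidean entails symmetric and transitive), the natural construction is as follows. Let $\M^{\bf Euc}=\lr{S^c,R^{\bf Euc},V^c}$ be the model from the proof of Theorem~\ref{thm.comp-b5}, where $S^c$ is the set of maximal $\SLCLBEuc$-consistent sets. Define $\M^{\bf S5}=\lr{S^c,R^{\bf S5},V^c}$ as $\M^{\bf Euc}$, except that $R^{\bf S5}$ is the reflexive closure of $R^{\bf Euc}$, namely $R^{\bf S5}=R^{\bf Euc}\cup\{(s,s)\mid s\in S^c\}$.

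First I would verify that $R^{\bf S5}$ is an $\mathcal{S}5$-relation. Reflexivity is by construction. For symmetry, if $sR^{\bf S5}t$ with $s\neq t$, then $sR^{\bf Euc}t$, so by symmetry of $R^{\bf Euc}$ (established in Theorem~\ref{thm.comp-b5}) we get $tR^{\bf Euc}s$, hence $tR^{\bf S5}s$; the diagonal case is trivial. For the Euclidean property, given $sR^{\bf S5}t$ and $sR^{\bf S5}u$, a short case split on whether either pair is diagonal reduces to the Euclideaness and symmetry of $R^{\bf Euc}$: if both pairs lie in $R^{\bf Euc}$ then $tR^{\bf Euc}u$ directly; if $s=t$ then $tR^{\bf S5}u$ is immediate; if $s=u$ then symmetry of $R^{\bf Euc}$ yields $tR^{\bf Euc}s=tR^{\bf Euc}u$; and the doubly-diagonal case is trivial. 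Since reflexive plus Euclidean entails transitive, $R^{\bf S5}$ is an equivalence relation.

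Next, I would show by induction on $\phi\in\LEA$ that for every $s\in S^c$, $\M^{\bf Euc},s\vDash\phi\Longleftrightarrow\M^{\bf S5},s\vDash\phi$. Only the case $\circ\phi$ is non-trivial, and it goes through by the same telescoping chain of equivalences used in Theorem~\ref{thm.comp-t}: the self-loop $sR^{\bf S5}s$ absorbs the implicit conjunct ``$\M,s\vDash\phi$'' in the semantics of $\circ\phi$, so the clauses for $R^{\bf Euc}$ and for $R^{\bf S5}$ coincide. Combining this preservation with the analogous preservation already established in Theorem~\ref{thm.comp-b5} and with the Truth Lemma (Lemma~\ref{lem.truthlem}) for $\M^c$, any $\SLCLBEuc$-consistent set $\Gamma$ extends via Lindenbaum's Lemma to some $s\in S^c$ with $\M^{\bf S5},s\vDash\Gamma$, which yields strong completeness over the class of $\mathcal{S}5$-frames.

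The only point requiring genuine care is the case analysis for the Euclidean property of $R^{\bf S5}$, since adding diagonal pairs could in principle disrupt Euclideaness; once one checks that symmetry of $R^{\bf Euc}$ bridges the mixed cases, everything else reduces to machinery already developed in Theorems~\ref{thm.comp-t} and~\ref{thm.comp-b5}, so no new obstacle is expected.
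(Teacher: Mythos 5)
Your construction and the paper's produce the \emph{same} model: since the paper's $R^{\bf Euc}$ is $R^c$ together with loops at worlds having an $R^c$-predecessor, your $R^{\bf S5}=R^{\bf Euc}\cup\{(s,s)\mid s\in S^c\}$ coincides with the paper's $R^{\bf T}=R^c\cup\{(s,s)\mid s\in S^c\}$, so both proofs are building the reflexive closure of the canonical relation for $\SLCLBEuc$. The difference is organizational. The paper starts from Theorem~\ref{thm.comp-tb}, which already gives reflexivity, symmetry and truth-preservation for $\M^{\bf T}$, and then verifies Euclideanness by a case split on $s=t$, $s=u$, $t=u$, falling back on the syntactic argument (Axiom $\KwEuc$ plus symmetry of $R^c$) from the first item of Theorem~\ref{thm.comp-b5} in the generic case. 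You instead start from Theorem~\ref{thm.comp-b5}, inherit Euclideanness and symmetry of $R^{\bf Euc}$ as already-proved relational facts, and pay for this with one additional truth-preservation step from $\M^{\bf Euc}$ to $\M^{\bf S5}$ (which indeed goes through by the same telescoping argument as in Theorem~\ref{thm.comp-t}, since $R^{\bf S5}(s)=R^{\bf Euc}(s)\cup\{s\}$). Your case analysis for Euclideanness of $R^{\bf S5}$ is exhaustive and correct, and your route has the mild advantage of never re-entering the maximal-consistent-set reasoning; the paper's route has the mild advantage of needing only one closure step on top of $\M^c$. Both are sound; no gap.
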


\begin{proof}
Define $\M^{\bf T}$ as in the proof of Theorem~\ref{thm.comp-t} w.r.t. $\SLCLBEuc$. By Theorem~\ref{thm.comp-tb}, we need only show that $R^{\bf T}$ is Euclidean.

Suppose for any $s,t,u\in S^c$ such that $sR^{\bf T}t$ and $sR^{\bf T}u$, to show $tR^{\bf T}u$. If $s=t$, obviously $tR^{\bf T}u$. If $s=u$, then $uR^{\bf T}t$. Since $R^{\bf T}$ is symmetric (Theorem~\ref{thm.comp-tb}), we have $tR^{\bf T}u$. If $t=u$, by the definition of $R^{\bf T}$, we also have $tR^{\bf T}u$. So we only need to consider the case where $s\neq t$ and $s\neq u$ and $t\neq u$. Then $sR^ct$ and $sR^cu$. Analogous to the corresponding part of the first item in Theorem~\ref{thm.comp-b5}, we can infer $tR^cu$, thus $tR^{\bf T}u$, as desired.
\weg{Assume, for a contradiction, that there exists $\phi$ such that $\circ\phi\land\phi\in t$ but $\phi\notin u$. Since $R^c$ is symmetric (Proposition \ref{prop.sym}) and $sR^ct$, we have $tR^cs$, then $\phi\in s$ due to $\circ\phi\land\phi\in t$. We also have $\neg\circ\phi\in s$, for otherwise $\circ\phi\in s$, then $\circ\phi\land\phi\in s$, from which and $sR^cu$ we have $\phi\in u$, contrary to $\phi\notin u$. From $\neg\circ\phi\in s$, using Axiom $\KwEuc$ and Rule $\SUB$, we obtain $\circ(\circ\phi\to\neg\phi)\in s$. Besides, we can show $\circ\phi\to\neg\phi\in s$, thus $\circ(\circ\phi\to\neg\phi)\land(\circ\phi\to\neg\phi)\in s$. Thanks to $sR^ct$, we get $\circ\phi\to\neg\phi\in t$, contrary to the assumption. Therefore, $tR^cu$, thus $tR^{\bf T}u$. This completes the proof.}
\end{proof}

\weg{We also have
\begin{proposition}
For any $s,t,u\in S^c$ with $tR^cs$. If $sR^ct$ and $sR^cu$, then $tR^cu$.
\end{proposition}}

\section{Comparison with the literature}\label{sec.comparison}

Various axiomatizations for the logic $\LEA$ have been proposed in the literature. Inspired by \cite{DBLP:journals/ndjfl/Kuhn95}, a function $D$ is defined in \cite{Marcos:2005}, as $D(s)=\{\phi\mid\circ(\phi\vee\psi)\in s \text{ for every }\psi\}$ and then $sR^ct$ holds just in case $D(s)\subseteq t$. The function is simplified as $D(s)=\{\phi\mid \circ\phi\land\phi\in s\}$ in \cite{Steinsvold:2008}. In \cite[Prop.~4.1]{Kushida:2010}, a proof system $GS_0$ is established and shown to be complete, by proving $GS_0$ is equivalent to $K_{EA}$ in \cite{Marcos:2005}.

The canonical relation in \cite{Marcos:2005} may not apply to some other frame classes, just as the canonical relation defined in \cite{DBLP:journals/ndjfl/Kuhn95} in the setting of noncontingency logic (cf.~\cite[p.~101]{Fanetal:2015}). We here compare our method to that proposed in \cite{Steinsvold:2008} more detailedly, where, to show the completeness of $B_K$ (an equivalent axiomatization of $\SLEA$) and its extensions, the canonical model $\M^c=\lr{S^c,R^c,V^c}$ is defined as follows:
\begin{itemize}
\item $S^c=\{s\mid s\text{ is a maximal consistent set for }\SLEA\}$;
\item For any $s,t\in S^c$, $sR^ct$ iff for all $\phi\in\LEA$, if $\circ\phi\land\phi\in s$, then $\phi\in t$;
\item $V^c(p)=\{s\in S^c\mid p\in s\}$.
\end{itemize}

Under this definition, the completeness proof is simpler than ours. However, this definition has its defect: on one hand, the canonical relation, thus the canonical frame, is automatically provided to be reflexive; on the other hand, the semantics of $\LEA$ is defined on arbitrary frames, rather than on reflexive frames, and it is also shown in \cite[Prop.~3.5]{Steinsvold:2008} that $B_K$, equivalently, our $\SLEA$, is sound and complete with respect to the class of all frames. This means that there is a non-correspondence between syntax and semantics in the logic of essence and accident.

In comparison, by defining the canonical relation $R^c$ as in Definition~\ref{def.cm-k}, we do avoid the defect existing in \cite{Steinsvold:2008}, since our $\M^c$ is {\em not} reflexive.

Besides, our system $\SLCLTr$ is simpler than $\SLCL+\text{B4}$. And moreover, by using a translation from $\ML$ to $\LEA$ on {\em reflexive} frames, we obtain a complete axiomatization for symmetric frames, which, to our knowledge, is missing in the literature. We also have studied the model theory of $\LEA$, including the expressive power, frame definability, the proposed suitable notions of bisimulation and bisimulation contraction.

\section{Closing words}\label{sec.conclusion}

In this paper, we compared the relative expressivity of the logic of essence and accident $\LEA$ and modal logic, and study the frame definability of $\LEA$. We proposed a notion of bisimulation for $\LEA$, based on which we characterized this logic within modal logic and within first-order logic. We axiomatized the logic of essence and accident over various classes of frames, with a more suitable method than those in the literature. We found a method to compute certain axioms used to axiomatize this logic over special frames in the literature. As a side effect, we answered some open questions raised in \cite{Marcos:2005}.

As we claimed before, $\KwEuc$ may not be the desired axiom for characterizing $\LEA$ over Euclidean frames. We suspect that the validity $\KwEuc'$, i.e. $\neg p\to\circ(\circ\neg p \to p)$, on Euclidean frames is {\em not} provable in $\SLCL+\KwEuc$. Besides, We conjecture that $\SLCL+\KwEuc'$ is sound and strongly complete with respect to the class of Euclidean frames. We leave this for future work.
\bibliographystyle{plain}
\bibliography{biblio2014}

\appendix

\section{Omitted proofs}\label{appendix.proofs}

\jieproof{prop.new}
For item \ref{prop.newone}, suppose $\mathcal{F}\vDash\forall x\forall y\forall z(xRy\land yRz\land x\neq y\land y\neq z\land x\neq z\to xRz)$, to show $\mathcal{F}\vDash\circ p\land p\to\circ(\circ p\land p)$. Assume, for a contradiction, that $\mathcal{F}\nvDash\circ p\land p\to\circ(\circ p\land p)$, then there exists $\M=\lr{S,R,V}$ based on $\mathcal{F}$ and $s\in S$ such that $\M,s\vDash\circ p\land p$ but $s\nvDash\circ(\circ p\land p)$. It follows that there is a $t\in S$ such that $sRt$ and $t\nvDash\circ p\land p$, thus $s\neq t$. Since $s\vDash\circ p\land p$, we have $t\vDash p$, and then $t\nvDash\circ p$, hence there is a $u\in S$ such that $tRu$ and $u\nvDash p$, and furthermore $t\neq u$ and $s\neq u$. Now by supposition, we obtain $sRu$. However, from $s\vDash\circ p\land p$ and $sRu$, we get $u\vDash p$, contradiction.

Now suppose $\mathcal{F}\nvDash\forall x\forall y\forall z(xRy\land yRz\land x\neq y\land y\neq z\land x\neq z\to xRz)$, to show $\mathcal{F}\nvDash\circ p\land p\to\circ(\circ p\land p)$. By supposition, there are $s,t,u$ such that $sRt,tRu,s\neq t,t\neq u, s\neq u$ but {\em not} $sRu$. Define a valuation $V$ on $\mathcal{F}=\lr{S,R}$ such that $V(p)=\{s\}\cup\{s'\in S\mid sRs'\}$. By definition, $\lr{F,V},s\vDash p$ and for all $s'$ such that $sRs'$ we have $s'\vDash p$, thus $s\vDash \circ p\land p$. Since $sRt$, we have $t\vDash p$. Since $s\neq u$ and not $sRu$, we obtain $u\nvDash p.$ From $tRu$ and $t\vDash p$ but $u\nvDash p$, it follows that $t\nvDash\circ p$, and then $t\nvDash \circ p\land p$. From this, $sRt$ and $s\vDash \circ p\land p$, we get $s\nvDash\circ(\circ p\land p)$. Therefore $s\nvDash\circ p\land p\to\circ(\circ p\land p)$, and we can now conclude that $\mathcal{F}\nvDash\circ p\land p\to\circ(\circ p\land p)$.

\medskip

For item \ref{prop.newtwo}, suppose $\mathcal{F}\vDash\forall x\forall y\forall z(xRy\land xRz\land x\neq y\land x\neq z\land y\neq z\to yRz)$, to show $\mathcal{F}\vDash\neg\circ\neg p\to\circ(\circ\neg p\to p)$. Assume, for a contradiction, that $\mathcal{F}\nvDash\neg\circ\neg p\to\circ(\circ\neg p\to p)$, then there exists $\M=\lr{S,R,V}$ based on $\mathcal{F}$ such that $\M,s\vDash\neg\circ\neg p$ but $s\nvDash\circ(\circ\neg p\to p)$. It follows that there is a $t\in S$ such that $sRt$ and $t\nvDash\circ\neg p\to p$, i.e. $t\vDash\circ \neg p\land \neg p$. Since $s\vDash\neg\circ\neg p$, we have $s\vDash\neg p$ and there is a $u\in S$ such that $sRu$ and $u\vDash p$. It is not hard to check that $s\neq t$, $s\neq u$ and $t\neq u$. Now by supposition, we obtain $tRu$. However, from $t\vDash\circ \neg p\land \neg p$ and $tRu$, we get $u\vDash\neg p$, i.e. $u\nvDash p$, contradiction.

Now suppose $\mathcal{F}\nvDash\forall x\forall y\forall z(xRy\land xRz\land x\neq y\land x\neq z\land y\neq z\to yRz)$, to show $\mathcal{F}\nvDash\neg\circ\neg p\to\circ(\circ\neg p\to p)$. By supposition, there are $s,t,u$ such that $sRt,sRu,s\neq t,s\neq u,t\neq u$ but {\em not} $tRu$. Define a valuation $V$ on $\mathcal{F}=\lr{S,R}$ such that $V(p)=\{u\}$. By definition and $s\neq u, t\neq u$, we have $s\nvDash p$ and $t\nvDash p$. Given any $t'$ such that $tRt'$, due to not $tRu$, we have $t'\neq u$, thus $t'\nvDash p$, and hence $t\vDash\circ\neg p\land \neg p$, i.e. $t\nvDash\circ \neg p\to p$. From $sRu, s\vDash \neg p, u\nvDash\neg p$, it follows that $s\nvDash\circ\neg p$, then $s\vDash\neg\circ\neg p$ and $s\vDash\circ\neg p\to p$. Since $sRt$, we can show that $s\nvDash\circ(\circ\neg p\to p)$. Therefore $s\nvDash\neg\circ\neg p\to\circ(\circ\neg p\to p)$, and we can now conclude that $\mathcal{F}\nvDash\neg\circ\neg p\to\circ(\circ\neg p\to p)$.

\medskip

\jieproof{prop.bis-union}
Suppose that $Z$ and $Z'$ are both $\circ$-bisimulations on $\M$, to show $Z\cup Z'$ is also a $\circ$-bisimulation on $\M$. Obviously, $Z\cup Z'$ is nonempty, since $Z,Z'$ are both non-empty. We need to check that $Z\cup Z'$ satisfies the three conditions of $\circ$-bisimulation. For this, assume that $(s,s')\in Z\cup Z'$. Then $sZs'$ or $sZ's'$.

(Inv):  If $sZs'$, then as $Z$ is a $\circ$-bisimulation, we have: given any $p\in\BP$, $s\in V(p)$ iff $s'\in V(p)$; if $sZ's'$, then as $Z'$ is a $\circ$-bisimulation, we also have: given any $p\in\BP$, $s\in V(p)$ iff $s'\in V(p)$. In both case we have that given any $p\in\BP$, $s\in V(p)$ iff $s'\in V(p)$.

($\circ$-Forth): Suppose that $sRt$ and $(s,t)\notin Z\cup Z'$, then $(s,t)\notin Z$ and $(s,t)\notin Z'$. If $sZs'$, then since $Z$ is a $\circ$-bisimulation on $\M$, there exists $t'$ such that $s'Rt'$ and $(t,t')\in Z$, and hence $(t,t')\in Z\cup Z'$; if $sZ's'$, then since $Z'$ is a $\circ$-bisimulation on $\M$, there exists $t'$ such that $s'Rt'$ and $(t,t')\in Z'$, and hence also $(t,t')\in Z\cup Z'$. Therefore in both cases, there exists $t'$ such that $s'Rt'$ and $(t,t')\in Z\cup Z'$.

($\circ$-Back): The proof is similar to that of ($\circ$-Forth).

\medskip

\jieproof{prop.max-bis}
We need only show that $\kwbis$ satisfies the three properties of an equivalence relation.

Reflexivity: Given any model $\M=\lr{S,R,V}$ and $s\in S$, to show that $(\M,s)\kwbis(\M,s)$. For this, define $Z=\{(w,w)\mid w\in S\}$. First, $Z$ is nonempty, as $sZs$. We need only show that $Z$ satisfies the three conditions of $\circ$-bisimulation. Suppose that $wZw$.

It is obvious that $w$ and $w$ satisfy the same propositional variables, thus (Inv) holds; suppose that $wRt$ and $(w,t)\notin Z$ for some $t\in S$, then obviously, there exists $t'=t$ such that $wRt'$ and $tZt'$, thus ($\circ$-Forth) holds; the proof of ($\circ$-Back) is analogous.

\medskip

Symmetry: Given any models $\M=\lr{S,R,V}$ and $\M'=\lr{S',R',V'}$ and $s\in S$ and $s'\in S'$, assume that $(\M,s)\kwbis (\M',s')$, to show that $(\M',s')\kwbis (\M,s)$. By assumption, we have that there exists $\circ$-bisimulation $Z$ with $sZs'$. Define $Z'=\{(w,w')\mid w\in S,~w'\in S',~ w'Zw\}\cup\{(w,t)\mid w,t\in S,~wZt\}\cup\{(w',t')\mid w',t'\in S',~w'Zt'\}.$ First, since $sZs'$, we have $(s',s)\in Z'$, thus $Z'$ is nonempty. We need only check that $Z'$ satisfies the three conditions of $\circ$-bisimulation. Suppose that $wZ'w'$.

By supposition, we have $w'Zw$. Using (Inv) of $Z$, we have that $w'$ and $w$ satisfy the same propositional variables, then of course $w$ and $w'$ satisfy the same propositional variables, thus (Inv) holds. For ($\circ$-Forth), suppose that $wRt$ and $(w,t)\notin Z'$ for some $t\in S$, then by definition of $Z'$, $(w,t)\notin Z$. Using ($\circ$-Back) of $Z$, we infer that there exists $t'\in S'$ such that $w'R't'$ and $t'Zt$, thus $tZ't'$. The proof of ($\circ$-Back) is similar, by using ($\circ$-Forth) of $Z$.

\medskip

Transitivity: Given any models $\M=\lr{S^{\M},R^{\M},V^{\M}}$,~$\N=\lr{S^{\N},R^{\N},V^{\N}}$,\\
$\mathcal{O}=\lr{S^{\mathcal{O}},R^{\mathcal{O}},V^{\mathcal{O}}}$ and $s\in S^{\M},~t\in S^{\N},~u\in S^{\mathcal{O}}$, assume that $(\M,s)\kwbis(\N,t)$ and $(\N,t)\kwbis(\mathcal{O},u)$, to show that $(\M,s)\kwbis(\mathcal{O},u)$. By assumption, we have that there exists $\circ$-bisimulation $Z_1$ on the disjoint union of $\M$ and $\N$ such that $sZ_1t$, and there exists $\circ$-bisimulation $Z_2$ on the disjoint union of $\N$ and $\mathcal{O}$ such that $tZ_2u$. We need to find a $\circ$-bisimulation $Z$ on the disjoint union of $\M$ and $\mathcal{O}$.

Define $Z=\{(x,z)\mid x\in S^{\M},~z\in S^{\mathcal{O}},\text{ there is a }y\in S^{\N}\text{ such that }xZ_1y,\\
yZ_2z\}\cup\{(x,x')\mid x,x'\in S^{\M},~xZ_1x'\}\cup\{(z,z')\mid z,z'\in S^{\mathcal{O}},~zZ_2z'\}\cup\{(x,x')\mid x,x'\in S^{\M},\text{ there are }y,y'\in S^{\N}\text{ such that }yZ_2y',~xZ_1y,~x'Z_1y'\}\cup\{(z,z')\mid z,z'\in S^{\mathcal{O}},\text{ there are }y,y'\in S^{\N}\text{ such that }yZ_1y',~yZ_2z,~y'Z_2z'\}.$ First, since $sZ_1t$ and $tZ_2u$, by the first part of the definition of $Z$, we have $sZu$, thus $Z$ is nonempty. We need only check that $Z$ satisfies the three conditions of $\circ$-bisimulation. Suppose that $xZz$. Then by the first part of the definition of $Z$, there is a $y\in S^{\N}$ such that $xZ_1y$ and $yZ_2z$.

(Inv): as $Z_1$ and $Z_2$ are both $\circ$-bisimulations, $x$ and $y$ satisfy the same propositional variables, and $y$ and $z$ satisfy the same propositional variables. Then $x$ and $z$ satisfy the same propositional variables.

($\circ$-Forth): suppose that $xR^{\M}x'$ and $(x,x')\notin Z$ for some $x'\in S^{\M}$, then by the second part of the definition of $Z$, we obtain $(x,x')\notin Z_1$. From this, $xZ_1y$ and ($\circ$-Forth) of $Z_1$, it follows that there exists $y'\in S^{\N}$ such that $yR^{\N}y'$ and $x'Z_1y'$. Using $xZ_1y,x'Z_1y',(x,x')\notin Z$ and the fourth part of the definition of $Z$, we get $(y,y')\notin Z_2$. From this, $yZ_2z$ and ($\circ$-Forth) of $Z_2$, it follows that there exists $z'\in S^{\mathcal{O}}$ such that $zR^{\mathcal{O}}z'$ and $y'Z_2z'$. Since $x'Z_1y'$ and $y'Z_2z'$, by the first part of the definition of $Z$, we obtain $x'Zz'$. We have shown that, there exists $z'\in S^{\mathcal{O}} $ such that $zR^{\mathcal{O}}z'$ and $x'Zz'$, as desired.

($\circ$-Back): the proof is similar to that of ($\circ$-Forth), but in this case we use the third and fifth parts of the definition of $Z$, rather than the second or fourth parts of the definition of $Z$.

\medskip

\jieproof{prop.circ-bis-con}
Define $Z=\{([w],w)\mid w\in S\}\weg{\cup\{(w,v)\mid w,v\in S, w\kwbis v\}}\cup\{([w],[v])\mid w,v\in S, w\kwbis v\}$. First, since $S$ is nonempty, $Z$ is nonempty. We need to show that $Z$ satisfies the three conditions of $\circ$-bisimulation, which entails $([\M],[s])\kwbis(\M,s)$. Assume that $[w]Zw$.

(Inv): by the definition of $[V]$.

($\circ$-Forth): suppose that $[w][R][v]$ and $([w],[v])\notin Z$, then by definition of $[R]$, there exist $w'\in[w]$ and $v'\in[v]$ such that $w'Rv'$. As $[w]=[w']$ and $[v]=[v']$, we get from the supposition that $([w'],[v'])\notin Z$. By definition of $Z$, we obtain $w'\not\kwbis v'$. Because $\kwbis$ is a $\circ$-bisimulation, from $w\kwbis w'$ it follows that there exists $u$ such that $wRu$ and $u\kwbis v'$, thus $[u]=[v']=[v]$. It is clear that $[u]Zu$, that is, $[v]Zu$.

($\circ$-Back): suppose that $wRv$ and $(w,v)\notin Z$. By definition of $[R]$, we have $[w][R][v]$. Obviously, $[v]Zv$.

\medskip

\jieproof{prop.s5-preserved}
Suppose that $\M=\lr{S,R,V}$ is an $\mathcal{S}5$-model, to show that $[\M]=\lr{[S],[R],[V]}$ is also an $\mathcal{S}5$-model. We need to show that $[R]$ is an equivalence relation, that is, $[R]$ satisfies the properties of reflexivity, symmetry and transitivity.
The nontrivial case is transitivity. For this, given any $[s],[t],[u]\in[S]$, assume that $[s][R][t]$ and $[t][R][u]$, we need only show that $[s][R][u]$.

By assumption and the definition of $[R]$, there exists $s'\in[s],t'\in[t]$ such that $s'Rt'$, and there exists $t''\in[t],u'\in[u]$ such that $t''Ru'$. We now consider two cases:
\begin{itemize}
\item $t''\kwbis u'$. In this case, we have $[t'']=[u]$, then $[t]=[u]$, thus $[s][R][u]$.
\item $t''\not\kwbis u'$. In this case, using $t'\kwbis t''$ (as $t'\in[t]$ and $t''\in[t]$) and the fact that $\kwbis$ is a $\circ$-bisimulation, we obtain that there exists $u''$ such that $t'Ru''$ and $u''\kwbis u'$, thus $u''\in[u']=[u]$. Moreover, From $s'Rt',t'Ru''$ and the transitivity of $R$, it follows that $s'Ru''$. We have thus shown that there exists $s'\in[s],u''\in[u]$ with $s'Ru''$, therefore $[s][R][u]$.
\end{itemize}
In both cases we have $[s][R][u]$, as desired.

\weg{\medskip

\jieproof{lem.truthlem}
By induction on $\phi$. The only nontrivial case is $\circ\phi$, that is to show, $\M^c,s\vDash\circ\phi\Longleftrightarrow\circ\phi\in s$.

`$\Longleftarrow$': Suppose towards contradiction that $\circ\phi\in s$ but $\M^c,s\nvDash\circ\phi$, then $s\vDash\phi$ but there is a $t\in S^c$ with $sR^ct$ and $t\nvDash\phi$. By the induction hypothesis, we have $\phi\in s$ but $\phi\notin t$. Thus $\circ\phi\land\phi\in s$. Since $sR^ct$, we obtain $\phi\in t$, contradiction.

`$\Longrightarrow$': Suppose $\circ\phi\notin s$, to show $\M^c,s\nvDash\circ\phi$. By the induction hypothesis, we need only show that $\phi\in s$ but there is a $t\in S^c$ with $sR^ct$ and $\neg\phi\in t$. First, $\phi\in s$ follows from the supposition $\circ\phi\notin s$, Axiom $\EquiKw$ and Rule $\SUB$. Besides, we show that the set $\{\psi\mid \circ\psi\land\psi\in s\}\cup\{\neg\phi\}$ is consistent.

The proof proceeds as follows: if the set is not consistent, then there exist $\psi_1,\cdots,\psi_n\in\{\psi\mid \circ\psi\land\psi\in s\}$\footnote{Note that Axiom $\KwTop$ provides the nonempty of the set $\{\psi\mid \circ\psi\land\psi\in s\}$.} such that $\vdash\psi_1\land\cdots\land\psi_n\to\phi$. Using Rule $\R$, we get $\vdash\circ(\psi_1\land\cdots\land\psi_n)\land(\psi_1\land\cdots\land\psi_n)\to\circ\phi$. From this and Prop.~\ref{prop.circ-conj} follows that $\vdash\circ\psi_1\land\cdots\land\circ\psi_n\land(\psi_1\land\cdots\land\psi_n)\to\circ\phi$. Since $\circ\psi_i\land\psi_i\in s$ for all $i\in[1,n]$, we have $\circ\phi\in s$, contrary to the supposition.

We have thus shown that $\{\psi\mid \circ\psi\land\psi\in s\}\cup\{\neg\phi\}$ is consistent. By Lindenbaum's Lemma, there is a $t\in S^c$ such that $\{\psi\mid \circ\psi\land\psi\in s\}\cup\{\neg\phi\}\subseteq t$, i.e. $sR^ct$ and $\neg\phi\in t$, as desired.}

\section{Preliminaries} \label{appendix.preliminaries}

\begin{definition}[Ultrafilter Extension]\label{def.ue} Let $\M=\lr{S,R,V}$ be a model. We say that $ue(\M)=\lr{Uf(S),R^{ue},V^{ue}}$ is the \emph{ultrafilter extension} of $\M$, if
\begin{itemize}
\item $Uf(S)=\{u\mid u \text{ is an ultrafilter over }S\}$, where an ultrafilter $u\subseteq \wp(S)$ satisfies the following properties:
      \begin{itemize}
      \item $S\in u$, $\emptyset\notin u$,
      \item $X,Y\in u$ implies $X\cap Y\in u$,
      \item $X\in u$ and $X\subseteq Z\subseteq S$ implies $Z\in u$
      \item For all $X\in\mathcal{P}(S)$, $X\in u$ iff $-X\notin u$ ($-X$ means the complement of $X$)
      \end{itemize}
\item For all $s,t\in Uf(S)$, $sR^{ue}t$ iff for all $X\subseteq S$, $X\in t$ implies $\lambda(X)\in s$, where $\lambda(X)=\{w\in S\mid\text{ there exists }v \text{ such that }wRv \text{ and }v\in X\}$
\item $V^{ue}(p)=\{u\in Uf(S)\mid V(p)\in u\}$.
\end{itemize}
\end{definition}

\begin{definition}[Principle Ultrafilter] Let $S$ be a nonempty set. Given any $s\in S$, the \emph{principle ultrafilter} $\pi_s$ generated by $s$ is defined as $\pi_s=\{X\subseteq S\mid s\in X\}$. It can be shown that every principle ultrafilter is an ultrafilter.
\end{definition}

\begin{proposition}\label{prop.equk}
Let $\M$ be a model and $ue(\M)$ be its ultrafilter extension. Then $ue(\M)$ is $\ML$-saturated and $(\M,s)\equk (ue(\M),\pi_s)$.
\end{proposition}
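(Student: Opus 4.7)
The plan is to centre everything on a Truth Lemma for the ultrafilter extension: for every $u\in Uf(S)$ and every $\phi\in\ML$,
\[
ue(\M),u\vDash\phi\iff \widehat{\phi}\in u,\qquad\text{where }\widehat{\phi}=\{t\in S\mid \M,t\vDash\phi\}.
\]
Granting this, the $\equk$-clause of the proposition is essentially free: $\M,s\vDash\phi$ iff $s\in\widehat\phi$ iff $\widehat\phi\in\pi_s$ iff $ue(\M),\pi_s\vDash\phi$. So the real work is (i) the Truth Lemma and (ii) $\ML$-saturation of $ue(\M)$.

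For the Truth Lemma I would proceed by induction on $\phi$. The atomic case is by definition of $V^{ue}$, and the Boolean cases use the defining properties of ultrafilters (intersection closure and the $X\in u\Leftrightarrow -X\notin u$ dichotomy). The only interesting step is $\Box\phi$. Using the fact that $\widehat{\Box\phi}=-\lambda(-\widehat{\phi})$, I would split into the two directions. For ``$\widehat{\Box\phi}\in u\Rightarrow ue(\M),u\vDash\Box\phi$'': given $uR^{ue}v$ and assuming $\widehat\phi\notin v$, we have $-\widehat\phi\in v$, hence $\lambda(-\widehat\phi)\in u$, contradicting $-\lambda(-\widehat\phi)\in u$; then invoke the induction hypothesis. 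For the converse, assuming $\widehat{\Box\phi}\notin u$, i.e.\ $\lambda(-\widehat\phi)\in u$, I would build an ultrafilter $v$ such that $uR^{ue}v$ and $-\widehat\phi\in v$, by applying the Ultrafilter Theorem to the collection $\{-\widehat\phi\}\cup\{X\subseteq S\mid -\lambda(-X)\in u\}$ after verifying it has the finite intersection property; then the induction hypothesis gives $v\nvDash\phi$, so $u\nvDash\Box\phi$.

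For $\ML$-saturation, fix $u\in Uf(S)$ and a set $\Sigma\subseteq\ML$ all of whose finite subsets are satisfied in some $R^{ue}$-successor of $u$. I need to produce a single $v\in Uf(S)$ with $uR^{ue}v$ and $v\vDash\Sigma$. By the Truth Lemma, it suffices to find $v$ containing every $\widehat\sigma$ (for $\sigma\in\Sigma$) and satisfying $X\in v\Rightarrow \lambda(X)\in u$. The candidate family is
\[
\mathcal{F}=\{\widehat\sigma\mid \sigma\in\Sigma\}\cup\{X\subseteq S\mid -\lambda(-X)\in u\}.
\]
The finite satisfiability hypothesis, together with the Truth Lemma applied to each approximating witness $v_{\Sigma_0}$, shows that for every finite $\Sigma_0\subseteq\Sigma$ the set $\widehat{\bigwedge\Sigma_0}$ lies in some $v_{\Sigma_0}$ with $uR^{ue}v_{\Sigma_0}$, from which a standard computation using the definition of $\lambda$ yields the finite intersection property for $\mathcal F$. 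Extending $\mathcal F$ to an ultrafilter $v$ via the Ultrafilter Theorem gives the required successor.

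The main obstacle is the ``existence of a bad successor'' half of the $\Box$ case of the Truth Lemma, together with its close relative in the saturation argument: both require checking that a particular family of subsets of $S$ has the finite intersection property, where the critical computation uses the interaction between $\lambda$ and intersection (namely that $\bigcap_i\lambda(X_i)$ need not equal $\lambda(\bigcap_i X_i)$, so one has to work with $-\lambda(-X)$ rather than $\lambda$). Once this bookkeeping is done cleanly, the Ultrafilter Theorem supplies the witness ultrafilter in both proofs and everything else is routine.
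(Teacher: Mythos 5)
Your proposal is correct and follows the standard route: the Truth Lemma $ue(\M),u\vDash\phi\Leftrightarrow\widehat\phi\in u$ with the $\Box$-case handled via $\widehat{\Box\phi}=-\lambda(-\widehat\phi)$ and a finite-intersection-property argument plus the Ultrafilter Theorem, and the same machinery reused for $\ML$-saturation. The paper itself states this proposition without proof, importing it from the literature, and the textbook proof it relies on is essentially the argument you give.
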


\begin{theorem}[van Benthem Characterization Theorem]\label{thm.vanbenthem}
A first-order formula is equivalent to an $\ML$-formula iff it is invariant under $\Box$-bisimulation.
\end{theorem}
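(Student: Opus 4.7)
The plan is to adapt the strategy used for Theorem~\ref{thm.delta-characterization}, but at the $\Box$-level and with $\omega$-saturated elementary extensions in place of ultrafilter extensions; the switch is forced because an arbitrary first-order formula $\alpha$ need not be preserved by $ue(\cdot)$, whereas it is automatically preserved under elementary extension. The easy direction, that every $\ML$-definable first-order formula is invariant under $\Box$-bisimulation, follows by the usual induction on modal formulas, with the inductive step for $\Box$ handled by Proposition~\ref{prop.Kbisimilar}.

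For the hard direction, suppose $\alpha(x)$ is a first-order formula invariant under $\Box$-bisimulation, and write $ST$ for the standard translation of $\ML$ into first-order logic. Define the modal consequence set $MOC(\alpha)=\{ST(\psi)\mid \psi\in\ML,~\alpha\vDash ST(\psi)\}$. If we can show $MOC(\alpha)\vDash\alpha$, then first-order compactness yields a finite $\Gamma\subseteq MOC(\alpha)$ with $\bigwedge\Gamma\vDash\alpha$; combined with $\alpha\vDash\bigwedge\Gamma$ (immediate from the definition of $MOC(\alpha)$), this exhibits $\alpha$ as logically equivalent to a conjunction of standard translations, hence to (the standard translation of) an $\ML$-formula.

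To prove $MOC(\alpha)\vDash\alpha$, assume $\M,s\vDash MOC(\alpha)$ and let $\Sigma=\{ST(\psi)\mid\psi\in\ML,~\M,s\vDash ST(\psi)\}$. A compactness argument identical in shape to the one in the proof of Theorem~\ref{thm.delta-characterization} shows that $\Sigma\cup\{\alpha\}$ is satisfiable; picking any model $(\N,t)$ of $\Sigma\cup\{\alpha\}$, we get $(\M,s)\equk(\N,t)$ and $(\N,t)\vDash\alpha$. I would then pass to $\omega$-saturated elementary extensions $(\M^{\ast},s)$ and $(\N^{\ast},t)$: elementarity preserves every first-order formula, in particular each $ST(\psi)$ and $\alpha$ itself, so $(\M^{\ast},s)\equk(\N^{\ast},t)$ and $(\N^{\ast},t)\vDash\alpha$.

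The main obstacle is the concluding step, which is the classical Hennessy-Milner result that on $\omega$-saturated models $\Box$-equivalence coincides with $\Box$-bisimilarity --- the first-order analog of Proposition~\ref{prop.delta-saturated}. Granting this, $(\M^{\ast},s)\Kbis(\N^{\ast},t)$; invariance of $\alpha$ under $\Box$-bisimulation then yields $(\M^{\ast},s)\vDash\alpha$, and elementarity propagates the truth of $\alpha$ back down to $(\M,s)$. I would quote this $\omega$-saturation fact from standard model theory (e.g.\ from~\cite{blackburnetal:2001}) rather than re-derive it from scratch.
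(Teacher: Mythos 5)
The paper does not actually prove this statement: it is quoted without proof as a classical result, the text of Section~\ref{sec.char} explicitly saying that the appendix collects ``definitions and results from e.g.~\cite{blackburnetal:2001} without proofs.'' So there is no in-paper proof to compare yours against. Your argument is the standard textbook proof of van Benthem's theorem --- the detour through $MOC(\alpha)$ and first-order compactness, passage to $\omega$-saturated elementary extensions, and the Hennessy--Milner property that on m-saturated models $\Box$-equivalence coincides with $\Box$-bisimilarity --- and it is correct as sketched; it is essentially the proof in the cited reference. Your observation about why ultrafilter extensions (used in the paper's proof of Theorem~\ref{thm.delta-characterization}) must be replaced by elementary extensions here --- namely that $ue(\cdot)$ preserves modal but not arbitrary first-order formulas --- is exactly the right point. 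The only ingredient you lean on without comment is the existence of $\omega$-saturated elementary extensions (and that $\omega$-saturation suffices for the Hennessy--Milner step even when $\BP$ is infinite), but quoting that from standard model theory is entirely reasonable here.
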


\end{document}